\def\blfootnote{\xdef\@thefnmark{}\@footnotetext}
\newcommand{\ST}{~|~}
\newcommand{\Nat}{\mathds{N}}
\newcommand{\Rat}{\mathds{Q}}
\newcommand{\Reals}{\mathds{R}}
\newcommand{\tuple}[1]{\langle #1  \rangle}
\DeclarePairedDelimiter{\floor}{\lfloor}{\rfloor}
\DeclarePairedDelimiter{\ceil}{\lceil}{\rceil}
\newcommand{\TDS}{\textsf{TDS}\xspace}
\newcommand{\TDSzo}{\textsf{TDS}$_{01}$\xspace}
\newcommand{\CTDS}{\textsf{CTDS}\xspace}
\newcommand{\GTDS}{\textsf{GTDS}\xspace}
\newcommand{\CGTDS}{\textsf{CGTDS}\xspace}
\newcommand{\TDSF}{\textsf{TDS$^F$}\xspace}
\newcommand{\CTDSF}{\textsf{CTDS$^F$}\xspace}
\newcommand{\GTDSF}{\textsf{GTDS$^F$}\xspace}
\newcommand{\CGTDSF}{\textsf{CGTDS$^F$}\xspace}
\newcommand{\PAM}{\textsf{PAM}\xspace}
\newcommand{\DSA}{\textsf{DSA}\xspace}
\newcommand{\DSAs}{\textsf{DSA}s\xspace}
\newcommand{\A}{{\cal A}}
\newcommand{\B}{{\cal B}}
\newcommand{\C}{{\cal C}}
\renewcommand{\P}{{\cal P}}
\newcommand{\Dz}{{\ensuremath{\mathsf 0}}\xspace}
\newcommand{\Do}{{\ensuremath{\mathsf 1}}\xspace}
\newcommand{\Dt}{{\ensuremath{\mathsf 2}}\xspace}
\newcommand{\Dm}[1]{\ensuremath{a_{#1}}}
\newcommand{\Paragraph}[1]{\subsection*{#1}}
\newtheorem*{exampleE}{Example}
\newcommand{\Skip}[1]{}
\author{Udi Boker}{Reichman University, \\
Herzliya, Israel}
{udiboker@runi.ac.il}{0000-0003-4322-8892}{}
\author{Thomas A. Henzinger}
{IST Austria,\\
Klosterneuburg, Austria}
{tah@ist.ac.at}
{0000-0002-2985-7724}
{}
\author{Jan Otop}
{University of Wroc\l{}aw,\\
	Wroc\l{}aw, Poland}
{jotop@cs.uni.wroc.pl}
{0000-0002-8804-8011}
{}
\authorrunning{U. Boker, T.\,A. Henzinger, and J. Otop}
\keywords{Algorithms, Automata, Discounted-sum automata, Discrete mathematics}
\title{The Target Discounted-Sum Problem}
\begin{document}

\maketitle

\begin{abstract}
The \emph{target discounted-sum} problem is the following: Given a rational discount factor $0<\lambda<1$ and three rational values $a,b$, and $t$, does there exist a finite or an infinite sequence $w \in \{a,b\}^*$ or $w \in \{a,b\}^\omega$, such that $\sum_{i=0}^{|w|} w(i)  \lambda^i$ equals $t$?

The problem turns out to relate to many fields of mathematics and computer science, and its decidability question is surprisingly hard to solve.

We solve the finite version of the problem, and show the hardness of the infinite version, linking it to various areas and open problems in mathematics and computer science: $\beta$-expansions, discounted-sum automata, piecewise affine maps, and generalizations of the Cantor set.
We provide some partial results to the infinite version, among which are solutions to its restriction to eventually-periodic sequences and to the cases that $\lambda \geq \frac{1}{2}$ or $\lambda=\frac{1}{n}$, for every $n\in\Nat$.

We use our results for solving some open problems on discounted-sum automata, among which are the exact-value, universality and inclusion problems for functional automata.
\blfootnote{* This paper refines and expands the LICS 2015 paper with the same title (DOI 10.1109/LICS.2015.74), and in particular fixes the statements of Theorems 23, 25, and 26.}
\end{abstract}

\section{Introduction}
Discounting the influence of future events takes place in many natural processes, such as temperature change, capacitor charge, and effective interest rate, for which reason it is a key paradigm in economics and it is widely studied in game theory, Markov decision processes, and automata theory ~\cite{Andersson06,BH11,BH14,CFW13,AFHMS05,DiscountingInSystems,functional,FiniteValued,DiscountedMarkov}. Yet, 
the decidability questions of basic problems with regard to these models are still open. For example, the universality and inclusion problems of discounted-sum automata (see Section~\ref{sec:DSA}).

It turns out that the following simple \emph{target discounted-sum problem} constitutes the main difficulty in many of these open problems.
It was first raised, to the best of our knowledge, by Jean-Fran\c{c}ois Raskin in the course of analyzing discounted-sum automata. 

\begin{definition}[\TDS, \TDSF]\label{def:TDS}
Given a rational \emph{discount factor} $0<\lambda<1$, a \emph{target} rational value $t$, and rational \emph{weights} $a$ and $b$, the \emph{target discounted-sum problem} is the question of whether there exists a finite, resp.\ an infinite, sequence (\emph{solution}) $w \in \{a,b\}^*$, resp.\ $w \in \{a,b\}^\omega$, such that $\sum_{i=0}^{|w|} w(i)  \lambda^i$ equals $t$.
For distinguishing between the question about a finite and an infinite sequence, we denote the former problem \TDSF and the latter \TDS.
\end{definition}

Despite its simple statement, resolving the decidability of \TDS appears to be challenging and relates to many open questions in mathematics and computer science.

This problem is a natural milestone for open problems that involve discounting, such as problems on discounted-sum automata \cite{BH14,CDH08,functional,FiniteValued}, discounted-sum two-player games \cite{BMR14,HR2014}, and multi-objective discounted-sum reachability \cite{CFW13}. In particular, \TDS reduces to the universality problem of discounted-sum automata over infinite words (Theorem~\ref{thm:TdsDsa}), whereas the exact-value problem of functional discounted-sum automata reduces to a generalized version of \TDSF (Theorem~\ref{thm:ExactValue}). Using our solution to \TDSF, we solve some of these open problems in Section~\ref{sec:DSA}. We further describe them at the end of this section.

Less intuitive is the connection between \TDS and piecewise affine maps (Section~\ref{sec:PAM}). Its reachability problem asks, given a piecewise affine map $f$, and points $s$ and $t$ in a space of some dimension $d$, whether there exists $n\in\Nat$, such that $f^n(s)=t$. The problem is known to be undecidable for 2 or more dimensions \cite{KCG94,Moo90,Moo91}, and it is open for one dimension \cite{AMP95,AG02,KPC08}. (The reachability problem of an affine map, which is not ``piecewise'', is known as the ``orbit problem'', and it is decidable for all dimensions over the rationals \cite{KL86,Sha79}. The orbit problem also relates to some decidable questions on linear recurrence sequences \cite{EPST03,OW14}.) We show that \TDS reduces to the reachability problem of one-dimensional piecewise affine maps (Theorem~\ref{thm:TdsPam}).

Another interesting connection is between \TDS and generalized Cantor sets (Section~\ref{sec:CS}). The standard Cantor set is obtained by starting with the interval $[0,1]$, and removing, at each iteration, the middle third of the remaining intervals. 
A natural generalization is to remove the middle $k$th (for example, the middle fifth) of the remaining intervals at each iteration \cite{Din01,Eid05,GR95}. While the membership question of a given number in the standard Cantor set is easily resolved, this is not the case with the general ``middle $k$th Cantor set''. The reason for the difference between removing the middle third and removing, for example, the middle fifth, lies in the fractal behavior of these removals: In the former case, each of the remaining intervals constitutes a third of the original interval, and should thus be multiplied by $3$ so as to view it as the original problem. In the latter case, each of the remaining intervals constitutes $\frac{2}{5}$ of the original interval, and should thus be multiplied by $\frac{5}{2}$ in order to view it as the original problem. This is closely related to number representation in an integral base, such as $3$, which is very simple, as opposed to representation in a nonintegral base, such as $\frac{5}{2}$, which is significantly more complicated (Section~\ref{sec:TdsAsBetaExp}). We show that the membership problem in the middle $k$th Cantor set reduces to \TDS (Theorem~\ref{thm:TdsCs}).

Analyzing \TDS, we show that it can be reduced to its restricted version, denoted by \TDSzo, in which the weights are fixed to be $0$ and $1$. The importance of this reduction is that it allows to view \TDS as a question within the well studied area of $\beta$-expansions ($\beta$-representations), which deals with the representation of numbers in a nonintegral base \cite{Ren57}. Namely, a \TDSzo instance with a discount factor $\lambda$ and a target value $t$ has a solution iff $t$ has a representation in base $\frac{1}{\lambda}$, using only the \Dz and \Do digits.

Unfortunately, though, the structure of $\beta$-expansions is still largely a mystery. Nikita Sidorov wrote \cite{Sid13}, for example: ``Usually the (greedy) expansions in bases like $\frac{3}{2}$ are considered virtually impossible to work with. For instance, if you expand $\frac{2}{5}$, say, in base $\frac{3}{2}$, then virtually nothing is known about this expansion.'' 

Nevertheless, analyzing \TDS through $\beta$-expansions leads to some partial results to the problem. An immediate corollary is the decidability for discount factors equal to or bigger than half. This is because every number has a representation in base $1<\beta \leq 2$, using only the \Dz and \Do digits \cite{Ren57}. Other straightforward results are the uniqueness of the solution for a discount factor smaller than half, when a solution exists, and the co-recursively-enumerability of \TDS.

A key tool in our analysis of $\beta$-representations is the notion of ``gaps'': One can explore the representation of a given number in base $\beta$ using the ``greedy exploration'' \cite{Ren57} -- going from left to right, and adding at each position the maximal possible digit. In this process, every step produces a ``remainder'', which should be treated in the next position. A ``gap'' is intuitively a ``normalized remainder''; It is equal, at the $n$th step of the exploration, to the multiplication of the remainder and $\beta^n$. By normalizing the remainders into gaps, the exploration process has a fractal nature, which gives the intuition to its close connection to general Cantor sets and to piecewise affine maps. It is also related to generalizations of Mahler's $\frac{3}{2}$ problem \cite{Mah68} and Collatz's problem \cite{Con72,Leh08}, though more loosely.

Analyzing the possible gaps in the exploration process allows us to solve, in PSPACE, the restriction of \TDS to eventually-periodic sequences. As a special case, we get decidability for every discount factor of the form $\frac{1}{n}$, where $n\in\Nat$. One might be tempted to conclude that another special case is a solution to finite sequences, namely to \TDSF. This is, however, not the case, as an instance of \TDSF cannot be reduced, in general, to a \TDS instance with only \Dz's and \Do's.

\Skip{As for non-eventually-periodic representations, we do not even know whether, for the specific base $\frac{5}{2}$, there exists a rational number with a non-eventually-periodic representation that only uses \Dz's and \Do's. We thus look into a somewhat opposite question -- given a non-eventually-periodic representation, does it represent a rational number? We show that in some cases, for example when the growth rate of blocks of identical digits is more than exponential, the answer is negative.
} 

\Skip{Looking into restricted versions of \TDS, by fixing some of its parameters, can only ease the problem when fixing the discount factor -- the cases that $\lambda\geq\frac{1}{2}$ and $\lambda=\frac{1}{n}$, where $n\in\Nat$, are shown to be decidable. Fixing the weights or fixing the target value leaves the problem as hard as the original one.}

For leveraging our result on eventually-periodic sequences into a solution to \TDSF and into new results about discounted-sum automata, we consider its following three natural extensions. The first is the generalization of \TDS to have arbitrarily many weights, denoted by \GTDS; the second is adding a parameter to the problem, constraining the sequence of weights by an $\omega$-regular expression, denoted by \CTDS; and the third is their combination, denoted by \CGTDS. The corresponding generalizations of \TDSF are denoted by \GTDSF, \CTDSF (getting a regular expression), and \CGTDSF.

Fortunately, we are able to generalize our results to all of these extended versions. 
The main challenge here is that, as opposed to \TDS, we no longer have the ``dichotomy'' property, stating that each discount factor either guarantees a solution or guarantees that the solution, if exists, is unique. The underlying reason is that when allowing enough digits, a number might not have, or may have many, and even infinitely many, different representations in a nonintegral base \cite{EJK94,GS01}. Nevertheless, we derive the generalized results using nondeterministic explorations, rather than greedy explorations, K\"onig's Lemma, and a few other observations.

Using our results, we solve some open problems on discounted-sum automata over finite words, among which are the exact-value, universality and inclusion problems for functional automata (Theorems~\ref{thm:ExactValue}, \ref{thm:FunctionalInclusion}.)

\section{Problem Restrictions and Extensions}\label{sec:Problem}
We analyze below the natural restrictions and extensions of \TDS (Definition~\ref{def:TDS}). Most interesting are the restriction to fixed weights of $0$ and $1$, which will be shown to be equivalent to the original problem, and the generalization to arbitrarily many weights, for which all the positive results will follow. 

\Paragraph{Restrictions}
One may wonder whether some restricted versions of \TDS are easier to solve. A natural approach in this direction is to fix one of the four \TDS parameters.

Fixing the discount factor may indeed ease the problem. We will show solutions to the cases where $\lambda \geq \frac{1}{2}$ (Theorem~\ref{thm:BiggerThanHalf}) and  $\lambda = \frac{1}{n}$, for every natural number $n$ (Theorem~\ref{thm:IntegralDenominator}).

Fixing the target value, on the other hand, cannot help: The general problem, with a discount factor $\lambda$, a target $t$, and weights $a$ and $b$, can be reduced to a restricted problem that only allows a constant target value $T$, by choosing new weights $a'=a\cdot\frac{T}{t}$ and $b'=b\cdot\frac{T}{t}$.

Fixing the weights also cannot ease the problem. We show below that even when the weights are restricted to be exactly $0$ and $1$, the problem remains exactly as hard as in the general case. This observation is the key for approaching \TDS via $\beta$-expansions (Section~\ref{sec:TdsAsBetaExp}). To fit into the $\beta$-expansion setting, we formally define the $0$-$1$ restriction of the problem to start the sequence of summations $\sum_{i=0}^\infty w(i)  \lambda^i$ with $\lambda^1$ rather than with $\lambda^0$, i.e., we put $w(0) = 0$.

\begin{definition}[\TDSzo]\label{def:TDSzo}
Given a rational \emph{discount factor} $0<\lambda<1$ and a \emph{target} rational value $t$, the \emph{$0$-$1$ target discounted-sum problem} (\TDSzo) is the question of whether there exists an infinite sequence (\emph{solution}) $w \in \{0,1\}^\omega$, such that $\sum_{i=1}^\infty w(i)  \lambda^i$ is equal to $t$.
\end{definition}

We show below that \TDSzo is exactly as hard as the general infinite \TDS.

\begin{restatable}{theorem}{TdsToTdszo}
\label{thm:TdsToTdszo}
\TDS reduces to \TDSzo. 
The reduction preserves eventual periodicity and non-eventual periodicity of the solution. 
\end{restatable}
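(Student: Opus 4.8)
The plan is to exhibit an explicit, bijective correspondence between solutions of a general \TDS instance and solutions of a suitably chosen \TDSzo instance, via an affine change of the weight alphabet followed by a single index shift to match the $w(0)=0$ convention.

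First I would normalize the alphabet. Given a \TDS instance $(\lambda,t,a,b)$, assume $a\neq b$; the case $a=b$ is degenerate, since every $w\in\{a,b\}^\omega$ then has value $\frac{a}{1-\lambda}$, so the instance is a trivial yes/no depending on whether $t=\frac{a}{1-\lambda}$, and can be mapped to a fixed trivially (un)satisfiable \TDSzo instance. Any $w\in\{a,b\}^\omega$ is in bijection with a sequence $u\in\{0,1\}^\omega$ via the letter renaming $a\mapsto 0$, $b\mapsto 1$, so that $w(i)=a+(b-a)u(i)$ for all $i$. Summing the geometric part,
$$\sum_{i=0}^\infty w(i)\lambda^i \;=\; a\sum_{i=0}^\infty \lambda^i \;+\; (b-a)\sum_{i=0}^\infty u(i)\lambda^i \;=\; \frac{a}{1-\lambda} \;+\; (b-a)\sum_{i=0}^\infty u(i)\lambda^i.$$
Hence $w$ is a solution of value $t$ iff $u$ satisfies $\sum_{i=0}^\infty u(i)\lambda^i = c$, where $c:=\frac{t-a/(1-\lambda)}{b-a}$ is a computable rational.

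Next I would absorb the index convention of \TDSzo, which forces $w(0)=0$ and starts the summation at $\lambda^1$. Given $u$, prepend a $0$ to obtain $u':=0\,u\in\{0,1\}^\omega$, so that $u'(0)=0$ and $u'(i)=u(i-1)$ for $i\geq 1$. Then
$$\sum_{i=1}^\infty u'(i)\lambda^i \;=\; \lambda\sum_{j=0}^\infty u(j)\lambda^j,$$
so I set the target of the \TDSzo instance to $t':=\lambda c$. The map $u\mapsto u'$ is a bijection between $\{0,1\}^\omega$ and the \TDSzo-admissible sequences (those with first letter $0$), and by the two displays $u$ witnesses value $c$ iff $u'$ witnesses value $t'$. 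Composing with the letter renaming yields a value-preserving bijection between solutions of $(\lambda,t,a,b)$ and solutions of $(\lambda,t')$, and since $t'$ is a computable rational the reduction is effective.

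Finally, for the periodicity claim, both steps of the correspondence preserve eventual periodicity in both directions: a letter-to-letter renaming clearly does, and prepending a single symbol changes only a finite prefix, hence preserves eventual periodicity and its negation alike (a sequence is eventually periodic iff some, equivalently any, of its suffixes is). I expect the only real care needed is the bookkeeping around the $w(0)=0$ and start-at-$\lambda^1$ convention together with the degenerate $a=b$ case; the mathematical core is just the affine substitution and the geometric-series identity.
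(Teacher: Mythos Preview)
Your proof is correct and follows essentially the same route as the paper: an affine normalization of the weights to $\{0,1\}$ (the paper splits this into ``subtract $a$'' then ``divide by $b-a$'', you do it in one step) followed by an index shift multiplying the target by $\lambda$, arriving at the same $t'=\frac{\lambda(t-\lambda t-a)}{(1-\lambda)(b-a)}$. You are slightly more careful than the paper in that you explicitly address the degenerate case $a=b$ and spell out why both transformations preserve (non-)eventual periodicity.
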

\begin{proof}
We make the reduction in three steps: 
\begin{enumerate}
\item We claim that every \TDS $\P$ with rational discount factor $0<\lambda<1$, target value $t$, and  weights $a$ and $b$,
is equivalent to the \TDS $\P'$ with discount factor $\lambda$, target $t'=t - \frac{a}{1-\lambda}$, and weights $a'=0$ and $b'=b-a$. Indeed, subtracting $a$ from every element in a discounted-sum sequence $w\in\{a,b\}^\omega$, provides a discounted-sum sequence $w'\in\{0,b-a\}^\omega$, such that 
$\sum_{i=0}^\infty w'(i)  \lambda^i = \sum_{i=0}^\infty w(i)\lambda^i - \sum_{i=0}^\infty a  \lambda^i = \sum_{i=0}^\infty w(i) \lambda^i - \frac{a}{1-\lambda}$.
\item We claim that the \TDS $\P'$ is equivalent to the \TDS $\P''$ with discount factor $\lambda$, target $t''= \frac{t'}{b'} = \frac{t-\lambda t -a}{(1-\lambda)(b-a)}$, and weights $a''=0$ and $b''=1$. Indeed, dividing every element in a discounted-sum sequence $w'\in\{0,b'\}^\omega$ by $b'$, provides a discounted-sum sequence $w''\in\{0,1\}^\omega$, such that 
$\sum_{i=0}^\infty w''(i)  \lambda^i = \frac{1}{b'} \sum_{i=0}^\infty w'(i)  \lambda^i$.
\item The \TDS $\P''$ already uses the weights $0$ and $1$. It is obviously equivalent to the \TDSzo with a target $t''' = \lambda t'' = \frac{\lambda(t-\lambda t -a)}{(1-\lambda)(b-a)}$, starting the summation with $\lambda^1$ rather than with $\lambda^0$.
\end{enumerate}
\end{proof}

Observe that \TDSzo is more general than the corresponding problem with respect to finite sequences, as
every finite sequence $w$ can be considered as the infinite sequence $w \Dz^{\omega}$. Yet, \TDSF is not subsumed by \TDS, and cannot be reduced to only have the $0$ and $1$ weights. (The first step in the proof of Theorem~\ref{thm:TdsToTdszo} only holds for infinite sequences.) We shall return to \TDSF at the end of the section.

\Paragraph{Extensions}
We consider two natural extensions of \TDS and \TDSF, as well as their combination. 

The first generalization allows for arbitrarily many weights:
\begin{definition}[\GTDS]\label{def:Gtds}
Given a rational \emph{discount factor} $0<\lambda<1$, a \emph{target} rational value $t$, and rational \emph{weights} $a_1, \ldots, a_k$, for $k\in\Nat$, the \emph{generalized target discounted-sum problem} (\GTDS) is the question of whether there exists an infinite sequence (\emph{solution}) $w \in \{a_1,\ldots, a_k\}^\omega$, such that $\sum_{i=0}^\infty w(i) \lambda^i$ is equal to $t$.
\end{definition}

The second extension adds an $\omega$-regular constraint on the allowed sequences. Such a constraint is particularly relevant for linking between \TDS and \TDSF, as well as in the scope of discounted-sum automata (Section~\ref{sec:DSA}). 

\begin{definition}[\CTDS]\label{def:TdsReg}
Given a rational \emph{discount factor} $0<\lambda<1$, a \emph{target} rational value $t$, rational \emph{weights} $a$ and $b$, and an $\omega$-regular expression $e$, the \emph{constrained target discounted-sum problem} (\CTDS) is the question of whether there exists  an infinite sequence (\emph{solution}) $w \in \{a,b\}^\omega$, such that $\sum_{i=0}^\infty w(i)  \lambda^i$ is equal to $t$ and $w$ belongs to the language of $e$.
\end{definition}

Note that \CTDS is a proper extension of \TDS. Indeed, a special variant of \CTDS with $e$ defined as $(a + b)^{\omega}$
is equivalent to \TDS.

One may then consider the combination of \GTDS and \CTDS, denoted by \CGTDS, allowing arbitrarily many weights, and imposing a regular constraint on the allowed sequences.

All of the above extensions also apply to \TDSF, denoted by \GTDSF, \CTDSF (getting a regular expression), and \CGTDSF, respectively.
By allowing arbitrarily many weights and a constraint, the finite version is subsumed by the infinite version of the problem, as formalized in the following theorem.

\begin{restatable}{theorem}{FiniteTdsToCgtds}
\CGTDSF reduces to \CGTDS.
\label{t:FiniteTdsToCgtds}
\end{restatable}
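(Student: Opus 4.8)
The plan is to realize every finite solution as an infinite one by padding it with infinitely many zeros, using a fresh weight $0$ and tightening the constraint so that only such padded words are admitted. Concretely, given a \CGTDSF instance with discount factor $\lambda$, target $t$, weights $a_1,\ldots,a_k$, and regular expression $e$ over $\{a_1,\ldots,a_k\}$, I would build the \CGTDS instance with the same $\lambda$ and the same $t$, with weight set $\{a_1,\ldots,a_k\}\cup\{0\}$, and with the $\omega$-regular constraint $e' = e\con 0^\omega$. This is a legal \CGTDS instance: augmenting the weight set with $0$ is permitted because \CGTDS allows arbitrarily many weights, and $e'$ is $\omega$-regular because $e$ is regular (a regular language concatenated with the single $\omega$-word $0^\omega$). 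Note that both extensions are used in an essential way here: the generalization to many weights supplies the padding symbol, while the constraint pins the infinite suffix to $0^\omega$. This is exactly why \CGTDSF is subsumed by the combined problem rather than by \TDS or \CTDS alone (cf.\ the remark following Theorem~\ref{thm:TdsToTdszo}).

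For correctness I would verify the two directions of the equivalence, both resting on the single observation that appending $0^\omega$ to a finite word leaves its discounted sum unchanged, since every appended position carries weight $0$ and thus contributes nothing to $\sum_i w(i)\lambda^i$. For the forward direction, if a finite $w\in L(e)$ witnesses the \CGTDSF instance, then $w\con 0^\omega\in L(e')$ has discounted sum $t$, hence witnesses the \CGTDS instance. For the backward direction, any infinite solution $w'$ of the \CGTDS instance lies in $L(e')=L(e)\con 0^\omega$, so it factors as $w'=u\con 0^\omega$ with $u\in L(e)$; the discounted sum of $w'$ then equals the finite discounted sum of $u$, which is $t$, so $u$ witnesses the \CGTDSF instance.

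Since the construction only adds one symbol to the weight set and replaces $e$ by $e\con 0^\omega$, it is computable in (essentially linear) time, yielding the desired reduction. I do not expect a genuine technical obstacle; the only point requiring care is conceptual rather than computational, namely ensuring that the constraint really forces the infinite tail to be all-zero, so that the infinite solutions correspond exactly to the padded finite solutions. If one instead tried to target \TDS or \CTDS, this padding would be impossible, as those problems fix the two weights $a,b$ and cannot accommodate the fresh weight $0$; it is precisely the availability of arbitrarily many weights in \CGTDS, combined with the $\omega$-regular constraint, that makes the reduction go through.
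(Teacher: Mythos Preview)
Your proposal is correct and matches the paper's own proof essentially verbatim: both add a zero weight (if not already present), replace the regular constraint $e$ by the $\omega$-regular constraint $e\cdot 0^\omega$, and argue that finite solutions of the original instance correspond bijectively (up to zero-padding) to infinite solutions of the new one. The only cosmetic difference is that the paper explicitly splits into the cases ``$0$ already among the weights'' versus ``$0$ fresh,'' whereas you handle both at once via the union $\{a_1,\ldots,a_k\}\cup\{0\}$.
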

\begin{proof}
Consider an instance $\P$ of \CGTDSF  with a discount factor $\lambda$, weights $a_1, \ldots, a_k$, a target $t$, and a regular expression $e$. 
We define an instance $\P'$ of \CGTDS as follows: if one of $a_1, \ldots, a_k$ is $0$ then
$\P'$ is the same as $\P$, except for having an $\omega$-regular expression $e' := e \cdot 0^{\omega}$.
Otherwise, we define $\P'$ to be the same as $\P$, except for having an additional weight of value $0$, and the $\omega$-regular expression $e' := e \cdot 0^{\omega}$, where $e$ does not contain the additional weight. Then, $\P$ has a finite solution if and only if $\P'$ has an infinite solution. 
\end{proof}

\section{\TDS as a Question on $\beta$-Expansions}\label{sec:TdsAsBetaExp}
Once reducing \TDS to \TDSzo (Theorem~\ref{thm:TdsToTdszo}), we can address the problem via $\beta$-expansions, namely by representing numbers in a nonintegral base. 

\Paragraph{Nonintegral base} We are used to represent numbers in an integral base (radix) -- decimal, binary, hexadecimal, etc. For example, the string ``3.56'' in decimal base is equal to $3\cdot 10^0 + 5\cdot 10^{-1}+6\cdot 10^{-2}$. Yet, the representation may be with an arbitrary base $\beta > 1$, in which case the string ``3.56'' is equal to $3\cdot \beta^0 + 5\cdot \beta^{-1}+6\cdot \beta^{-2}$. Representation in a nonintegral base is known as \emph{$\beta$-expansion}, a notion introduced by R\'{e}nyi \cite{Ren57} and first studied in the seminal works of R\'{e}nyi and of Parry \cite{Par60}.

The representation may be finite, as above, or infinite, as in the cases of representing the number $\frac{1}{3}$ in decimal base by the string ``0.33333\ldots'', and representing the number $\frac{10}{21}$ in base $\frac{5}{2}$ by ``$0.10101010\ldots$''. 

We denote the value of a representation $w$ in base $\beta$ by $w_{[\beta]}$ (and when no base is mentioned, it is the decimal base). For example, the value of $0.102_{[\frac{5}{2}]}$ is $\frac{66}{125}$ (as it equals to $1 \cdot \frac{2}{5} + 2 \cdot \frac{8}{125}$).

When dealing with a nonintegral base $\beta >1$, the representations are still required to only contain digits that stand for natural numbers. A well known result \cite{Ren57} is that all real numbers have a $\beta$-representation, using the numbers $\{0,1,2,\ldots,\ceil{\beta-1} \}$ as digits. In general, there might be several, and even infinitely many, representations to the same number \cite{EJK94}. (For an analysis of the numbers with unique representations see \cite{GS01}.)

\Paragraph{Greedy and lazy explorations} The most common scheme for generating a representation for a given number is the \emph{greedy exploration} (\emph{greedy expansion}) \cite{Ren57}, going from left to right, and adding at each position the maximal possible digit. For example, when representing $\frac{5}{8}$ in binary, we start with the digit \Do for the value $\frac{1}{2}$, getting a \emph{remainder} of $\frac{1}{8}$. We cannot continue with another \Do, since $\frac{1}{4}>\frac{1}{8}$, so we put \Dz (and the remainder is still $\frac{1}{8}$). We then put \Do for $\frac{1}{8}$, and we are done, as the remainder is $0$, getting the representation $.101_{[2]}$. Note that the greedy exploration provides the largest possible representation, lexicographically. Another common scheme is the \emph{lazy exploration} (\emph{lazy expansion}), adding at each position the minimal possible digit, providing the minimal possible representation, lexicographically.
Finally, a unifying scheme for greedy and lazy explorations is the \emph{nondeterministic exploration}, nondeterministically choosing at each step an 
eligible digit, i.e., a digit such that the remainder is non-negative and is smaller than the maximal number that can be represented starting from the current position.

\Paragraph{\TDSzo -- reformulated} \TDSzo can be naturally written as a question about representing a number in a nonintegral base: 

\begin{proposition}
A \TDSzo instance with a discount factor $\lambda$ and a target value $t$ has a solution iff $t$ has a representation in base $\frac{1}{\lambda}$, using only the \Dz and \Do digits.
\end{proposition}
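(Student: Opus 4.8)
The plan is to observe that the proposition is essentially a restatement of definitions, tied together by the single identity $\beta^{-i} = \lambda^i$ for $\beta = \frac{1}{\lambda}$. First I would recall what ``a representation of $t$ in base $\frac{1}{\lambda}$ using only the \Dz and \Do digits'' means under the fractional-part convention fixed above: since $0<\lambda<1$ we have $\beta := \frac{1}{\lambda} > 1$, so $\beta$ is a legitimate base, and such a representation is a digit sequence $d_1 d_2 d_3 \cdots$ with each $d_i \in \{0,1\}$ whose value is $\sum_{i=1}^{\infty} d_i \beta^{-i}$. This is exactly the $\beta$-expansion setting restricted to the two digits \Dz and \Do, with the representation lying entirely to the right of the radix point.

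Next I would substitute $\beta = \frac{1}{\lambda}$, so that $\beta^{-i} = \lambda^i$ for every $i$, and hence the value of the representation becomes $\sum_{i=1}^{\infty} d_i \lambda^i$. Comparing this with Definition~\ref{def:TDSzo}, a sequence $w \in \{0,1\}^\omega$ is a solution to the \TDSzo instance with discount factor $\lambda$ and target $t$ precisely when $\sum_{i=1}^{\infty} w(i)\lambda^i = t$. Identifying $w(i)$ with the digit $d_i$ therefore gives a bijection between \TDSzo solutions and base-$\frac{1}{\lambda}$ representations of $t$ over $\{0,1\}$, under which the defining equation of one side is literally the defining equation of the other.

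Finally I would read off both directions of the iff from this correspondence: a solution $w$ yields the representation $w(1)w(2)w(3)\cdots$ of $t$ in base $\frac{1}{\lambda}$ over $\{0,1\}$, and conversely any such representation $d_1 d_2 d_3 \cdots$ yields the solution $w$ defined by $w(i) = d_i$. I do not expect a genuine obstacle here, as the argument is purely definitional; the only point that requires care is that the \TDSzo summation begins at $i=1$ rather than $i=0$, so that it aligns with the fractional-part convention of $\beta$-expansions and no spurious integer part is introduced. This indexing was arranged deliberately in the third step of the proof of Theorem~\ref{thm:TdsToTdszo}, and it is exactly what makes the two formulations coincide.
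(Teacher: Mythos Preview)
Your proposal is correct; the paper does not even supply a proof of this proposition, since it is indeed just a restatement of Definition~\ref{def:TDSzo} in the language of $\beta$-expansions via the identity $\beta^{-i}=\lambda^i$ for $\beta=\frac{1}{\lambda}$. Your unpacking of the definitions and the care you take with the $i\geq 1$ indexing convention are exactly the points that make the reformulation work.
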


Motivated by \TDSzo, we only consider rational bases. In addition, as it is trivial to decide the \TDSzo problem for a discount factor $\lambda \geq \frac{1}{2}$ (Theorem~\ref{thm:BiggerThanHalf}), which relates to a base $\beta \leq 2$, we will mostly consider a base $\beta=\frac{p}{q}>2$, and assume that $p$ and $q$ are co-prime.

\section{\TDS\ -- Analysis and Results}\label{sec:TdsResults}
We handle in this section the target discounted-sum problem (\TDS) and its constrained version (\CTDS).

We start with an immediate corollary of viewing \TDS as a question of $\beta$-expansions, providing a solution to the case that the discount factor is bigger than half.
We continue with defining ``gaps'' -- an alternative notion to the remainders that are maintained in exploring the representation of a number. The gaps will be a key tool in our analysis of the representations.

Unlike the general case of representing a rational number in a nonintegral base, we show that once the representations only use the \Dz and \Do digits, the representation, if exists, is unique for every rational in every base $\beta > 2$. A direct corollary is that \TDS is co-recursively enumerable.
A more delicate analysis of the possible representations allows us to decide whether a given rational has an eventually-periodic representation. As a special case, we get that \TDS is decidable for every discount factor in the form of $\frac{1}{n}$, where $n\in\Nat$.

All of the above results, except for the case that $\lambda\geq\frac{1}{2}$, also hold for \CTDS.

We conclude with looking into an opposite question -- given a non-eventually-periodic representation, does it represent a rational number? We show that in some cases, for example when the growth rate of blocks of identical digits is more than exponential, the answer is negative.

\Paragraph{The case of $\lambda \geq \frac{1}{2}$}
An immediate benefit of reducing \TDS to a question on $\beta$-representations is the solution for $\lambda$'s greater than $\frac{1}{2}$.

\begin{restatable}{theorem}{BiggerThanHalf}
\label{thm:BiggerThanHalf}
\TDS is decidable for every discount factor $\lambda\geq\frac{1}{2}$.
\end{restatable}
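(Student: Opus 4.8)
The plan is to push the problem into the $\beta$-expansion setting and exploit the fact that $\lambda \geq \frac{1}{2}$ corresponds to a base $\beta = \frac{1}{\lambda} \leq 2$, for which the set of representable targets is a full interval. First I would apply Theorem~\ref{thm:TdsToTdszo} to reduce the given \TDS instance to a \TDSzo instance with the same discount factor $\lambda$ and an effectively computable rational target $t'$. By the reformulation of \TDSzo as a $\beta$-expansion question (the preceding Proposition), this \TDSzo instance has a solution iff $t'$ admits a representation in base $\beta = \frac{1}{\lambda}$ using only the digits \Dz and \Do, i.e.\ iff there is $w \in \{0,1\}^\omega$ with $\sum_{i=1}^\infty w(i)\lambda^i = t'$.

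The central step is to determine exactly which targets are representable. The set of achievable sums $\{\sum_{i=1}^\infty w(i)\lambda^i : w \in \{0,1\}^\omega\}$ is trivially contained in $[0, \frac{\lambda}{1-\lambda}]$, whose endpoints are realized by the all-\Dz and all-\Do sequences. The key claim I would then establish is that for $\beta = \frac{1}{\lambda} \leq 2$ this containment is an equality: every real number in $[0, \frac{\lambda}{1-\lambda}]$ has a $\{0,1\}$-representation in base $\beta$. This is precisely R\'enyi's theorem \cite{Ren57}, since for $1 < \beta \leq 2$ the admissible digit set $\{0,1,\ldots,\ceil{\beta-1}\}$ is exactly $\{0,1\}$. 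Concretely, the greedy exploration never gets stuck when $\lambda \geq \frac{1}{2}$: at each position $i$ the value of a single \Do digit, $\lambda^i$, is no larger than the maximal remaining tail $\sum_{j>i}\lambda^j = \frac{\lambda^{i+1}}{1-\lambda}$, and $\lambda^i \leq \frac{\lambda^{i+1}}{1-\lambda}$ holds iff $\lambda \geq \frac{1}{2}$, so the digit-\Dz and digit-\Do sub-intervals overlap and jointly cover every remaining target.

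Combining these observations, the \TDSzo instance, and hence the original \TDS instance, has a solution iff the rational $t'$ satisfies $0 \leq t' \leq \frac{\lambda}{1-\lambda}$. Since $t'$, $\lambda$, and $\frac{\lambda}{1-\lambda}$ are rationals computable from the input, this is a decidable comparison, which settles the theorem. I expect the only genuine content to be the completeness half of the interval claim, namely justifying that the greedy exploration succeeds for every in-range target when $\beta \leq 2$; the reduction and the boundedness half are routine. It is worth noting that the argument breaks down exactly at the threshold $\lambda = \frac{1}{2}$ (equivalently $\beta = 2$), where the overlap condition $\lambda^i \leq \sum_{j>i}\lambda^j$ first fails and the representable set begins to degenerate into a Cantor-like set, which is why this simple method cannot reach $\lambda < \frac{1}{2}$.
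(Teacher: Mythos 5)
Your proposal is correct and follows essentially the same route as the paper: reduce to \TDSzo via Theorem~\ref{thm:TdsToTdszo}, view the question as a $\beta$-expansion problem with $\beta = \frac{1}{\lambda} \leq 2$, invoke R\'enyi's result that the digits $\{\Dz,\Do\}$ then suffice for every number in range, and decide by comparing the rational target against the interval bounds. The only differences are cosmetic: you spell out the greedy-overlap argument behind R\'enyi's theorem and state the lower bound $t' \geq 0$ explicitly, both of which the paper leaves implicit.
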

\begin{proof}
By Theorem~\ref{thm:TdsToTdszo} and Section~\ref{sec:TdsAsBetaExp}, we view the problem as asking whether a target number $t$ has a representation in base $\beta=\frac{1}{\lambda}$ with only \Dz's and \Do's. 

Thus, there are two simple cases: 
\begin{itemize}
\item The target $t$ is bigger than $\frac{1}{\beta-1}$, implying that $t$ does not have a $\beta$-representation, since $0.1^\omega_{[\beta]} = \frac{1}{\beta-1}<t$.
\item The target $t$ is equal to or smaller than $\frac{1}{\beta-1}$, implying that $t$ has a $\beta$-representation: Every number has a $\beta$-representation with the digits $\{0,1,2,\ldots, \ceil{\beta-1} \}$ \cite{Ren57}, and since $\lambda\geq \frac{1}{2}$ and $\beta=\frac{1}{\lambda}$, it follows that $\beta\leq 2$, meaning that the \Dz and \Do digit suffice. 
\end{itemize}
\end{proof}

The simplicity of solving \TDS for $\lambda \geq \frac{1}{2}$ suggests to check whether it also holds when slightly extending \TDS with a regular constraint; i.e., \CTDS. Yet, a slight extension, such as requiring a \Dz in every odd position, breaks the solution, as it is analogous to \TDS over $\lambda^2$.

\begin{example}\label{exm:RegExp}
Consider a \TDSzo instance with $\lambda = \frac{2}{3}$. One can apply Theorem~\ref{thm:BiggerThanHalf} for solving it, as $\frac{2}{3} \geq \frac{1}{2}$.
However, Theorem~\ref{thm:BiggerThanHalf} cannot hold for a \CTDS with the same parameters and an additional constraint that the sequence of \Dz's and \Do's is in the language of $(0(0+1))^{\omega}$. Indeed, $\sum_{i=0}^{\infty} w(i) \lambda^i = \sum_{i=0}^{\infty} w(2i) \lambda^{2i}$, which is a \TDSzo problem for $\lambda = (\frac{2}{3})^2 = \frac{4}{9} < \frac{1}{2}$.
\end{example}

\Paragraph{Gaps} 
In the exploration schemes presented in Section~\ref{sec:TdsAsBetaExp}, we compared at each step the remainder with the value of a digit in the currently handled position. For example, in binary representation, the value of the digit \Do in the first position to the right of the radix point is $\frac{1}{2}$, while its value in the third position is $\frac{1}{2} \cdot (\frac{1}{2})^2=\frac{1}{8}$. Analogously, we can always compare a fixed value of the digit, say $\frac{1}{2}$ for the digit \Do in binary, with the ``normalized-remainder''. In the above example, instead of checking whether $\frac{1}{2} \cdot (\frac{1}{2})^2 \leq\frac{1}{8}$, we would check whether  $\frac{1}{2}\leq \frac{1}{8} \cdot (2)^2$. We call this ``normalized-remainder'' the \emph{gap}. (A similar notion is used in \cite{BH11}.) 

When exploring a representation and working with gaps, we do not need to multiply, at each step, the remainder with the position-exponent of the base (as demonstrated above when comparing $\frac{1}{2}$ and $\frac{1}{8} \cdot (2)^2$). Instead, we can compute the new gap based on the current gap and the chosen digit, completely ignoring the remainder and the current position.

More formally, a left-to-right exploration in base $\beta$ (with digits only to the right of the radix point), using gaps, is done as follows: The initial gap is the target number, and in every step with a gap $g$, the gap of the next step is $g'=\beta g - m$, where $m$ is the chosen digit. (In case of the greedy exploration, it is the biggest $m$, such that $\frac{m}{\beta} \leq g$.) 

There are two main advantages for exploring representations using gaps rather than remainders:
\begin{itemize}
\item There is no need to store the current position, which generally grows to infinity, but only the current gap. In some cases, for example when the base $\beta$ is in the form of $\frac{1}{n}$ where $n\in\Nat$, there are only boundedly many possible gaps (Theorem~\ref{thm:IntegralDenominator}).
\item Once having a gap $g$ in some position $p$, the suffix of the representation (the string from position $p$ onwards) only depends on the gap $g$, and not on the position $p$. This is a central tool in analyzing the representation of a number, for example allowing to infer that the representation is eventually periodic when getting the same gap twice (Lemma~\ref{lem:EventuallyPeriodicGaps}).
\end{itemize}

\Paragraph{Unique representation with \Dz's and \Do's} 
With a nonintegral base $\beta$, a number may have several, and even infinitely many, $\beta$-representations \cite{GS01,Ren57}. In the following, we show that if  $\beta>2$ and the representation can only use the \Dz and \Do digits, then the representation, if exists, is unique.

\begin{restatable}{lemmaStatement}{UniqueZO}
\label{lem:UniqueZO}
Consider a base $\beta>2$ and a number $t$. If $t$ has a $\beta$-representation with only \Dz's and \Do's then this representation is unique.
\end{restatable}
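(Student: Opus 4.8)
The plan is to argue by contradiction with a \emph{first-difference} estimate, using only the geometric-series bound on tails of $\{\Dz,\Do\}$-sequences. Suppose $t$ admitted two distinct representations $w,w' \in \{\Dz,\Do\}^\omega$ to the right of the radix point, so that $\sum_{i\geq 1} w(i)\beta^{-i} = \sum_{i\geq 1} w'(i)\beta^{-i} = t$. Let $n$ be the least index at which $w$ and $w'$ differ; without loss of generality $w(n)=\Do$ and $w'(n)=\Dz$. Since the two representations agree on the length-$(n-1)$ prefix, those shared terms cancel when the two sums are equated, leaving the tails from position $n$ onward equal to each other.

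First I would isolate the differing position by multiplying that tail equality by $\beta^{n}$, which turns the digit at position $n$ into the units digit and yields
\[
1 + \sum_{i>n} w(i)\,\beta^{\,n-i} \;=\; \sum_{i>n} w'(i)\,\beta^{\,n-i}.
\]
The left-hand side is at least $1$, since every term $w(i)\beta^{\,n-i}$ is non-negative. The right-hand side is bounded above by replacing each digit $w'(i)$ with $\Do$, which gives the geometric series $\sum_{i>n}\beta^{\,n-i} = \sum_{k\geq 1}\beta^{-k} = \frac{1}{\beta-1}$.

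The crux, and the only place the hypothesis $\beta>2$ is used, is the elementary inequality $\frac{1}{\beta-1}<1$, which holds precisely because $\beta-1>1$. Hence the right-hand side is strictly below $1$ while the left-hand side is at least $1$, a contradiction; so no first difference can exist and the representation is unique. I do not expect a genuine obstacle here: once the tails are aligned it is a one-line estimate. The point worth flagging is that the threshold $\beta=2$ is sharp, since for $\beta\leq 2$ one has $\frac{1}{\beta-1}\geq 1$ and the estimate collapses, consistent with bases $\beta\leq 2$ admitting many representations (cf.\ Theorem~\ref{thm:BiggerThanHalf}). Equivalently, the whole argument can be phrased in the language of gaps introduced above: an incoming gap $g$ is compatible with the digit $\Do$ only if $g\geq\frac{1}{\beta}$ and with the digit $\Dz$ only if $g\leq\frac{1}{\beta(\beta-1)}$, and for $\beta>2$ these two ranges are disjoint, so each gap admits at most one eligible digit and therefore, inductively, determines its entire suffix uniquely.
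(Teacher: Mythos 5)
Your proof is correct and follows essentially the same route as the paper's: both locate the first differing digit, bound the tail of the word carrying \Do below by its all-\Dz continuation and the tail of the word carrying \Dz above by its all-\Do continuation, and conclude from $\frac{1}{\beta-1}<1$ (i.e., $\beta>2$) that the two values cannot coincide. Your version merely normalizes by multiplying through by $\beta^{n}$ instead of keeping the common prefix explicit, and your closing gap-based reformulation is a correct equivalent phrasing.
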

\begin{proof}
Assume towards contradiction that a number $t$ has two such $\beta$-representations $w$ and $w'$.
As $w \neq w'$, there is a finite word $u$ and infinite words $v$ and $v'$, such that $w = \Dz.u\Dz v$ and
$w' = \Dz. u \Do v'$. 
Now, observe that $w_{[\beta]} \leq \Dz.u \Dz \Do^{\omega}_{[\beta]} =  \Dz.u_{[\beta]} + (\frac{1}{\beta})^{|u|+2} (\frac{\beta}{\beta-1})$, which is strictly smaller than
$w'_{[\beta]} \geq \Dz.u \Do \Dz^{\omega}_{[\beta]} = \Dz.u_{[\beta]} + (\frac{1}{\beta})^{|u|+1}$, leading to a contradiction.
\end{proof}

\Paragraph{The possible scenarios of the exploration}
Establishing that the representation, if exists, is unique, there are exactly three possible scenarios when running the greedy exploration with a rational base $\beta$ and a target number $t$: 
\begin{itemize}
\item The exploration stops when reaching a gap $g$, such that $g > \frac{1}{\beta-1}$. The conclusion is that there is no representation of $t$ in base $\beta$, since even if we use, from this position further, only the \Do digit, the resulting number will be too small, as $\sum_{i=1}^\infty \beta^i= \frac{1}{\beta-1}< g$.
\item The exploration stops when reaching a gap $g$, such that $g$ already appeared as the gap in a previous step. The conclusion is that there is an eventually-periodic representation of $t$ in base $\beta$, as formalized in Lemma~\ref{lem:EventuallyPeriodicGaps}.
\item The exploration never stops, which happens in the case that $t$ has a $\beta$-representation that is not eventually periodic.
\end{itemize}

Using the observation on the possible scenarios, we get the following result.
\begin{restatable}{proposition}{TDSisCoRE}
\label{prop:TDSisCoRE}
\TDS is co-recursively-enumerable.
\end{restatable}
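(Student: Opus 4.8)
The plan is to establish that the \emph{complement} of \TDS is recursively enumerable, i.e., to exhibit a semi-algorithm that halts precisely on those instances having \emph{no} solution. By Theorem~\ref{thm:TdsToTdszo} and the reformulation of \TDSzo as a $\beta$-expansion question, it suffices to work with a rational base $\beta = \frac{1}{\lambda} > 1$ and a rational target $t$, and to semi-decide whether $t$ \emph{fails} to have a representation using only the \Dz and \Do digits. The regime $\beta \leq 2$ (that is, $\lambda \geq \frac{1}{2}$) is already fully decidable by Theorem~\ref{thm:BiggerThanHalf}, hence co-recursively-enumerable, so the only regime that requires work is $\beta > 2$.

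For $\beta > 2$ I would simply run the greedy exploration with gaps: start from gap $g_0 = t$, and at each step replace the current gap $g$ by $\beta g - m$, where $m \in \{\Dz,\Do\}$ is the maximal eligible digit. Every gap is rational and is obtained by exact rational arithmetic, so each step is effective, and the machine needs to store only the current gap. The semi-algorithm halts and reports ``no solution'' as soon as it reaches a gap $g$ with $g > \frac{1}{\beta-1}$ (the first of the three exploration scenarios), and otherwise continues indefinitely.

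Correctness rests on the exhaustive three-scenario analysis together with the uniqueness of \Dz-\Do representations (Lemma~\ref{lem:UniqueZO}). If the instance is a NO instance, then neither a repeated gap (scenario~2, which would yield an eventually-periodic solution) nor a non-terminating run (scenario~3, which would witness a non-eventually-periodic solution) can occur, so the exploration must fall into scenario~1 and halt after finitely many steps. Conversely, if a solution exists, then by uniqueness the greedy exploration reproduces the one and only \Dz-\Do representation; along that representation each gap equals a normalized tail value and therefore lies in $[0,\tfrac{1}{\beta-1}]$, so the threshold is never crossed and the semi-algorithm never halts. Hence it halts exactly on NO instances, which is the desired co-recursive-enumerability. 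I would reinforce this with the sharper observation that, for $\beta>2$, the eligible digit is in fact \emph{forced} at every step: when $g<\frac{1}{\beta}$ only \Dz keeps the gap non-negative, while when $g\geq\frac{1}{\beta}$ choosing \Dz would force the gap to $\beta g \geq 1 > \frac{1}{\beta-1}$ and cause immediate failure, so \Do is forced; this determinism both re-derives Lemma~\ref{lem:UniqueZO} and certifies that reaching scenario~1 is definitive.

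The main obstacle, such as it is, lies entirely in the direction ``NO instance $\Rightarrow$ scenario~1 is reached in finite time.'' This is precisely the content of the dichotomy: one must argue that a run which never halts actually \emph{produces} a solution, so that a non-halting run is impossible for a NO instance. This follows because the invariant $g_n \leq \frac{1}{\beta-1}$, maintained throughout a non-stuck run, forces the emitted infinite \Dz-\Do string to converge in value to $t$; making this convergence argument precise (and confirming that the greedy choices never prematurely abandon a representable target, which is exactly the forced-choice remark above) is the only delicate point, the remainder being routine simulation of rational arithmetic on a Turing machine.
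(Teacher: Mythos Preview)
Your proposal is correct and follows essentially the same approach as the paper: the paper derives Proposition~\ref{prop:TDSisCoRE} directly from the three-scenario analysis of the greedy exploration (combined with the reduction to \TDSzo and the case split on $\lambda \gtrless \frac{1}{2}$), and your semi-algorithm is precisely an explicit implementation of that analysis. Your forced-choice remark is a nice sharpening that makes the ``non-halting run $\Rightarrow$ solution'' direction self-contained, but the overall strategy is the same.
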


\Paragraph{Eventually-periodic representations} 
A key question is whether a number has an eventually-periodic representation. 
It is known that with every integral base (such as decimal, binary, etc.), a real number $n$ has an eventually-periodic representation if and only if $n$ is rational. This no longer holds when the base is not integral. While an eventually-periodic representation implies a rational number, there are also rationals with non-eventually-periodic representations.
Further, Schmidt \cite{Sch80} showed that if all rationals have eventually periodic greedy expansions in some base $\beta>1$, for a real number $\beta$, then $\beta$ is an algebraic integer, and more precisely, it is a Pisot number or a Salem number. (The opposite direction is an open problem, known as ``Schmidt's conjecture'' \cite{Har06}.) The only algebraic integers among the rational numbers are the integers, implying that if all rationals have eventually periodic greedy expansions in a rational base $\beta$ then $\beta$ is an integer.

We show below how to decide whether a given number has an eventually-periodic representation in a rational base $\beta>2$, when the representations can only use the \Dz and \Do digits.

We first formalize the direct connection between eventually-periodic representations and a bounded set of gaps in the exploration.
\begin{restatable}{lemmaStatement}{EventuallyPeriodicGaps}
\label{lem:EventuallyPeriodicGaps}
Consider a rational number $t$ and a rational base $\beta>2$. Then,
$t$ has an eventually-periodic $\beta$-representation with only \Dz's and \Do's  iff there are finitely many different gaps in the greedy-exploration of $t$.
\end{restatable}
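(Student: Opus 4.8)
The plan is to set up a tight correspondence between the gaps of the greedy exploration and the \emph{tails} of the generated digit sequence, and then to read the two implications off this correspondence. Recall the gap recurrence $g_0=t$ and $g_{n}=\beta g_{n-1}-w(n)$, where $w(n)\in\{0,1\}$ is the digit greedily chosen at step $n$ (so $w(n)=\Do$ iff $g_{n-1}\geq \frac1\beta$). Unfolding the recurrence gives $g_n=\beta^n\bigl(t-\sum_{j=1}^{n}w(j)\beta^{-j}\bigr)$, equivalently $g_n=\sum_{j\geq 1}w(n+j)\beta^{-j}$ whenever the generated sequence represents $t$; that is, the gap at position $n$ is exactly the value (in base $\beta$) of the tail of the representation starting at $n$, and $g_n\beta^{-n}$ is the ordinary remainder. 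Two structural facts will drive the proof. First, the greedy step is a deterministic map on gaps, so the whole gap sequence is the forward orbit of $t$ under a fixed function. Second, the interval $[0,\frac1{\beta-1}]$ is forward-invariant, while if some $g_n>\frac1{\beta-1}$ then (since $\beta>2$ forces $\frac1{\beta-1}>\frac1\beta$) the greedy digit is $\Do$ and $g_{n+1}-g_n=(\beta-1)g_n-1>0$, so the gaps strictly increase and diverge to infinity. Hence a greedy exploration that leaves $[0,\frac1{\beta-1}]$ produces infinitely many distinct gaps and no representation of $t$ (the remainder $g_n\beta^{-n}$ tends to a positive limit). This dichotomy is what lets me equate ``finitely many gaps'' with ``bounded gaps'', i.e.\ with the existence of a representation.

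For the forward implication, I would assume that $t$ has an eventually-periodic $\beta$-representation $w$ using only \Dz's and \Do's. By Lemma~\ref{lem:UniqueZO} this representation is unique, and since $\beta>2$ the greedy exploration cannot deviate from it: the only way greedy could differ from a valid representation is by choosing $\Do$ where the representation chooses $\Dz$, which forces $\frac1\beta\leq g\leq\frac1{\beta(\beta-1)}$ and hence $\beta\leq 2$, a contradiction. Thus the greedy gaps are exactly the tail values $g_n=\sum_{j\geq1}w(n+j)\beta^{-j}$. An eventually-periodic sequence $w=u\,v^{\omega}$ has at most $|u|+|v|$ distinct tails, so there are finitely many distinct gaps.

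For the converse, I would assume that the greedy exploration of $t$ produces finitely many distinct gaps. By the dichotomy above the orbit never leaves $[0,\frac1{\beta-1}]$ (otherwise it would diverge and be infinite), so the gaps are bounded and the remainder $g_n\beta^{-n}\to 0$; therefore the greedy digit sequence $w$ satisfies $\sum_{j\geq1}w(j)\beta^{-j}=t$ and is a genuine $\beta$-representation of $t$ with only \Dz's and \Do's. Since the gap orbit takes finitely many values, the pigeonhole principle yields indices $i<j$ with $g_i=g_j$; determinism of the greedy map then forces the blocks $w(i+1)\cdots w(j)$, $w(j+1)\cdots w(2j-i)$, \ldots\ to coincide, i.e.\ $w$ is eventually periodic with period $j-i$ after the prefix of length $i$. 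Hence $t$ has an eventually-periodic representation, as required.

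The main obstacle I anticipate is exactly the scenario hidden in the converse: a priori one might imagine the greedy exploration halting with finitely many gaps but \emph{without} yielding a valid representation (the case $g>\frac1{\beta-1}$). The crucial observation that rescues the statement is that this case does not give finitely many gaps at all---once a gap exceeds $\frac1{\beta-1}$ the greedy gaps are strictly increasing and unbounded---so ``finitely many gaps'' automatically certifies that the exploration stays in the invariant interval and thus represents $t$. Getting this monotonicity and divergence estimate right, together with the twofold use of $\beta>2$ (for uniqueness via Lemma~\ref{lem:UniqueZO}, and for ruling out premature divergence when a representation exists), is where the real care is needed; the periodicity conclusion itself is then a routine consequence of determinism and pigeonhole.
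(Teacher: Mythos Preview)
Your proof is correct and follows essentially the same approach as the paper---uniqueness via Lemma~\ref{lem:UniqueZO} for the forward direction, and pigeonhole plus determinism of the greedy map for the converse. You are in fact more careful than the paper on two points it leaves implicit: that the greedy exploration must coincide with the unique representation when $\beta>2$, and that ``finitely many distinct gaps'' forces the orbit to remain in $[0,\tfrac{1}{\beta-1}]$ (so that the greedy digit sequence genuinely represents $t$); the paper's proof simply asserts the latter without the divergence argument you supply.
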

\begin{proof}\

\begin{itemize}

\item 
Assume that there are finitely many different gaps in the greedy-exploration of $t$. Then, there is a step $i$ in the exploration in which we get a gap $g$, such that $g$ was also the gap in some step $j<i$. Hence, the exploration will use in step $i+1$ the same digit that was used in step $j+1$, and by induction, the used digits will be eventually periodic, providing an eventually-periodic representation.

\item

As for the other direction, assume that $t$ has an eventually-periodic representation, in which the repeated sequence of digits is the finite word $u$, starting in position $p$. By Lemma~\ref{lem:UniqueZO}, $t$ has a unique $\beta$-representation, implying that the greedy exploration should provide this representation. Now, let $g_p$ be the gap of the exploration in position $p$ and $g_{p+|u|}$ be the gap of the exploration in position $p+|u|$. By the definition of exploration gaps, we have $g_p=u^\omega_{[\beta]}=g_{p+|u|}$, leading to finitely many different gaps in the greedy-exploration of $t$.
\end{itemize}
\end{proof}

We show below that by analyzing the gaps of the greedy exploration, we can decide whether there exists an eventually-periodic representation.

\begin{restatable}{lemmaStatement}{EP}
\label{lem:EP}
For a rational number $t=\frac{a}{b}$ and a rational base $\beta=\frac{p}{q}>2$, where $a,b,p$ and $q$ are integers, we can decide in space polynomial in the binary representation of $a,b,p$ and $q$, whether $t$ has an eventually-periodic $\beta$-representation with only \Dz's and \Do's.
Moreover, the eventually-periodic $\beta$-representation of $t$, if it exists, is of the form $u v^{\omega}$ with $|u| + |v| \leq b$.
\end{restatable}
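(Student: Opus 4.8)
The plan is to analyze directly the gaps produced by the greedy exploration of $t$ in base $\beta$. By Lemma~\ref{lem:UniqueZO} the $\beta$-representation of $t$ with only \Dz's and \Do's is unique whenever it exists, so it necessarily coincides with the one produced by the greedy exploration; and by Lemma~\ref{lem:EventuallyPeriodicGaps}, $t$ has an eventually-periodic representation exactly when this exploration produces only finitely many distinct gaps. Recall that the gaps satisfy $g_0 = t$ and $g_{n+1} = \beta g_n - m_n$ with $m_n \in \{\Dz,\Do\}$, that every non-failing gap lies in $[0,\tfrac{1}{\beta-1}]$, and that failure (a gap exceeding $\tfrac{1}{\beta-1}$) means no representation exists at all. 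It thus suffices to bound the number of distinct gaps that can occur before the exploration either fails or repeats a gap.

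The core of the argument is a bound on the denominators of the gaps. Writing $\beta=\tfrac{p}{q}$ with $\gcd(p,q)=1$ and $t=\tfrac{a}{b}$, I would first observe that $g_n=\tfrac{R_n}{bq^n}$ for integers $R_n$ defined by $R_0=a$ and $R_{n+1}=pR_n-m_n b q^{n+1}$, so the reduced denominator of every gap divides $bq^n$. The key claim is that if $t$ is eventually periodic then the reduced denominator of every gap in fact divides $b$. To prove it I would track, for each prime $\pi$ dividing $q$, the $\pi$-adic valuation $v_\pi(g_n)$: since $\pi\nmid p$ one has $v_\pi(\beta g_n)=v_\pi(g_n)-v_\pi(q)$, and because the subtracted digit $m_n$ is an integer, whenever $v_\pi(g_n)<0$ the valuation of the sum is dominated by the smaller term, giving $v_\pi(g_{n+1})=v_\pi(g_n)-v_\pi(q)<v_\pi(g_n)<0$. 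Hence once any gap acquires $\pi$ in its denominator, the $\pi$-part of the denominator strictly grows at every subsequent step, forcing infinitely many distinct gaps and so a non-eventually-periodic representation. Equivalently, if the representation is eventually periodic then every gap has denominator coprime to $q$; combined with the fact that it divides $bq^n$, this forces it to divide $b$. I expect this valuation bookkeeping to be the main obstacle, since it is where the interaction between the base and the target denominator must be controlled.

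Granting the claim, the conclusion follows by a counting argument. If $t$ is eventually periodic, every gap is of the form $\tfrac{k}{b}$ with $0\le \tfrac{k}{b}\le \tfrac{1}{\beta-1}$; since $\beta>2$ gives $\tfrac{1}{\beta-1}<1$, we have $k\in\{0,1,\ldots,b-1\}$, so there are at most $b$ possible gap values. Among the first $b+1$ gaps $g_0,\ldots,g_b$ two must therefore coincide, say $g_i=g_j$ with $i<j\le b$; by the determinism of the greedy exploration this yields an eventually-periodic representation $uv^{\omega}$ with $|u|=i$ and $|v|=j-i$, whence $|u|+|v|=j\le b$, as required.

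Finally, for the space bound I would give the deterministic simulation of the greedy exploration, maintaining only the current gap as a reduced fraction and a step counter ranging up to $b$, each occupying space polynomial in the bit-sizes of $a,b,p,q$. At each step the algorithm computes the next gap by one application of $g\mapsto \beta g-m$; it answers ``no'' if the gap exceeds $\tfrac{1}{\beta-1}$ (no representation at all) or if its reduced denominator fails to divide $b$ (a non-eventually-periodic representation, by the claim), and it answers ``yes'' once $b+1$ gaps have been produced without either event, which the counting argument guarantees yields a repeated gap. Since the counter and the single stored gap are of polynomial size and the space is reused across iterations, the whole procedure runs in polynomial space.
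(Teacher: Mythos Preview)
Your proposal is correct and takes essentially the same approach as the paper: both detect the non-eventually-periodic case by showing that once a prime factor of $q$ enters the denominator of a gap its exponent grows without bound, and both then conclude by pigeonhole on the at most $b$ gaps of the form $k/b$. The only cosmetic difference is that the paper phrases the dichotomy as ``is the reduced numerator of the current gap divisible by $q$?'' whereas you track $\pi$-adic valuations of the gap and test ``does the reduced denominator divide $b$?''; your trigger fires one step later than the paper's, but the two formulations are equivalent and yours is arguably cleaner.
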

\begin{proof}
We consider the greedy exploration, and analyze the gap $g=\frac{c}{d}$ (where $c$ and $d$ are co-prime) at every step of the exploration. Note that having a gap $g=\frac{c}{d}$, and adding a digit $m$, the next gap is $g'= \frac{p}{q} \cdot \frac{c}{d} - m  = \frac{pc - qmd}{qd}$. Let $\beta=\frac{p}{q}$, where $p$ and $q$ are co-prime.

We prove the required decidability by showing the following two claims:
\begin{enumerate}
\item If the gap, $g=\frac{c}{d}$, is such that $c$ is not divisible by $q$ then the representation cannot be eventually periodic.
\item For an initial gap $g=\frac{c}{d}$, after at most $d$ exploration steps, the gap $\hat{g}=\frac{\hat{c}}{\hat{d}}$ will either 
\begin{itemize}
	\item exceed $\frac{1}{\beta-1}$ (implying that there is no representation, as $1^\omega_{[\beta]}=\frac{1}{\beta-1}$); or
	\item be the same as a previous gap (implying, by Lemma~\ref{lem:EventuallyPeriodicGaps}, an eventually-periodic representation); or 
	\item will be such that $\hat{c}$ is not divisible by $q$ (implying, by the previous claim, that there is no eventually-periodic representation).
\end{itemize}
\end{enumerate}

Indeed: 
\begin{enumerate}
\item Assume that $c$ is not divisible by $q$. Consider the prime factorization $q=f_1^{e_1} \cdot f_2^{e_2} \cdots f_n^{e_n}$ of $q$. Then, by the assumption, there is some $1 \leq i \leq n$, such that $c$ is not divisible by $f_i^{e_i}$. Thus, the numerator of $g'$ (which is $pc - qmd$) is not divisible by $q$, as (i) $pc$ is not divisible by $f_i^{e_i}$, since $p$ is co-prime with $q$ and $c$ is not divisible by $f_i^{e_i}$, and (ii) $qmd$ is divisible by $f_i^{e_i}$, since $q$ is. Hence, the exponent of $f_i$ in the denominator of $g'$ is bigger than its exponent in $g$. Therefore, by induction, the exponent of $f_i$ in the denominators of the gaps monotonically increases, precluding the possibility of two equivalent gaps.
\item Assume that $c$ is divisible by $q$. Then, the numerator $pc - qmd$ in the calculation of $g'$ is divisible by $q$, meaning that $g'=\frac{c'}{d}$, for some integer $c'$. Thus, as long as the numerators of the gaps are divisible by $q$, the denominator cannot grow. Hence, within $d$ steps of exploration, we must reach a gap that satisfies one of the following properties: (i) Its numerator exceeds $d$, and therefore it is bigger than \Do, which is bigger than $\frac{1}{\beta-1}$; or (ii) It is the same as a previous gap; or (iii) Its numerator is not divisible by $q$.
\end{enumerate}

As for the space complexity, by the above, all the gaps are in the form of $\frac{x}{d}$, where $x \leq d$. Thus, the procedure can be done using a polynomial space.
\end{proof}

Using Lemma~\ref{lem:EP}, we get the decidability of both \TDS and \CTDS for eventually-periodic sequences. The following theorem only holds for a discount factor $\lambda < \frac{1}{2}$, and will be generalized in Theorem~\ref{thm:GTDSEventuallyPeriodic} to hold for an arbitrary discount factor.

\begin{restatable}{theorem}{TdsEP}
\label{thm:TdsEP} 
Consider an instance $\P$ of \CTDS (or \TDS) with a discount factor $\lambda < \frac{1}{2}$. Then the problem of whether $\P$ has an eventually-periodic solution is decidable in PSPACE.
\end{restatable}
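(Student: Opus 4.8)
The plan is to move to the $0$-$1$ setting and then separate the unconstrained core, which is settled by Lemma~\ref{lem:EP}, from the regular constraint, which I verify by a product construction over the gap exploration. Since $\lambda<\frac{1}{2}$ corresponds to a base $\beta=\frac{1}{\lambda}>2$, the uniqueness guaranteed by Lemma~\ref{lem:UniqueZO} applies: a target has \emph{at most one} $\beta$-representation that uses only \Dz's and \Do's. This is the conceptual key for the constrained case, since it means the regular expression $e$ cannot create solutions; it can only accept or reject the single candidate representation.

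First I would reduce the \CTDS instance $\P$ to a constrained \TDSzo instance. Applying the transformation of Theorem~\ref{thm:TdsToTdszo}, I compute in polynomial time the target $t'''=\frac{\lambda(t-\lambda t-a)}{(1-\lambda)(b-a)}$ (treating the degenerate case $a=b$ separately, where the only sequence is $a^\omega$). The three reduction steps merely rescale values and relabel letters, so the per-letter bijection $a\mapsto\Dz$, $b\mapsto\Do$ turns the constraint language $L(e)\subseteq\{a,b\}^\omega$ into an $\omega$-regular language $L(e')\subseteq\{\Dz,\Do\}^\omega$, with $e'$ computable from $e$ in polynomial time (prepending the forced leading \Dz produced by step~3). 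As the reduction preserves eventual periodicity, $\P$ has an eventually-periodic solution iff $t'''$ has an eventually-periodic $\beta$-representation with only \Dz's and \Do's that lies in $L(e')$.

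Next I would run the greedy gap-exploration of Lemma~\ref{lem:EP} on $t'''$ in base $\beta$. By its analysis the exploration reaches, within at most $b$ steps and along gaps $\frac{x}{d}$ of polynomial bit-length, one of three verdicts: a gap exceeding $\frac{1}{\beta-1}$ (no representation exists); a gap whose numerator is not divisible by $q$ (so the representation, if any, is not eventually periodic); or a repeated gap (so by Lemma~\ref{lem:EventuallyPeriodicGaps} an eventually-periodic representation $uv^\omega$ with $|u|+|v|\le b$ exists). By uniqueness (Lemma~\ref{lem:UniqueZO}) the only candidate solution is this representation, so in the first two verdicts there is no eventually-periodic solution and I answer \emph{no}. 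For the plain \TDS this already finishes the proof: the answer is \emph{yes} exactly in the third verdict, and Lemma~\ref{lem:EP} carries everything out in PSPACE.

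Finally, in the eventually-periodic verdict I must still test whether the implicitly generated word $w=uv^\omega$ belongs to $L(e')$, and this is where the difficulty lies: $w$ may be exponentially long ($|u|+|v|\le b$, with $b$ exponential in the input's bit-length), so I cannot write it out explicitly. Instead I would translate $e'$ into a nondeterministic B\"uchi automaton $\mathcal{B}$ of polynomial size and search for an accepting run of $\mathcal{B}$ on $w$ inside the product whose configurations are pairs $(g,s)$ of a gap $g$ and a state $s$ of $\mathcal{B}$. The gap component evolves deterministically and is eventually periodic, so an accepting run exists iff the product contains a reachable lasso: a configuration $(h,s)$ with $h$ on the gap-cycle that returns to itself along one period of the gap-cycle while visiting a final state of $\mathcal{B}$. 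Each configuration is stored in polynomial space and the transition relation is polynomial-time computable, so these reachability and cycle-detection queries over the exponentially large but succinctly presented product graph are decidable in PSPACE (using $\mathrm{NPSPACE}=\mathrm{PSPACE}$). I answer \emph{yes} iff such a lasso exists, which yields the claimed PSPACE bound for \CTDS as well. The expected main obstacle is precisely this last step, namely performing the $\omega$-regular membership test on a representation that is only available implicitly, in space polynomial rather than proportional to its exponential length.
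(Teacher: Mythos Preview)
Your proposal is correct and follows essentially the same route as the paper: reduce to \TDSzo via Theorem~\ref{thm:TdsToTdszo}, invoke Lemma~\ref{lem:EP} to obtain (or rule out) the unique eventually-periodic representation $uv^\omega$, and for \CTDS test membership of that single candidate in the constraint language in PSPACE. The paper's proof is terser---it simply cites \cite{GPVW95,Var96} for the PSPACE membership test---whereas you spell out the product of the gap-exploration with a B\"uchi automaton and the lasso search; the one minor imprecision is that the accepting cycle in the product may span several periods of the gap-cycle rather than exactly one, but the standard lasso argument you invoke handles this anyway.
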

\begin{proof}
The \TDS case is a corollary of Theorem~\ref{thm:TdsToTdszo} and Lemma~\ref{lem:EP}. 

As for the \CTDS case, note that the proof of Lemma~\ref{lem:EP} also generates the (unique) representation, when it exists. 
The representation is given in the form of $uv^{\omega}$ with $u,v \in \{0,1\}^{*}$.
Hence, for checking whether there is a representation that satisfies an $\omega$-regular expression $e$, one can check in PSPACE whether the generated representation belongs to the language of $e$ \cite{GPVW95,Var96}.
\end{proof}

Another corollary of Lemma~\ref{lem:EP} is the case when the discount factor $\lambda$ is of the form $\frac{1}{n}$, where $n$ is a natural number.

\begin{restatable}{theorem}{IntegralDenominator}
\label{thm:IntegralDenominator}
\TDS (resp.\ \CTDS) is in PSPACE for every discount factor $\lambda$ of the form $\frac{1}{n}$ where $n$ is a natural number.
\end{restatable}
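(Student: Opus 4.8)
The plan is to exploit the fact that, in the $\beta$-expansion view of Section~\ref{sec:TdsAsBetaExp}, a discount factor $\lambda=\frac{1}{n}$ corresponds to the \emph{integer} base $\beta=n$, so that in the notation of Lemma~\ref{lem:EP} we have $q=1$. The decisive consequence is that the gaps stay ``small''. Writing the (reduced \Dz/\Do) target as $\frac{a}{b}$, the initial gap is $\frac{a}{b}$, and since multiplying by the integer $n$ and subtracting an integer digit $m\in\{\Dz,\Do\}$ can only preserve or shrink the denominator, every gap arising in any exploration has the form $\frac{c}{d}$ with $d\mid b$. As a valid gap must lie in $[0,\frac{1}{\beta-1}]$ (a larger gap admits no continuation, since $\Do^\omega_{[\beta]}=\frac{1}{\beta-1}$), there are at most $b+1$ possible gaps. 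Equivalently, case (iii) of Lemma~\ref{lem:EP}, namely that a numerator is not divisible by $q$, can never occur when $q=1$.

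First I would treat $n\ge 3$, where $\beta=n>2$ and $\lambda=\frac{1}{n}\le\frac{1}{3}<\frac{1}{2}$. Here Lemma~\ref{lem:UniqueZO} gives uniqueness, so the greedy exploration finds the unique representation whenever one exists, and by the previous paragraph it must terminate either in a gap exceeding $\frac{1}{\beta-1}$ (no representation) or in a repeated gap (an eventually-periodic representation, by Lemma~\ref{lem:EventuallyPeriodicGaps}). Hence a solution exists iff an eventually-periodic one does, and since all solutions are then eventually periodic, Theorem~\ref{thm:TdsEP} decides the question in PSPACE for both \TDS and \CTDS.

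It remains to handle $n=2$, i.e.\ $\lambda=\frac{1}{2}$ and $\beta=2$. For \TDS this is immediate from Theorem~\ref{thm:BiggerThanHalf}. For \CTDS the obstacle is that $\beta=2\not>2$, so Lemma~\ref{lem:UniqueZO} fails (dyadic rationals have two representations) and the greedy exploration no longer captures every solution; I would therefore argue directly with a nondeterministic exploration. After the standard reduction to the \Dz/\Do setting (Theorem~\ref{thm:TdsToTdszo}, relabelling $a\mapsto\Dz$ and $b\mapsto\Do$ in the constraint $e$), build a finite \emph{gap automaton} $\mathcal{G}$ whose states are the valid gaps (the rationals in $[0,\frac{1}{\beta-1}]$ with denominator dividing $b$), with initial state the target gap and a transition $g \xrightarrow{m} \beta g-m$ for each $m\in\{\Dz,\Do\}$ whose target again lies in $[0,\frac{1}{\beta-1}]$. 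Unrolling $g_0=\sum_{i\ge 1}m_i\beta^{-i}$ and using that the gaps remain bounded shows that infinite paths of $\mathcal{G}$ from the initial state correspond exactly to the \Dz/\Do-representations of the target; thus a \CTDS solution exists iff the product of $\mathcal{G}$ with a Büchi automaton for $e$ is nonempty.

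The main obstacle, and the place needing care, is the space complexity rather than correctness: the number of gaps is at most $b+1$, but $b$ is given in binary, so $\mathcal{G}$ has exponentially many states and cannot be materialised. Instead I would test Büchi non-emptiness of the product on the fly, guessing and verifying a reachable accepting lasso, using space logarithmic in the (exponential) product size and hence polynomial in the input; by Savitch's theorem $\mathrm{NPSPACE}=\mathrm{PSPACE}$, and the Büchi automaton for $e$ and the product construction are standard and PSPACE-bounded \cite{GPVW95,Var96}. The same gap-automaton argument in fact reproves the $n\ge 3$ cases uniformly, but invoking Theorem~\ref{thm:TdsEP} there is more economical.
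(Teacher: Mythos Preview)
Your proposal is correct and follows the same underlying idea as the paper: for $\lambda=\frac{1}{n}$ the base $\beta=n$ is an integer, so (in the language of Lemma~\ref{lem:EP}) $q=1$ and the gap denominators cannot grow, forcing every exploration either to fail or to loop. The paper states this more tersely, simply invoking the classical fact that rationals have eventually-periodic expansions in an integer base \cite{Ren57} and then appealing to Lemma~\ref{lem:EP} and Theorem~\ref{thm:TdsEP}; you instead rederive the bounded-gap property directly from $q=1$.

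Where you go a bit further than the paper's proof is the $n=2$ case for \CTDS. The paper's one-line ``argument analogous to Theorem~\ref{thm:TdsEP}'' technically presupposes $\lambda<\frac12$ (equivalently $\beta>2$), which excludes $\beta=2$; you handle this edge case explicitly by dropping uniqueness, running a nondeterministic exploration, and building a finite gap automaton whose product with a B\"uchi automaton for $e$ is tested for nonemptiness on the fly. This is exactly the mechanism the paper introduces later, in the proof of Theorem~\ref{thm:GTDSEventuallyPeriodic}, to deal with non-unique representations in the generalized setting. So your treatment is not a different route so much as a more careful one at this particular spot, anticipating machinery the paper only develops in Section~\ref{sec:GtdsResults}.
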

\begin{proof}
Reducing \TDS to \TDSzo (Theorem~\ref{thm:TdsToTdszo}), we consider a representation of a target rational number in base $\frac{1}{\lambda} = n$, which is a natural number.
As a representation of a rational number in an integral base is always eventually periodic \cite{Ren57}, we can decide the existence of a representation using Lemma~\ref{lem:EP}.

As for \CTDS, we use an argument analogous to the one provided for Theorem~\ref{thm:TdsEP}.
\end{proof}

\Paragraph{Growth rate of blocks of the same digit}  By Lemma~\ref{lem:EP}, given a rational number, we know whether it has an eventually-periodic representation with only \Dz's and \Do's. However, when it does not have it, we do not know whether it has a non-eventually-periodic representation, or not. We do not even know whether, for the specific base $\frac{5}{2}$, there exists a rational number with a non-eventually-periodic representation that only uses \Dz's and \Do's. 

One can ask a related question from the opposite direction -- given a non-eventually-periodic representation, does it represent a rational number? For example, consider the number $0.101000001...$ in base $\frac{5}{2}$, where \Do's appear exactly at positions $3^n$. Is it a rational number?

In some cases, such as the example above, we know that the answer is negative. We show below that in a representation of a rational number, the number of consecutive \Dz's and consecutive \Do's is bounded by an exponential rate.

\begin{restatable}{lemmaStatement}{GrowthRate}
\label{lem:GrowthRate}
Consider a representation of a rational number $\frac{c}{d}$ in base $\frac{p}{q} > 2$, using only \Dz's and \Do's. Then, i) The first \Do occurs within the first $\log d$ positions, and ii) if there is a \Do in the $n$-th position, then the next \Do occurs at position at most $n(\log q + 1)+\log d$.
\end{restatable}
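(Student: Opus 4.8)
The plan is to run the whole argument in terms of gaps. By Lemma~\ref{lem:UniqueZO} the \Dz/\Do-representation of $\frac{c}{d}$ is unique, so it coincides with the greedy exploration, and I would track the gap sequence $g_0, g_1, g_2, \ldots$ given by $g_0 = \frac{c}{d}$ and $g_m = \beta g_{m-1} - w(m)$, where $w(m)\in\{\Dz,\Do\}$ is the digit in position $m$ and $\beta = \frac{p}{q}$. Two elementary facts about this sequence should do all the work. First, every gap is valid, i.e.\ $0 \le g_m \le \frac{1}{\beta-1}$; the upper bound holds because the largest tail one can still represent is $0.\Do^\omega_{[\beta]} = \frac{1}{\beta-1}$. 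Second, a run of \Dz's simply multiplies the gap by $\beta$: if $w(m)=\Dz$ then $g_m = \beta g_{m-1}$, and since $\beta>2$ this is genuine geometric growth. Parts (i) and (ii) are then both of the form ``a run of equal digits cannot be too long, because the gap would overshoot $\frac{1}{\beta-1}$.''

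The key quantitative ingredient I would establish is a lower bound on any strictly positive gap. By induction on $m$, the claim is that $g_m = \frac{N_m}{q^m d}$ for some integer $N_m$: indeed $g_0 = \frac{c}{q^0 d}$, and if $g_{m-1} = \frac{N_{m-1}}{q^{m-1}d}$ then
\[
g_m = \frac{p}{q}\cdot\frac{N_{m-1}}{q^{m-1}d} - w(m) = \frac{p\,N_{m-1} - w(m)\,q^{m}d}{q^{m}d},
\]
whose numerator is an integer. Consequently, whenever $g_m>0$ its numerator is at least $1$, so $g_m \ge \frac{1}{q^{m}d}$. This is the one place where the arithmetic of the base really enters, and I expect it to be the crux of the lemma; everything else is combining it with the two facts above.

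For part (i), before the first \Do all digits are \Dz, so $g_m = \beta^m\frac{c}{d}$. If the first \Do lies in position $N$, then validity of $g_{N-1}$ (the largest gap of the leading run) gives $\beta^{N-1}\frac{c}{d}\le\frac{1}{\beta-1}$, hence $\beta^{N-1}\le\frac{d}{(\beta-1)c}< d$ using $c\ge1$ and $\beta-1>1$; taking logarithms and using $\beta>2$ yields $N-1<\log d$. For part (ii), suppose $w(n)=\Do$ and let $M$ be the position of the next \Do (if there is none the statement is vacuous). Then $w(n+1)=\cdots=w(M-1)=\Dz$, so $g_{M-1}=\beta^{\,M-1-n}g_n$, and because a further \Do exists the gap $g_n$ is strictly positive, whence $g_n\ge\frac{1}{q^{n}d}$ by the previous step. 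Combining with validity $g_{M-1}\le\frac{1}{\beta-1}$ gives
\[
\beta^{\,M-1-n}=\frac{g_{M-1}}{g_n}\le q^{n}d,
\]
so $M-1-n\le n\log q+\log d$, which rearranges to the claimed bound on $M$.

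The only real subtlety I anticipate is bookkeeping the additive and rounding constants so that the inequalities come out exactly as stated: the choice of base for $\log$ (here $\beta>2$ lets me pass from $\log_\beta$ to $\log_2$ for free), the harmless slack between $\frac{1}{\beta-1}$ and $1$, and a possible off-by-one in counting the length of the run of \Dz's versus the index of the next \Do. I expect these to absorb cleanly, since the substantive content — that runs of identical digits grow at most logarithmically in the denominator $q^m d$, hence gaps between \Do's are bounded linearly in $n$ and logarithmically in $q$ and $d$ — is already delivered by the geometric growth of gaps together with the lower bound $g_m\ge\frac{1}{q^m d}$.
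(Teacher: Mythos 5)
Your proposal is correct and takes essentially the same route as the paper's proof: the paper likewise tracks the greedy-exploration gaps, observes that the gap at position $m$ has denominator at most $q^m d$ (hence any positive gap is at least $\frac{1}{q^m d}$), and plays this off against the geometric growth of the gap through a run of \Dz's, which must stay below a fixed bound for the representation to continue. The only cosmetic difference is that the paper caps gaps by $\frac{q}{p}$ (the threshold beyond which greedy is forced to emit \Do) rather than your $\frac{1}{\beta-1}$, and both arguments carry the same harmless additive slack in the final logarithmic bounds.
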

\begin{proof}
We consider the greedy exploration, and analyze the minimal positive gap that might occur after using a \Do.  Note that having a gap $g=\frac{a}{b}$, and adding a digit $m$, the next gap is $g'= \frac{p}{q} \cdot \frac{a}{b} - m  = \frac{pa - qmb}{qb}$. Thus, the denominator of the gap at position $n$ is smaller than or equal to $d q^n$, while the numerator is, obviously, bigger than or equal to \Do. Hence, the gap is bigger than or equal to $\frac{1}{d q^n}$. 

Now, the maximal gap that has a representation starting with \Dz is smaller than $0.1^\omega_{[\frac{p}{q}]} = \frac{q}{p}$. Since the gap is multiplied by $\frac{p}{q}$ after every \Dz digit, it means that the next \Do digit must occur within $i$ steps, such that $\frac{1}{d q^n} \cdot (\frac{p}{q})^i \geq \frac{q}{p}$. Hence, the next \Do must occur within $\log_{\frac{p}{q}} (\frac{q}{p} \cdot d q^n) < \log  (\frac{q}{p} \cdot d q^n)  < \log  (d q^n)  = n \log q +\log d$ steps.
\end{proof}

We shall call a maximal subword consisting of the same letter a \emph{block}.

\begin{restatable}{theorem}{LongBlocks}
Let $\beta \in \Rat$ and $w \in \{ \Dz, \Do\}^{\omega}$. 
The number $(\Dz.w)_{[\beta]}$ is rational only if 
for every $n$, the block of $w$ starting at the $n$th position is linearly bounded in $n$.
\end{restatable}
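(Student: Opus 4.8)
The plan is to prove the contrapositive-friendly form directly: show that in any $0$-$1$ representation $\Dz.w$ of a rational number in base $\beta>2$, every block — whether of \Dz's or of \Do's — that starts at position $n$ has length linearly bounded in $n$. I restrict to $\beta>2$, the regime in which Lemma~\ref{lem:GrowthRate} and the uniqueness of Lemma~\ref{lem:UniqueZO} apply and which is the case of genuine interest (a base $\beta\le 2$ corresponds to $\lambda\ge\frac12$). Write $\beta=\frac{p}{q}$ in lowest terms and $(\Dz.w)_{[\beta]}=\frac{c}{d}$.

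First I would dispose of the blocks of \Dz's, which is almost immediate from Lemma~\ref{lem:GrowthRate}. A maximal block of \Dz's starting at position $n$ is either the leading block (when $w$ begins with \Dz), whose length is bounded by $\log d$ by part~(i) of the lemma, or it is immediately preceded by a \Do at position $n-1$, in which case part~(ii) guarantees that the next \Do appears no later than position $(n-1)(\log q+1)+\log d$. Subtracting $n$, the block of \Dz's has length at most $(n-1)\log q+\log d-1$, which is linear in $n$.

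The remaining — and genuinely new — step is to bound the blocks of \Do's, about which Lemma~\ref{lem:GrowthRate} says nothing directly, since it only controls gaps between consecutive \Do's. The key idea will be a duality by complementation: consider $\bar w\in\{\Dz,\Do\}^\omega$ with $\bar w(i)=1-w(i)$. By linearity,
\[
(\Dz.\bar w)_{[\beta]}=\sum_{i\ge 1}\bigl(1-w(i)\bigr)\beta^{-i}=\frac{1}{\beta-1}-(\Dz.w)_{[\beta]}=\frac{q}{p-q}-\frac{c}{d},
\]
which is again rational, and $\bar w$ is again a $0$-$1$ sequence. Hence Lemma~\ref{lem:GrowthRate} applies verbatim to $\bar w$ and bounds the lengths of its blocks of \Dz's linearly in their starting position (with constants now governed by the denominator of $\frac{q}{p-q}-\frac{c}{d}$). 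But a block of \Do's of $w$ starting at position $n$ is exactly a block of \Dz's of $\bar w$ starting at position $n$, so these too are linearly bounded in $n$.

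Combining the two cases, every block of $w$ starting at position $n$ has length $O(n)$, which is the assertion; in contrapositive form, a super-linearly growing block forces irrationality, as in the base-$\frac52$ word with \Do's at positions $3^k$. The one point that will require care is the legitimacy of invoking Lemma~\ref{lem:GrowthRate} on $\bar w$: one must check that $\bar w$ really is \emph{the} $0$-$1$ representation of $\frac{q}{p-q}-\frac{c}{d}$ underlying the lemma's greedy-gap analysis, rather than merely some expansion. This is ensured by Lemma~\ref{lem:UniqueZO}, which for $\beta>2$ makes the $0$-$1$ representation of any number unique, so that $\bar w$ necessarily coincides with the greedy exploration to which Lemma~\ref{lem:GrowthRate} refers. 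I expect this duality observation — recognizing that complementation turns \Do-blocks into \Dz-blocks while preserving rationality — to be the main (indeed the only) real obstacle; once it is in place, both cases fall to a single lemma.
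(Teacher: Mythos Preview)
Your proposal is correct and follows essentially the same approach as the paper: bound \Dz-blocks directly by Lemma~\ref{lem:GrowthRate}, then handle \Do-blocks by passing to the complementary word $\bar w$ representing $\frac{1}{\beta-1}-\frac{c}{d}$ and applying the same lemma. You are in fact slightly more careful than the paper in noting that Lemma~\ref{lem:UniqueZO} is what justifies applying the greedy-gap analysis of Lemma~\ref{lem:GrowthRate} to $\bar w$.
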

\begin{proof}
Let $\beta=\frac{p}{q}$ and $w \in \{ \Dz, \Do\}^{\omega}$.
Assume that  $(\Dz.w)_{[\frac{p}{q}]} = \frac{c}{d}$, where $c,d\in\Nat$.
Lemma~\ref{lem:GrowthRate} implies that for every $n$,
the length of a block of \Dz's starting at the $n$th position is linearly bounded.

As for blocks of \Do's, consider the value $t' = (\Dz.\Do^{\omega})_{[\frac{p}{q}]} - (\Dz.w)_{[\frac{p}{q}]} =
\frac{1}{\beta - 1} - \frac{c}{d}$. 
Note that $t'$ is a rational, and its representation is obtained from $w$ by swapping all $\Dz$'s with $\Do$'s and vice versa.
Thus, by applying Lemma~\ref{lem:GrowthRate} on $t'$, for every $n$, the length of a block of \Do's starting at the $n$th position of $w$ is linearly bounded.
\end{proof}

\section{Generalized \TDS\ -- Analysis and Results}\label{sec:GtdsResults}
We extend below the results that were shown for \TDS and \CTDS to the generalized versions \GTDS and \CGTDS, in which there are arbitrarily many weights. The main difficulty here is that there is no analogous to the combination of Theorem~\ref{thm:BiggerThanHalf} and Lemma~\ref{lem:UniqueZO}, stating that a relevant representation is either guaranteed, or else guaranteed to be unique, if it exists. Indeed, allowing enough digits, a number might not have, or may have many, and even infinitely many, different representations in a nonintegral base \cite{EJK94,GS01}. We overcome the problem by using boundedly-branching trees, rather than unique words, and applying K\"onig's Lemma, as well as a few other observations.

We begin with establishing a normal form of \GTDS, generalizing \TDSzo, in which the weights are natural numbers
and the least weight is $0$. 

\begin{restatable}{theorem}{GTdsToGTdszo}
\label{thm:GTdsToGTdszo}
\CGTDS (or \GTDS) polynomially reduces to \CGTDS (or \GTDS) with weights from $\Nat$, where the least weight is $0$.
\end{restatable}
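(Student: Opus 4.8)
The plan is to reduce an arbitrary \CGTDS (or \GTDS) instance to one whose weights are natural numbers with least weight $0$, via two affine transformations applied to the weights and the target, mirroring the first two steps in the proof of Theorem~\ref{thm:TdsToTdszo}. Given an instance with discount factor $\lambda$, weights $a_1, \ldots, a_k$, target $t$, and (in the constrained case) an $\omega$-regular expression $e$, I would first \emph{shift} all weights by subtracting the least weight, say $a_{\min} = \min_i a_i$, to obtain new weights $a_i' = a_i - a_{\min}$, the smallest of which is $0$. As in Theorem~\ref{thm:TdsToTdszo}, subtracting a constant $a_{\min}$ from every position of the sequence changes the discounted sum by exactly $\sum_{i=0}^\infty a_{\min}\lambda^i = \frac{a_{\min}}{1-\lambda}$, so the new target becomes $t' = t - \frac{a_{\min}}{1-\lambda}$. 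This shift preserves membership in the language of $e$, since it is applied uniformly and we may simply relabel the symbols (the constraint $e$ refers to positions/letters, which correspond bijectively under the shift).

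Next I would \emph{scale} the weights to clear denominators and make them integral. The shifted weights $a_1', \ldots, a_k'$ are rationals with $0$ among them; writing them over a common denominator $D$, multiply every weight by $D$ to get integer weights $D a_i'$, still with least weight $0$. Scaling every element of the sequence by the common factor $D$ multiplies the whole discounted sum by $D$, so the target becomes $t'' = D \cdot t'$. Again this is a bijective relabeling of the two (or $k$) symbols, so the $\omega$-regular constraint transfers verbatim. The combination of these two linear maps yields an equivalent instance with weights in $\Nat$ and least weight $0$, establishing the reduction. The reduction is polynomial because the new parameters are computed by elementary arithmetic on the binary representations of the inputs.

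The main subtlety—and the step I expect to require the most care—is the bookkeeping for the constrained (\CGTDS) case: I must verify that the regular expression $e$, which constrains the \emph{symbolic} sequence $w \in \{a_1, \ldots, a_k\}^\omega$, carries over correctly under the weight transformations. Since both the shift and the scale are strictly monotone bijections on the weight set, each original weight $a_i$ corresponds to a unique transformed weight, so $e$ can be rewritten over the new alphabet by the same symbol-renaming without changing the language structure; a sequence satisfies the original constraint iff its image satisfies the renamed one. I would state this explicitly to avoid the appearance that the transformation could merge or split symbols (it cannot, as the maps are injective). With that observation in place, the equivalence of the original and transformed instances—each solution of one maps to a solution of the other with the same membership and the same discounted-sum equality—follows directly, and the \GTDS case is just the special case with $e = (a_1 + \cdots + a_k)^\omega$.
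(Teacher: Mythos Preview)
Your proposal is correct and takes essentially the same approach as the paper: two affine transformations (a shift by the minimum weight and a scaling by a common denominator) applied to weights and target, with the $\omega$-regular constraint carried over by relabeling symbols. The only cosmetic difference is that you shift first and then scale, whereas the paper scales first (by the least common denominator $M$ of the original weights) and then shifts; since both maps are invertible affine transformations on the weight set, the order is immaterial.
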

\begin{proof}
Consider an instance $\P$ of \CGTDS with a discount factor $\lambda$, a target $t$, weights $\Dm{1} < \ldots < \Dm{k} \in \Rat$, and a constraint $e$. 
Let $M$ be the least common denominator of the weights. 
We define a \CGTDS $\P'$ with a discount factor $\lambda$, a target $t'=t\cdot M$, weights $\Dm{1}'=\Dm{1}\cdot M, \ldots, \Dm{k}'=\Dm{k}\cdot M $, and a constraint $e'$ that is the same as $e$, just referring to $\Dm{i}'$ instead of $\Dm{i}$, for all $1\leq i \leq k$. Note that all the weights in $\P'$ are integers. Now, observe that  $\P$ has a solution (resp.\ eventually-periodic solution) iff $\P'$ has a solution (resp.\ eventually-periodic solution), since for every sequence $w$, we have $\sum_{i=0}^{|w|} M \cdot w(i)  \lambda^i = M \cdot \sum_{i=0}^{|w|}  w(i)  \lambda^i$.
Finally, for having the least weight $0$, we subtract $\Dm{1}\cdot M$ from all weights, having the \CGTDS instance $\P''$ with a discount factor $\lambda$, a target $t''=(t\cdot M) - \frac{\Dm{1}\cdot M}{1 - \lambda}$, weights $\Dm{1}''=0, \Dm{2}''=(\Dm{2}-\Dm{1}) \cdot M, \ldots, \Dm{k}''=(\Dm{k}-\Dm{1})\cdot M $, and a constraint $e''$ that is the same as $e$, just referring to $\Dm{i}''$ instead of $\Dm{i}$, for all $1\leq i \leq k$.
Notice that $\P''$ is in normal form and equivalent to $\P$ by an argument analogous to part I in the proof of Theorem~\ref{thm:TdsToTdszo}.
\end{proof}

\begin{exampleE}
Consider a \GTDS instancde with $\lambda = \frac{2}{3}$, weights $\frac{-1}{2}, 0, \frac{1}{2}$ and the target $\frac{-3}{10}$. 
It reduces to a \GTDS instance with the same $\lambda = \frac{2}{3}$, natural weights $0,1,2$ and the target $\frac{12}{5}$.
\end{exampleE}

In the remaining part of this section we will assume that all instances of \GTDS and \CGTDS are in normal form. This allows us to consider the problem in the setting of $\beta$-expansions.

\begin{proposition}
A \GTDS with a discount factor $\lambda$, a target value $t$, and weights $\Dm{1}=0  < \Dm{2} < \ldots < \Dm{k} \in \Nat$ has a solution iff $t$ has a representation in base $\frac{1}{\lambda}$, using only the digits $\Dm{1}, \ldots, \Dm{k}$.
\end{proposition}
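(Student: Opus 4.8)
The plan is to establish a direct, definitional correspondence between solutions of the \GTDS instance and $\beta$-representations of $t$ in base $\frac{1}{\lambda}$, exactly mirroring the reformulation already stated for \TDSzo. Writing $\beta = \frac{1}{\lambda}$, a solution is an infinite sequence $w \in \{\Dm{1}, \ldots, \Dm{k}\}^\omega$ with $\sum_{i=0}^\infty w(i) \lambda^i = t$, which, substituting $\lambda^i = \beta^{-i}$, is precisely $\sum_{i=0}^\infty w(i) \beta^{-i} = t$. The right-hand side is by definition the value $(w(0).w(1)w(2)\cdots)_{[\beta]}$ of the string $w$ read as a $\beta$-expansion whose digits are drawn from the set $\{\Dm{1}, \ldots, \Dm{k}\}$.

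First I would observe that the normal-form assumption (Theorem~\ref{thm:GTdsToGTdszo}) is exactly what makes this reformulation legitimate as a statement about $\beta$-expansions: the weights are natural numbers $\Dm{1} = 0 < \Dm{2} < \cdots < \Dm{k}$, so each $w(i)$ is a genuine nonnegative-integer digit, as required in the $\beta$-expansion setting described in Section~\ref{sec:TdsAsBetaExp}. I would then note that the summation begins at $i=0$, so the digit $w(0)$ sits at position $\beta^0$ (the integer part); this matches the convention in Definition~\ref{def:Gtds} and is the only bookkeeping point to flag, since it differs from the $\lambda^1$-shifted convention adopted for \TDSzo in Definition~\ref{def:TDSzo}. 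The equivalence is then immediate in both directions: a solution $w$ of the \GTDS instance yields, verbatim, a $\beta$-representation of $t$ using only the digits $\Dm{1}, \ldots, \Dm{k}$, and conversely any such representation is, read as a sequence of weights, a solution.

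There is essentially no obstacle here, as this proposition is a restatement rather than a theorem with content; the substance lies entirely in the earlier reduction to normal form. The only point that requires a word of care is convergence: one should note that every such series converges, since the digits are bounded (by $\Dm{k}$) and $0 < \lambda < 1$, so $\sum_{i} w(i)\lambda^i$ is dominated by the convergent geometric series $\Dm{k}\sum_i \lambda^i = \frac{\Dm{k}}{1-\lambda}$. Hence both sides of the claimed equivalence are well-defined, and the proposition follows directly from the definitions.
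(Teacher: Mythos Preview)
Your proposal is correct and matches the paper's treatment: the paper states this proposition without proof, treating it as an immediate definitional reformulation (just as it did for the analogous \TDSzo proposition in Section~\ref{sec:TdsAsBetaExp}). Your explicit unpacking of the correspondence, including the remarks on the index-$0$ convention and on convergence, is entirely appropriate and fills in exactly the kind of detail the paper leaves implicit.
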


With \TDS, Theorem~\ref{thm:BiggerThanHalf} shows that all relevant target numbers have a solution when the discount factor is equal to or bigger than half. This obviously also holds for \GTDS.  Further, having more weights, there are additional cases that guarantee a solution, as characterized below.

\begin{restatable}{theorem}{UnfailingGTDS}
Consider a \GTDS instance $\P$ with a discount factor $\lambda$ and weights $\Dm{1}=0  < \Dm{2} < \ldots < \Dm{k} \in \Nat$.
Then $\P$ has a solution for every target $t \in [ (\Dz.\Dz^{\omega})_{[\frac{1}{\lambda}]},  (\Dz.\Dm{k}^{\omega})_{[\frac{1}{\lambda}]}]$ iff
for every $i \in \{1,\ldots, k-1\}$ we have 
$\Dm{i+1} - \Dm{i} \leq  (\Dz.\Dm{k}^{\omega})_{[\frac{1}{\lambda}]}$. 
\end{restatable}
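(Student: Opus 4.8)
The plan is to fix $\beta = \frac{1}{\lambda} > 1$ and abbreviate $L := (\Dz.\Dm{k}^{\omega})_{[\frac{1}{\lambda}]} = \frac{a_k}{\beta-1}$, so that, since $(\Dz.\Dz^{\omega})_{[\frac{1}{\lambda}]} = 0$, the target interval in the statement is exactly $[0,L]$. I would work entirely with the gap dynamics introduced earlier: starting from the initial gap $g_0 = t$, a choice of digit $m$ sends a gap $g$ to $\beta g - m$, and $t$ admits a $\beta$-representation over the digits $\{a_1,\ldots,a_k\}$ if and only if there is an infinite sequence of digit choices along which every gap stays in $[0,L]$. Indeed, the tail value of any representation lies in $[0,L]$, and conversely a gap sequence confined to $[0,L]$ forces the remainder $t - \sum_{i=1}^{n} w(i)\beta^{-i} = g_n \beta^{-n}$ to tend to $0$. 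The crucial local observation is that from a gap $g \in [0,L]$ a digit $m$ keeps the next gap $\beta g - m$ in $[0,L]$ exactly when $m \in [\beta g - L,\, \beta g] \cap \{a_1,\ldots,a_k\}$.

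Next I would isolate a purely combinatorial covering lemma. As $g$ ranges over $[0,L]$, the quantity $x := \beta g$ ranges over $[0, \beta L] = [0, a_k + L]$, and the window of admissible digits $[x - L,\, x]$ has length exactly $L$. Hence the exploration can be continued from every gap in $[0,L]$ precisely when every window $[x-L,x]$ with $x \in [0, a_k + L]$ contains a digit. I claim this holds iff $a_{i+1} - a_i \le L$ for all $i$. The proof is a short case analysis on the largest digit not exceeding $x$: the endpoint regions $x \in [0,L]$ and $x \in [a_k, a_k+L]$ are always covered by $a_1 = 0$ and by $a_k$ respectively, while for $x$ strictly between $a_i$ and $a_{i+1}$ the only candidate is $a_i$, which lies in the window iff $x \le a_i + L$; letting $x \uparrow a_{i+1}$ shows this succeeds for all such $x$ iff $a_{i+1} \le a_i + L$. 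Conversely, if some $a_{i+1} - a_i > L$, the sub-interval $(a_i + L,\, a_{i+1})$ is non-empty and contained in $[0, a_k + L]$, and none of its windows meets $\{a_1,\ldots,a_k\}$.

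Finally I would assemble the two directions. For the forward implication, assuming $a_{i+1} - a_i \le L$ for all $i$, the covering lemma lets us, from any $t \in [0,L]$, keep choosing digits (for instance greedily) so that all gaps remain in $[0,L]$, and the convergence argument above yields a genuine representation of $t$; thus every target in the interval is solvable. For the converse, assuming some $a_{i+1} - a_i > L$, I would pick any $x \in (a_i + L,\, a_{i+1})$ and set $t := x/\beta$. A direct check gives $t \in (0, L)$, yet $t$ has no representation at all, since the first digit of any representation of $t$ would have to lie in $\{a_1,\ldots,a_k\} \cap [\beta t - L,\, \beta t] = \{a_1,\ldots,a_k\} \cap [x - L,\, x] = \emptyset$. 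This exhibits an unsolvable target inside the interval and completes the contrapositive. I expect the one genuinely delicate point to be this converse direction: one must argue that a \emph{local} dead-end (a single gap admitting no digit) already certifies \emph{global} non-representability of that gap viewed as a target, and separately verify that this bad target lands strictly inside $[0,L]$ rather than on its boundary, where $a_1$ and $a_k$ always rescue the representation. The forward direction and the covering lemma itself are routine.
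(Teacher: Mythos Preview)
Your proposal is correct and follows essentially the same approach as the paper: the backward implication is the greedy-exploration invariant that gaps stay in $[0,L]$ (your covering lemma is precisely the paper's invariant~(*), argued by the same case split on whether the greedy digit is $a_k$ or not), and the forward implication is the same contrapositive, exhibiting an unrepresentable target in the open interval $\bigl((0.a_i a_k^{\omega})_{[\beta]},\,(0.a_{i+1})_{[\beta]}\bigr)$, which after clearing the factor $1/\beta$ is exactly your $t = x/\beta$ with $x \in (a_i + L,\, a_{i+1})$. The only organisational difference is that you isolate the covering lemma as a separate step, whereas the paper folds it directly into each direction.
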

\begin{proof}\

$\Rightarrow$: 
Suppose there exists $i \in \{1,\ldots, k-1\}$, such that $\Dm{i+1} - \Dm{i} >  (\Dz.\Dm{k}^{\omega})_{[\frac{1}{\lambda}]}$. 
Then, $(\Dz.\Dm{i}\Dm{k}^{\omega})_{[\frac{1}{\lambda}]} < (\Dz.\Dm{i+1})_{[\frac{1}{\lambda}]}$. Accordingly, all the numbers that are bigger than $(\Dz.\Dm{i}\Dm{k}^{\omega})_{[\frac{1}{\lambda}]}$ and smaller than $(\Dz.\Dm{i+1})_{[\frac{1}{\lambda}]}$ have no representation in base $\frac{1}{\lambda}$ with the digits $\Dz, \Dm{2}, \ldots, \Dm{k}$.

$\Leftarrow$: Consider an execution of the greedy exploration with gaps on the target number $t$. 
Note that having a gap $g$ and adding a digit $m$, the next gap is $g'= \frac{1}{\lambda} \cdot g - m$.
It suffices to show that the following invariant holds, stating that the gaps are always not too small and not too big:

\noindent (*)~if a gap $g \in [0,  (\Dz.\Dm{k}^{\omega})_{[\frac{1}{\lambda}]}]$, 
and $g' = \frac{1}{\lambda} g - \Dm{i}$, where $\Dm{i}$ is the maximal among $\Dm{1}, \ldots, \Dm{k}$ for which $g' \geq 0$, 
then $g' \in [0,  (\Dz.\Dm{k}^{\omega})_{[\frac{1}{\lambda}]}]$. 

To prove (*) we consider two cases. If $i < k$, then by the maximality of $i$,
$\frac{1}{\lambda} g - \Dm{i+1} < 0$.
It follows that $g' < \Dm{i+1} - \Dm{i} < (\Dz.\Dm{k}^{\omega})_{[\frac{1}{\lambda}]}$.
Therefore, $g' \in [0,  (\Dz.\Dm{k}^{\omega})_{[\frac{1}{\lambda}]}]$.
Otherwise, if $i = k$, $g \leq  (\Dz.\Dm{k}^{\omega})_{[\frac{1}{\lambda}]}$ implies 
$g'  = \frac{1}{\lambda} g - \Dm{k} \leq  (\Dz.\Dm{k}^{\omega})_{[\frac{1}{\lambda}]}$.
\end{proof}

For \TDS, Lemma~\ref{lem:UniqueZO} shows that when the discount factor does not guarantee a solution it guarantees that the solution, if exists, is unique. This no longer holds for \GTDS. Yet, as in the case of \TDS (Proposition~\ref{prop:TDSisCoRE}), \GTDS is also co-recursively-enumerable.

\begin{restatable}{theorem}{GTDSisCoRE}
\label{thm:GTDSisCoRE}
\GTDS is co-recursively-enumerable.	
\end{restatable}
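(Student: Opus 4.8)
The goal is to show that \GTDS is co-recursively-enumerable, i.e., that its complement is recursively enumerable: there is a semi-algorithm that halts and accepts exactly when a given \GTDS instance has \emph{no} solution. The plan is to mimic the structure already used for \TDS in Proposition~\ref{prop:TDSisCoRE}, but to replace the uniqueness-based reasoning (Lemma~\ref{lem:UniqueZO}), which fails here, by a finite-branching search tree together with K\"onig's Lemma. By Theorem~\ref{thm:GTdsToGTdszo} I may assume the instance is in normal form, with integer weights $\Dm{1}=0 < \Dm{2} < \ldots < \Dm{k}$ and base $\beta = \frac{1}{\lambda}$, so that a solution is exactly a $\beta$-representation of the target $t$ using digits $\Dm{1},\ldots,\Dm{k}$.

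First I would set up the search tree of nondeterministic explorations using gaps, as introduced in the ``Gaps'' paragraph. The root is labelled by the initial gap $g_0 = t$. From a node labelled by gap $g$, for each digit $\Dm{j}$ whose successor gap $g' = \frac{1}{\lambda} g - \Dm{j}$ is \emph{eligible} (i.e.\ $0 \le g' \le (\Dz.\Dm{k}^{\omega})_{[\beta]} = \frac{\Dm{k}}{\beta-1}$, the largest value representable from the current position), I add a child labelled $g'$; a branch is cut off (the node becomes a leaf) precisely when no digit is eligible, which means the gap can no longer be completed to a valid representation. Each infinite branch of this tree corresponds exactly to a $\beta$-representation of $t$ with the allowed digits, i.e.\ to a solution of $\P$. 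The branching degree of every node is at most $k$, so the tree is finitely branching.

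The key step is to apply K\"onig's Lemma in contrapositive form: the tree has an infinite branch (equivalently, $\P$ has a solution) if and only if the tree is infinite. Hence $\P$ has \emph{no} solution if and only if the tree is \emph{finite}. This gives the semi-decision procedure for the complement: enumerate the tree in breadth-first order; since it is finitely branching, if it is finite the search will explore it completely and halt, reporting ``no solution.'' Each node can be computed effectively, because every gap stays rational with effectively bounded numerator and denominator (the update $g' = \frac{pc - qmd}{qd}$ keeps gaps rational, and eligibility is a decidable comparison of rationals), and eligibility of each of the finitely many digits is decidable. Therefore the set of no-solution instances is recursively enumerable, which is exactly co-recursive-enumerability of \GTDS.

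The main obstacle I anticipate is justifying the crucial equivalence ``$\P$ has a solution iff the tree is infinite'' in both directions. The easy direction is that an infinite branch yields a solution. The subtler direction is that \emph{every} solution arises as an infinite branch of the tree, which requires checking that along any genuine $\beta$-representation of $t$ the intermediate gaps indeed remain eligible at every step (so that the corresponding node is never prematurely cut off); this follows from the definition of a valid representation, since at any prefix of a representation the residual gap must be non-negative and bounded by the maximal representable tail value $\frac{\Dm{k}}{\beta-1}$. Once this invariant is verified, the finite-branching property and K\"onig's Lemma deliver the result immediately; no delicate estimates beyond the gap-eligibility bounds are needed.
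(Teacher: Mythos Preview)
Your proposal is correct and follows essentially the same approach as the paper: build the finitely-branching tree of all nondeterministic explorations, observe that infinite branches correspond exactly to solutions, and apply K\"onig's Lemma to conclude that a no-solution instance yields a finite tree that can be exhaustively traversed. Your write-up is more detailed than the paper's (you spell out the eligibility bounds, the effectiveness of each step, and the two directions of the solution--infinite-branch correspondence), but the underlying argument is the same.
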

\begin{proof}
Assume that a \GTDS instance $\P$  has no solution and consider a tree consisting of all runs of the nondeterministic exploration. 
Infinite paths in that tree correspond to solutions to $\P$, therefore each failing path is finite. 
Since the exploration tree has a finite degree and all of its paths are failing, K\"onig's Lemma implies that it is finite. 
Therefore, exhausting all the possibilities of the nondeterministic exploration on $\P$, we are guaranteed to stop with the result that $\P$ has no solution.
\end{proof}

We continue with solving \GTDS w.r.t.\  eventually-periodic sequences. The following lemma generalizes Lemma~\ref{lem:EventuallyPeriodicGaps}.

\begin{restatable}{lemmaStatement}{GeneralizedEventuallyPeriodicGaps}
\label{lem:GeneralizedEventuallyPeriodicGaps}
Consider a rational number $t$, a rational base $\beta > 1$, and digits $\Dm{1}, \ldots, \Dm{k} \in \Nat$, for $k\in\Nat$. Then, 
$t$ has an eventually-periodic $\beta$-representation iff there are finitely many different gaps in some run of the corresponding nondeterministic exploration of $t$.
\end{restatable}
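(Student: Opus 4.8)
The plan is to prove a biconditional between the existence of an eventually-periodic representation and the existence of a nondeterministic-exploration run using only finitely many gaps. The overall strategy mirrors the proof of Lemma~\ref{lem:EventuallyPeriodicGaps}, but the crucial change is that uniqueness of the representation no longer holds, so I cannot appeal to the greedy exploration being forced to reproduce \emph{the} representation. Instead the statement is carefully phrased with ``\emph{some} run'' of the nondeterministic exploration, and the entire argument must be carried out relative to a single fixed run.

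For the direction assuming finitely many gaps occur in some run, I would argue exactly as before: along that run the gap takes only finitely many values, so by pigeonhole some gap $g$ recurs at two positions $i < j$. Because, as stressed in the ``Gaps'' discussion, the suffix of the representation from a given position depends only on the current gap and not on the position, the run can be made to repeat the block of digits chosen between positions $i$ and $j$ forever, yielding an eventually-periodic representation. One subtlety I would flag: the nondeterministic exploration may make different choices at the two occurrences of $g$, so I should phrase this as \emph{constructing} an eventually-periodic run that from position $i$ onward always re-selects the digit chosen at step $i$, which is legal precisely because the eligibility of a digit depends only on the gap; this gives a valid run realizing the periodic word $u v^{\omega}$.

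For the converse, assume $t$ has an eventually-periodic $\beta$-representation $u v^{\omega}$. I would run the nondeterministic exploration so that it follows exactly this representation, which is possible since every digit appearing in a genuine representation is eligible at its position. Then, letting $g_p$ be the gap at the start of the periodic part (position $p=|u|$) and $g_{p+|v|}$ the gap one period later, the defining recurrence for gaps together with the observation that the gap equals the value of the remaining suffix in base $\beta$ gives $g_p = (v^{\omega})_{[\beta]} = g_{p+|v|}$. Hence the gap at position $p$ equals the gap at $p+|v|$, and by periodicity this single value recurs at every $p + n|v|$, so the run visits only finitely many distinct gaps (those occurring in the prefix up to $p$ together with the $|v|$ gaps of one period).

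The main obstacle, and the place where generalization beyond Lemma~\ref{lem:EventuallyPeriodicGaps} bites, is that I must everywhere work with a \emph{chosen} run rather than the canonical greedy one, since with digits $\Dm{1},\ldots,\Dm{k}$ the representation need not be unique and the greedy exploration might diverge from the given eventually-periodic representation. The key lemma I rely on is the position-independence of gaps, i.e.\ that a digit $m$ is eligible at a gap $g$ iff the new gap $g' = \beta g - m$ is non-negative and does not exceed the maximal representable suffix value, and that this condition is purely a function of $g$. I would make sure the statement is proved for the stated range $\beta > 1$ (rather than $\beta > 2$), so no step may silently invoke the two-digit uniqueness argument of Lemma~\ref{lem:UniqueZO}; the proof above indeed avoids uniqueness entirely and uses only the recurrence and the suffix-value interpretation of gaps.
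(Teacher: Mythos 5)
Your proof is correct and takes essentially the same route as the paper: the paper proves this lemma simply by declaring it ``analogous'' to Lemma~\ref{lem:EventuallyPeriodicGaps}, noting only that nothing there required $m\in\{0,1\}$, and your argument is precisely that analogy carried out for the nondeterministic exploration. You are in fact more explicit than the paper at the two points where the analogy needs adjustment --- replacing the uniqueness/greedy step of the original forward direction by following the given representation as a legal run (legal since eligibility depends only on the gap), and, in the converse direction, \emph{constructing} an eventually-periodic run at a repeated gap rather than inheriting periodicity automatically, since a nondeterministic run need not repeat its choices.
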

\begin{proof}
Analogous to the proof of Lemma \ref{lem:EventuallyPeriodicGaps}. 
Note that in the proof of Lemma \ref{lem:EventuallyPeriodicGaps} we did not stipulate that $m \in \{0,1\}$ and hence the proof works for all $m \in \Nat$.
\end{proof}

We are now in place to provide the result on eventually-periodic sequences.
\begin{restatable}{theorem}{GTDSEventuallyPeriodic}
\label{thm:GTDSEventuallyPeriodic}
Consider an instance $\P$ of \CGTDS (or \GTDS) with a rational discount factor $\lambda$, a rational target $t=\frac{c}{d}$,  weights $\Dm{1} < \Dm{2} < \ldots < \Dm{k} \in \Nat$, and an $\omega$-regular constraint $e$. Then the problem of whether $\P$ has an eventually-periodic solution is decidable in PSPACE.
Moreover, if there is an eventually-periodic solution to $\P$, then there is also a solution $u v^{\omega}$ with $|u| + |v| \leq 2\cdot d \cdot |e| \cdot \frac{\lambda \Dm{k}}{1 - \lambda}$.
\end{restatable}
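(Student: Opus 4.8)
The plan is to generalize the gap analysis of Lemma~\ref{lem:EP} from the greedy, two-digit exploration to the \emph{nondeterministic} exploration with arbitrarily many digits, and then to take a product with an automaton for the constraint $e$. By Theorem~\ref{thm:GTdsToGTdszo} we may assume $\P$ is in normal form, so that deciding it amounts to asking whether the target $t=\frac{c}{d}$ has a $\beta$-representation ($\beta=\frac{1}{\lambda}$) using the digits $\Dm{1}=\Dz < \ldots < \Dm{k}$ that, in addition, satisfies $e$. By Lemma~\ref{lem:GeneralizedEventuallyPeriodicGaps}, $t$ has an eventually-periodic representation iff some run of the nondeterministic exploration visits only finitely many gaps; recall that a gap $g$ evolves under a chosen digit $m$ by $g'=\beta g-m$, and that the suffix of the representation depends only on the current gap.

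First I would bound the set of gaps that can occur in an eventually-periodic run. The two claims in the proof of Lemma~\ref{lem:EP} use only the gap recursion $g'=\frac{pc-qmd}{qd}$ and never rely on $m\in\{\Dz,\Do\}$, so they transfer verbatim: (i) if a gap $\frac{c}{d}$ in lowest terms has $q\nmid c$, then the exponent of some prime factor of $q$ in the denominator strictly increases at the next step and keeps increasing, so no gap can ever repeat; and (ii) as long as the numerators stay divisible by $q$, every gap keeps the fixed denominator $d$. Hence in any eventually-periodic run all gaps lie in $\{\frac{x}{d}\}$. A gap is \emph{feasible} (admits a continuation) only if $g\in[0,(\Dz.\Dm{k}^{\omega})_{[\beta]}]=[0,\frac{\lambda\Dm{k}}{1-\lambda}]$, since $\frac{\lambda\Dm{k}}{1-\lambda}$ is the largest value representable from any position. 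Consequently the number of feasible gaps of denominator $d$ is at most $d\cdot\frac{\lambda\Dm{k}}{1-\lambda}+1$.

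Next I would incorporate the constraint. Convert $e$ into a B\"uchi automaton $\A_e$ with $O(|e|)$ states \cite{GPVW95,Var96}, and form the product whose nodes are pairs (feasible gap, state of $\A_e$) and whose edges choose a digit $m$, update the gap by $g'=\beta g-m$, and take the corresponding transition of $\A_e$. An eventually-periodic solution satisfying $e$ corresponds exactly to an accepting \emph{lasso} in this product: a stem from the initial node $(t,\text{ initial state})$ to some node, followed by a cycle that returns to that node and visits an accepting state of $\A_e$. The forward direction uses that a solution $uv^{\omega}$ induces a periodic sequence of gaps (the suffix depends only on the gap) together with an accepting run of $\A_e$, which a pigeonhole argument on the product states collapses into a single lasso; the backward direction unrolls the lasso into $uv^{\omega}$. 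Since a lasso may be taken with stem and loop each no longer than the number of product states, we obtain a solution $uv^{\omega}$ with $|u|+|v|\le 2\cdot\bigl(d\cdot\frac{\lambda\Dm{k}}{1-\lambda}+1\bigr)\cdot|e|\le 2\,d\,|e|\,\frac{\lambda\Dm{k}}{1-\lambda}$, as claimed.

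Finally, for the PSPACE bound I would observe that although the product has exponentially many nodes (as $d$ and $\Dm{k}$ are given in binary), each node is representable in polynomial space — a gap is $\frac{x}{d}$ with $0\le x\le d\cdot\frac{\lambda\Dm{k}}{1-\lambda}$, and an $\A_e$-state is polynomial in $|e|$ — and the gap update and automaton transition are polynomial-time computable. Searching for an accepting lasso can therefore be done nondeterministically while storing only a constant number of nodes and counters bounded by the product size, i.e.\ in nondeterministic polynomial space; Savitch's theorem yields PSPACE. The main obstacle I anticipate is the forward direction of the lasso correspondence: one must argue that an arbitrary accepting run of $\A_e$ over $uv^{\omega}$, whose automaton states need not be synchronized with the period of the gaps, can be folded — via pigeonhole on the finitely many product states — into a single loop that simultaneously closes the gap, returns the automaton to the same state, and witnesses the acceptance condition.
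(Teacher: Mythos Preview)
Your proposal is correct and follows essentially the same approach as the paper: bound the feasible gaps to those with denominator $d$ and value in $[0,\frac{\lambda\Dm{k}}{1-\lambda}]$, view this gap graph as a (safety/B\"uchi) automaton, take its product with a B\"uchi automaton for $e$, and check nonemptiness in PSPACE via Savitch. The obstacle you flag about folding an accepting run into a lasso is dissolved by this framing, since the product $\C$ is itself an $\omega$-regular automaton, and a nonempty B\"uchi automaton always contains an ultimately periodic word of length bounded by its number of states---so no ad hoc pigeonhole argument on synchronizing gap periods with automaton periods is needed.
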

\begin{proof}
In the case of $\GTDS$, the decision procedure is analogous to the proof of Lemma~\ref{lem:EP}, with the following differences: i) Rather than considering the presentation generated by the greedy exploration, we consider some representation generated by the nondeterministic exploration; ii) The first gap is $t \cdot \lambda$ rather than $t$, i.e., we multiply the target by $\lambda$ 
to make it consistent with the calculations made for \TDSzo, where the summation starts with $i=1$; iii) The upper bound on relevant gaps is $\frac{\lambda \Dm{k}}{1 - \lambda}$; 
and iv) A decision is nondeterministically reached within $\lfloor d \cdot \frac{\lambda \Dm{k}}{1 - \lambda} \rfloor$ steps.
Hence, the procedure runs in nondeterministic polynomial space in the binary representations of $\lambda, c, d, \Dm{1}, \ldots, \Dm{k}$. By 
Savitch's Theorem, from which we have PSPACE = NPSPACE, we get the PSPACE complexity.

As for the generalization to a \CGTDS instance $\P'$, having a \GTDS problem $\P$ as above and an $\omega$-regular constraint $e$, we cannot assume, as opposed to the case of Theorem~\ref{thm:TdsEP}, a unique solution to $\P$ -- there are possibly infinitely many different solutions, some of which satisfy $e$ and some do not. Thus, we cannot just check the membership of some solution in the language of $e$, as is done in the proof of Theorem~\ref{thm:TdsEP}.

Yet, we can have a nondeterministic B\"uchi automaton $\A$ (which is even, after the removal of dead-end states, a safety automaton, namely a B\"uchi automaton all of whose states are accepting) that recognizes all eventually-periodic solutions of $\P$, as well as some non-eventually periodic solutions, while possibly not all of them: The states of $\A$ are all the gaps along the exploration that have denominator $d$ and value up to $\frac{\lambda \Dm{k}}{1 - \lambda}$ -- once the gap is bigger than $\frac{\lambda \Dm{k}}{1 - \lambda}$, it cannot lead to a solution, and once the gap's denominator is different from $d$, it cannot lead to an eventually-periodic solution (as shown in the proof of Lemma~\ref{lem:EP}). Hence, $\A$ has up to $\lfloor d \cdot \frac{\lambda \Dm{k}}{1 - \lambda} \rfloor$ states.

Now, let $\B$ be a nondeterministic B\"uchi automaton that recognizes the language of $e$ (which is of size up to twice the length of $e$), and $\C$ a nondeterministic B\"uchi automaton for the intersection of $\A$ and $\B$ (which is of size up to $|\A|\cdot|\B|$). We can check in time linear in the size of $\C$ whether it is empty, and conclude that there is an eventually-periodic solution to $\P'$ iff $\C$ is not empty. Indeed, even though $\A$ also allows for non-eventually-periodic solutions, due to the $\omega$-regularity of $\C$, it is guaranteed that if $\C$ is not empty, then there is an eventually-periodic word in its language, whose prefix and period together are of length at most $|\C|$.
\end{proof}

\begin{exampleE}[for the proof of Theorem~\ref{thm:GTDSEventuallyPeriodic}]
Consider the \CGTDS instance with a discount factor $\lambda=\frac{2}{3}$, a target $t=\frac{12}{5}$,
weights $a=0$, $b=1$, and $c=2$, and an $\omega$-regular constraint 
$e = \Sigma^*(\Sigma a)^\omega$, namely eventually the letter a occurs at every second position.
We follow the proof of Theorem~\ref{thm:GTDSEventuallyPeriodic} for checking whether it has an eventually-periodic solution. 

We start with building the safety automaton $\A$ of the relevant gaps along the exploration of $t$ (see Figure~\ref{fig:EP}):
The first gap is $t \cdot \lambda = \frac{8}{5}$. The minimal relevant gap is $0$, the maximal relevant gap is 
$\frac{\lambda c}{1-\lambda} = \frac{\frac{2}{3} \cdot 2}{\frac{1}{3}} = 4$, and all relevant gaps have denominator $5$ (as otherwise they cannot be continued to an eventually-periodic solution). 
Moreover, following the argument in the proof of Lemma~\ref{lem:EP}, 
we can omit further exploration of gaps with odd numerator  
as all its continuations will have denominators growing to infinity. 
\begin{figure}
\begin{tikzpicture}[semithick,initial text=, every initial by arrow/.style={|->},state/.style={circle, draw, thick, minimum size=0.9cm,inner sep=0,outer sep=0}]
\tikzset{good/.style={accepting}}
\tikzset{tooBig/.style={fill=red!20}}
\tikzset{odd/.style={fill=orange!20}}

\node[anchor=west]  (A) at (-1.0,4.5) {\Large$\A$};
\node[anchor=west]  (A_TooBig) at (-1.0,3.9) {{\color{red} Red gaps}: Too big/small};
\node[anchor=west]  (A_Odd) at (-1.0,3.5) {{\color{orange} Orange gaps}: Odd numerator};

\node[state,initial, good] (Eps) at (0,0) {$\frac{8}{5}$}; 
\foreach \n/\x/\y\l/\f in {
						a/2/2/{\frac{12}{5}}/good,
						aa/4/2.75/{\frac{18}{5}}/good,
						ab/4/1.25/{\frac{13}{5}}/odd,
						aaa/6/4.25/{\frac{27}{5}}/tooBig,
						aab/6/2.75/{\frac{22}{5}}/tooBig,
						aac/6/1.25/{\frac{17}{5}}/odd,
						b/2/0/{\frac{7}{5}}/odd,
						c/2/-2/{\frac{2}{5}}/good,
                        ca/4/-0.5/{\frac{3}{5}}/odd,
                        cb/4/-2/{\frac{-2}{5}}/tooBig,
                        cc/4/-3.5/{\frac{-7}{5}}/tooBig}
{
	\node[state,\f] (\n) at (\x,\y) {$\l$}; 
}

\foreach \from/\to/\lab in {Eps/a/a,
                              Eps/b/b, 
                              Eps/c/c,
                              a/aa/a,
                              aa/aaa/a,
                              aa/aab/b,
                              aa/aac/c,
                              a/ab/b,
                              Eps/c/c,
                              c/ca/a,
                              c/cb/b,
                              c/cc/c}
{
	\draw[->] (\from) to node[above]{\lab} (\to) ;
}

\draw[->,bend right] (a) to node[above]{c} (Eps) ;

\begin{scope}[xshift=9cm,yshift=2cm]
	
\node[anchor=west]  (A) at (-0.75,2.5) {\Large$\A$};
\node[anchor=west]  (A_TooBig) at (-0.75,1.9) {After removing dead-end states};
	
\node[state, initial above,accepting] (q0) at (0,0) {$\frac{8}{5}$};
\node[state,accepting] (q1) at (0,-3) {$\frac{12}{5}$};

\draw[->,bend left] (q0) to node[left]{a} (q1) ;
\draw[->,bend left] (q1) to node[right]{c} (q0) ;

\end{scope}

\begin{scope}[xshift=0cm,yshift=-6cm]
\node[anchor=west]  (B) at (-1.0,1.8) {\Large$\B$};

\node[state,initial] (q0) at (0,0) {$q_0$};
\node[state] (q1) at (2,0) {$q_1$};
\node[state,accepting] (q2) at (4,0) {$q_2$};

\draw[->, loop above] (q0) to node[above]{a,b,c} (q0) ;
\draw[->,bend left] (q0) to node[above]{a,b,c} (q1) ;
\draw[->,bend left] (q1) to node[above]{a,b,c} (q2) ;
\draw[->,bend left] (q2) to node[below]{a} (q1) ;
\end{scope}

\begin{scope}[xshift=7cm,yshift=-5cm]
\node[anchor=west]  (C) at (-1.0,0.8) {\Large$\C$};
	
\node[state,initial] (q0A) at (0,0) {$q_0,\frac{8}{5}$};
\node[state] (q0B) at (0,-2) {$q_0,\frac{12}{5}$};
\node[state] (q1A) at (2.5,0) {$q_1,\frac{8}{5}$};
\node[state] (q1B) at (2.5,-2) {$q_1,\frac{12}{5}$};
\node[state,accepting] (q2A) at (5,0) {$q_2,\frac{8}{5}$};
\node[state,accepting] (q2B) at (5,-2) {$q_2,\frac{12}{5}$};

	\draw[->,bend right] (q0A) to node[left]{a} (q0B);
	\draw[->,bend right] (q0B) to node[left]{c} (q0A);
	
	\draw[->, bend left=10] (q0B) to node[above, yshift=-10pt, xshift=-10pt ] {c} (q1A);
	\draw[->,bend left=10] (q0A) to node[below, yshift=0pt,xshift=0pt ]{a} (q1B);
	
	\draw[->,bend right=20] (q1B) to node[below]{c} (q2A);
	\draw[->, bend left] (q1A) to node[above,yshift=5pt,xshift=-10pt]{a} (q2B);
	
	\draw[->,bend right=20] (q2A) to node[above,yshift=-5pt,xshift=-5pt]{a} (q1B);
	
\end{scope}

\end{tikzpicture}

\caption{The automata of the example for the proof of Theorem~\ref{thm:GTDSEventuallyPeriodic}: The safety automaton $\A$ for the exploration of the relevant gaps (stopping the exploration for too big/small gaps and for gaps with odd numerators), the B\"uchi automaton $\B$ expressing the $\omega$-regular constraint, and the B\"uchi automaton $\C$ for the intersection of $\A$ and $\B$.}
\label{fig:EP}

\end{figure}
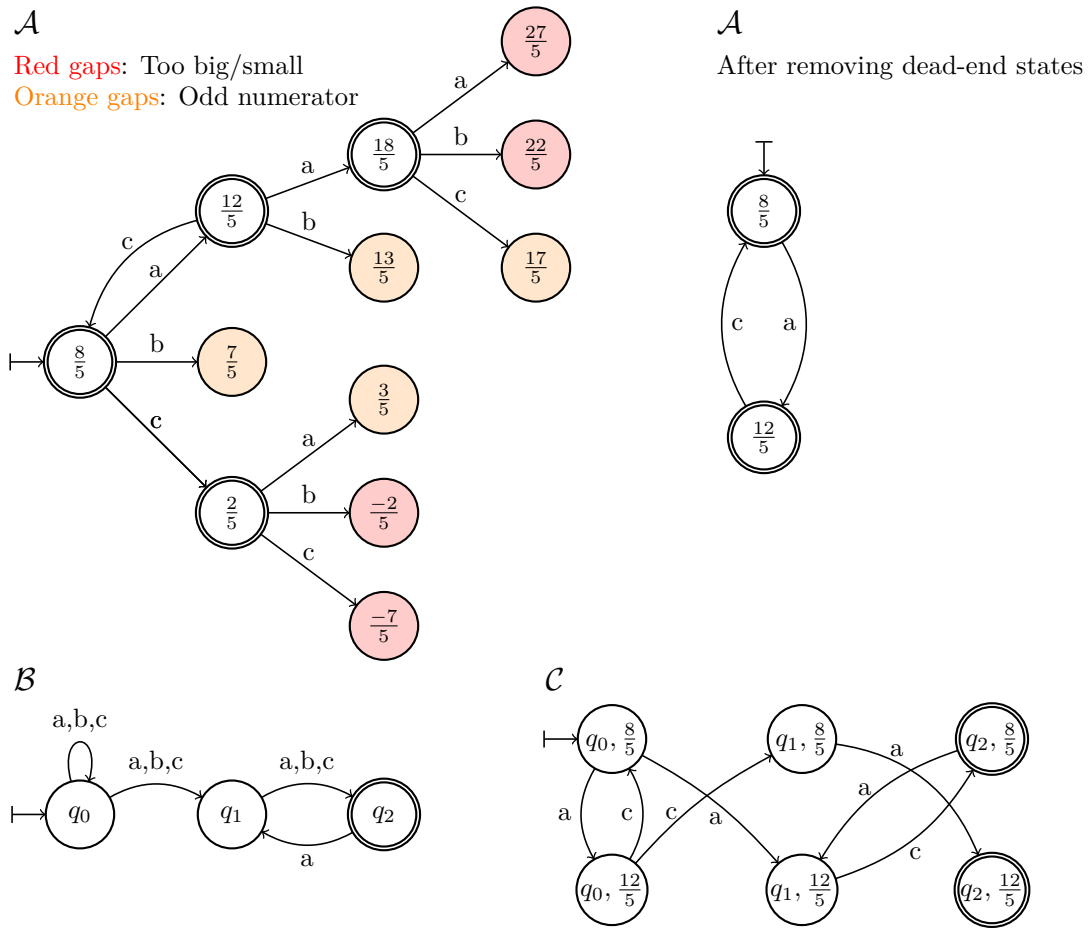

Next, we build  the B\"uchi automaton $\B$ expressing the constraint $e$, and the B\"uchi automaton $\C$ for the intersection of $\A$ and $\B$.
Notice that the intersection of a safety automaton and a B\"uchi automaton is the B\"uchi automaton 
derived from their product, in which we can also ignore dead-end states.

Finally, we check for the nonemptiness of $\C$: In this case, it is not empty, having the word $w=a(ca)^\omega$, which is an eventually-periodic solution to the considered \CGTDS instance. 
Observe that indeed, $(\Dz.(20)^{\omega})_{[\frac{1}{\lambda}]}$ is the required target $\frac{12}{5}$.
\qed
\end{exampleE}

Notice that in contrast to the above example, eventually-periodic solutions to $\CGTDS$ need not be unique. For example, 
consider $\lambda = \frac{1}{2}$ and weights $0,1,2$. Then, 
$ 
\Dz.(10)^{\omega})_{[\frac{1}{\lambda}]} = \Dz.(02)^{\omega})_{[\frac{1}{\lambda}]}
$.

As a corollary of Lemma~\ref{lem:GeneralizedEventuallyPeriodicGaps}, we get the decidability for the case that the discount factor is of the form $\frac{1}{n}$ for $n\in\Nat$.

\begin{restatable}{theorem}{GTDSIntegralDenominator}
\CGTDS (or \GTDS) is in PSPACE for every discount factor $\lambda$ of the form $\frac{1}{n}$ where $n$ is a natural number.
\end{restatable}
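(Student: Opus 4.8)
The plan is to reduce, once more, to the question of eventually-periodic solutions, which Theorem~\ref{thm:GTDSEventuallyPeriodic} already decides in PSPACE, and then to argue that for an integral base \emph{no other kind of solution can occur}. First I would put the instance into normal form via Theorem~\ref{thm:GTdsToGTdszo}, so that the base $\beta=\frac{1}{\lambda}=n$ is a natural number and the weights are naturals $\Dm{1}=0<\Dm{2}<\cdots<\Dm{k}$. The crucial structural observation concerns the gap recurrence $g'=n g-m$ of the nondeterministic exploration: writing a gap as $g=\frac{c}{d}$, adding a digit $m$ yields $g'=\frac{nc-md}{d}$, whose denominator divides $d$. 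Thus, in sharp contrast to the non-integral case analyzed in Lemma~\ref{lem:EP}, the denominator of the gap never grows along \emph{any} exploration.

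Next I would combine this with the standard value bound: a negative gap is dead, and a gap exceeding $\frac{\lambda \Dm{k}}{1-\lambda}$ cannot be completed to a solution, since even the all-$\Dm{k}$ suffix is then too small. Since every visited gap lies in $[0,\frac{\lambda \Dm{k}}{1-\lambda}]$ and has bounded denominator, there are only finitely many gaps (at most $\lfloor d\cdot\frac{\lambda \Dm{k}}{1-\lambda}\rfloor+1$) that any run can ever visit. Consequently, if an infinite solution exists at all, the corresponding run uses finitely many different gaps, so by Lemma~\ref{lem:GeneralizedEventuallyPeriodicGaps} the target $t$ admits an eventually-periodic representation. For plain \GTDS this already closes the argument: a solution exists iff an eventually-periodic one does, and the latter is decidable in PSPACE by Theorem~\ref{thm:GTDSEventuallyPeriodic}.

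For \CGTDS the constraint $e$ forbids us from freely rerouting a solution, so I would instead appeal to the automaton construction inside the proof of Theorem~\ref{thm:GTDSEventuallyPeriodic}. The point specific to the integral base is that the safety automaton $\A$, whose states are exactly the finitely many valid gaps, now recognizes \emph{all} solutions of the unconstrained problem, and not merely the eventually-periodic ones, precisely because the denominator-preservation above prevents any solution's gaps from escaping this finite set. Taking the product $\C$ of $\A$ with a B\"uchi automaton $\B$ for $e$ therefore produces an automaton whose language is exactly the set of constrained solutions. I would then test $\C$ for nonemptiness in PSPACE, and since a nonempty $\omega$-regular language always contains a lasso word, the existence of any constrained solution is equivalent to the existence of an eventually-periodic one.

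The step I expect to be the main obstacle is this last \CGTDS case, and specifically the justification that $\A$ captures \emph{all} solutions rather than only the eventually-periodic ones: that is exactly where integrality of $\frac{1}{\lambda}$ is indispensable, since the denominator-preservation of the gap recurrence is what confines every solution, periodic or not, to the finite state space of $\A$. This is precisely the upgrade that turns the ``eventually-periodic decidability'' of Theorem~\ref{thm:GTDSEventuallyPeriodic} into a complete decision procedure under an arbitrary $\omega$-regular constraint.
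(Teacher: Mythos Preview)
Your proposal is correct and follows essentially the same approach as the paper: both hinge on the observation that for an integral base the gap recurrence $g'=ng-m$ preserves the denominator, so only finitely many gaps can occur, and then both defer to the machinery of Theorem~\ref{thm:GTDSEventuallyPeriodic}. Your write-up is in fact more explicit than the paper's, particularly in spelling out for \CGTDS that the safety automaton $\A$ now captures \emph{all} solutions rather than only the eventually-periodic ones, which is precisely the point the paper's terse ``analogously to the arguments'' gestures at.
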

\begin{proof}
Consider an instance of $\GTDS$ with a discount factor $\lambda = \frac{1}{n}$ and a target 
number $t = \frac{c}{d}$.
Observe that a nondeterministic exploration can only visit gaps whose denominator is $d$.
Indeed, if a gap $g = \frac{a}{d}$ then the next gap, after using a digit $m$, is $g' = \frac{a n}{d} - m$, also having the denominator $d$.

Therefore, analogously to the arguments given in the proof of Theorem~\ref{thm:GTDSEventuallyPeriodic}, we get a bound on the gap numerators, implying decidability in PSPACE of both the \GTDS and the \CGTDS instances.
\end{proof}

We conclude the section with the result on the decidability of the generalized target discounted-sum problem over finite words.

\begin{restatable}{theorem}{GTDSFinite}
\label{thm:GTDSFinite}
Consider an instance $\P$ of \CGTDSF (or \GTDSF) with a regular constraint $e$, and a rational discount factor $\lambda=\frac{p}{q}$, a target $t=\frac{c}{d}$, and weights $\Dm{1} < \Dm{2} < \ldots < \Dm{k} \in \Rat$ with a common denominator $m\in\Nat$. Then the problem of whether $\P$ has an eventually-periodic solution is decidable in PSPACE.
Moreover, if there is a solution to $\P$, then there is also a solution of length up to $4\cdot d(q-p) \cdot (|e|+2) \cdot \max(|\Dm{k}|,|\Dm{1}|) \cdot \frac{\lambda\cdot m}{1-\lambda}$.

\end{restatable}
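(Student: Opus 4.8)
The plan is to reduce the finite problem to the infinite one already solved in Theorem~\ref{thm:GTDSEventuallyPeriodic}, and then to track the parameters carefully through the chain of reductions so as to recover the stated length bound. First I would apply Theorem~\ref{t:FiniteTdsToCgtds} to turn the \CGTDSF instance $\P$ into a \CGTDS instance $\P_1$ over infinite words: add a weight $\Dz$ if none is present, and replace the constraint $e$ by the $\omega$-regular expression $e\cdot\Dz^{\omega}$. Every infinite solution of $\P_1$ then has the shape $w\Dz^{\omega}$ with $w\in L(e)$, hence is eventually periodic, and its discounted sum coincides with that of the finite word $w$; conversely each finite solution $w$ of $\P$ yields the infinite solution $w\Dz^{\omega}$. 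Thus $\P$ has a (finite) solution iff $\P_1$ has an eventually-periodic solution, which is decidable in PSPACE by Theorem~\ref{thm:GTDSEventuallyPeriodic}.

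The one subtlety I must respect is the \emph{order} of the reductions. Theorem~\ref{thm:GTDSEventuallyPeriodic} requires weights in $\Nat$ with least weight $\Dz$, so I need the normal-form reduction of Theorem~\ref{thm:GTdsToGTdszo}, which multiplies all weights by their common denominator $m$ and then subtracts the least weight. The subtraction changes the discounted sum by the fixed offset $\frac{\text{(least weight)}}{1-\lambda}$ only for \emph{infinite} sequences; for a finite sequence the offset is $(\text{least weight})\cdot\frac{1-\lambda^{|w|}}{1-\lambda}$, which depends on the length and breaks the reduction (this is precisely the caveat noted after Theorem~\ref{thm:TdsToTdszo}). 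This is exactly why I perform the reduction to infinite words \emph{first}, and only then normalize $\P_1$ into an equivalent \CGTDS instance $\P_2$ in normal form.

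It then remains to push the parameters of $\P$ through these two reductions into the bound $2\,d'\,|e'|\,\frac{\lambda\Dm{k}'}{1-\lambda}$ supplied by Theorem~\ref{thm:GTDSEventuallyPeriodic}, where $d',\,e',\,\Dm{k}'$ are the target denominator, constraint size, and largest weight of $\P_2$. Appending $\Dz^{\omega}$ costs a constant, giving $|e'|\le|e|+2$. Multiplying by $m$ and subtracting the least weight $L=\min(0,m\Dm{1})$ turns the weights into integers in $[0,\,2m\max(|\Dm{k}|,|\Dm{1}|)]$, so $\Dm{k}'\le 2m\max(|\Dm{k}|,|\Dm{1}|)$; and since $1-\lambda=\frac{q-p}{q}$ and $L$ is an integer, the target becomes $\frac{cm(q-p)-Lqd}{d(q-p)}$, whose denominator divides $d(q-p)$, giving $d'\le d(q-p)$. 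Substituting these three estimates into $2\,d'\,|e'|\,\frac{\lambda\Dm{k}'}{1-\lambda}$ yields exactly $4\,d(q-p)(|e|+2)\max(|\Dm{k}|,|\Dm{1}|)\,\frac{\lambda m}{1-\lambda}$, the claimed bound.

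Finally I would translate the eventually-periodic witness back to a finite word: the solution $uv^{\omega}$ produced by Theorem~\ref{thm:GTDSEventuallyPeriodic} lies in $L(e\cdot\Dz^{\omega})$ (up to the relabeling of weights induced by normalization), which forces the period $v$ to consist only of the image of $\Dz$; the nonemptiness witness is thus a lasso whose loop is the $\Dz$-self-loop at gap $0$, and its non-looping prefix is a finite solution $w$ of $\P$ with $|w|\le|u|+|v|$, hence within the stated bound. I expect the main obstacle to be precisely this bookkeeping — guaranteeing that the normal-form step is applied on the infinite side, and that each parameter blow-up (the $+2$ in $|e|$, the factor $2m$ in the largest weight, and the factor $q-p$ in the target denominator) lands so that the constants combine into exactly the factor $4$; the decidability and PSPACE membership themselves are immediate from Theorem~\ref{thm:GTDSEventuallyPeriodic}.
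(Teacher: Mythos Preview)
Your proposal is correct and follows essentially the same route as the paper: reduce the finite instance to an infinite \CGTDS instance via Theorem~\ref{t:FiniteTdsToCgtds} (appending $\Dz^{\omega}$ and possibly a $\Dz$ weight), then normalize via Theorem~\ref{thm:GTdsToGTdszo}, and finally invoke Theorem~\ref{thm:GTDSEventuallyPeriodic} and chase the parameters to obtain the stated bound. Your explicit discussion of why the normalization must come \emph{after} passing to infinite words, and your final paragraph translating the lasso witness back to a finite solution, are more detailed than the paper's own proof, which leaves both points implicit.
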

\begin{proof}
By the proof of Theorem~\ref{t:FiniteTdsToCgtds}, an instance $\P$ of \CGTDSF can be reduced to an instance $\P'$ of \CGTDS, having the same input as $\P$, except for possibly adding a $0$ weight, and concatenating to the constraint $e$ the suffix $0^\omega$ of length 2. 

By the proof of Theorem~\ref{thm:GTdsToGTdszo}, an instance $\P'$ of \CGTDS can be reduced to an instance $\P''$ of \CGTDS in normal form, where the new target is 
$t''= (t\cdot m) - \frac{\Dm{1}\cdot m}{1 - \lambda} = 
\frac{m (c(q-p) - d(\Dm{1}\cdot q))}{d(q-p)}$ 
and the new highest weight is $\Dm{k}''=(\Dm{k}-\Dm{1})m$ if $\Dm{1}<0$ and $\Dm{k}>0$, or $\Dm{k}''=|\Dm{k}|m$ if $\Dm{1}\geq 0$ and $\Dm{k}>0$, or $\Dm{k}''=|\Dm{1}|m$ if $\Dm{1}<0$ and $\Dm{k}\leq0$, implying that in any case $\Dm{k}''\leq 2\max(|\Dm{k}|,|\Dm{1}|) \cdot m$.

Hence, the PSPACE decidability and the bound on a shortest witness directly follow from Theorem~\ref{thm:GTDSEventuallyPeriodic}.
\end{proof}

%

\section{Results on Discounted-Sum Automata}\label{sec:DSA}

In this section, we establish the connection between \TDS and discounted-sum automata, and use our results about \TDS for solving some of the latter's open problems. In particular, we solve the exact-value problem for nondeterministic automata over finite words and the universality and inclusion problems for functional automata.

We start with the definitions of discounted-sum automata and their related problems.

\Paragraph{Discounted-sum automata} 
A \emph{discounted-sum automaton} (\DSA) is a tuple $\A = \tuple{\Sigma, Q, q_{in}, Q_F, \delta, \gamma, \lambda}$ over a finite alphabet $\Sigma$, with a finite set of states $Q$, an initial state $q_{in}\in Q$, a set of accepting states $Q_F \subseteq Q$, a transition function $\delta \subseteq Q \times \Sigma \times Q$, a weight function $\gamma: \delta\to \Rat$, and a rational discount factor $0 < \lambda < 1$.

A run of an automaton on a word $w = \sigma_1 \sigma_2 \ldots$ is a sequence of states and letters, $q_0, \sigma_1, q_1, \sigma_2, q_2, \ldots$, such that $q_0=q_{in}$ and for every $i$, $(q_i,\sigma_{i+1},q_{i+1})\in\delta$. The length of a run $r$, denoted by $|r|$, is $n$ for a finite run $r = q_0, \sigma_1, q_1, \ldots, \sigma_n, q_n$, and $\infty$ for an infinite run. 
A finite run of an automaton $\A$ is \emph{accepting} if the last state of $r$ belongs to $Q_F$.
In the infinite case, we assume that every run is \emph{accepting} (and $Q_F$ is irrelevant).

The value of a run $r$ is $\gamma(r) = \sum_{i=0}^{|r|-1} \lambda^i \cdot {\gamma(q_i,\sigma_{i+1},q_{i+1})}$. The value of a word $w$ (finite or infinite) is $\A(w) = \inf \{\gamma(r) \ST \mbox{$r$ is an accepting run of $\A$ on $w$}
\}$. 

A \DSA $\A$ over finite words is said to be \emph{functional} if for every word $w$, 
all accepting runs of $\A$ on $w$ have the same value \cite{functional}. (Notice that functional automata are less general than nondeterministic ones, while more general than unambiguous and deterministic ones.)

\Paragraph{Decision problems}
Given \DSAs $\A$ and $\B$ and a value $t\in\Rat$, 
\begin{itemize}
\item the \emph{exact-value} problems asks whether there exists a word $w$ such that $\A(w) = t$,
\item the \emph{$<$-universality} (resp. \emph{$\leq$-universality}) problem asks whether for every word $w$ we have  $\A(w) < t$ (resp., $\A(w) \leq t$).
\item the \emph{$<$-inclusion} (resp. \emph{$\leq$-inclusion}) problem asks whether for every word $w$ we have  $\A(w) < \B(w)$ (resp., $\A(w) \leq \B(w)$).
\end{itemize}

Next, we establish the connection between the target discounted-sum problem and the above decision problems. 

\Paragraph{Results for finite words}
Our techniques for resolving the target discounted-sum problem over finite words directly relate to the exact-value problem:

\begin{restatable}{theorem}{ExactValue}
\label{thm:ExactValue}
The exact-value problem for functional (as well as unambiguous and deterministic) discounted-sum automata is decidable in PSPACE.
\end{restatable}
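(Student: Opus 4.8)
The plan is to reduce the exact-value problem to \CGTDSF and then invoke Theorem~\ref{thm:GTDSFinite}. The first step is to reformulate exact value as a pure reachability-of-value question. For a general nondeterministic \DSA and a finite word $w$ there are only finitely many accepting runs, so $\A(w)=\inf\{\gamma(r)\}$ is in fact a minimum; hence $\A(w)=t$ would require both an accepting run of value exactly $t$ and the absence of any accepting run of smaller value. The point of functionality is that the second requirement becomes vacuous: since all accepting runs of $\A$ on $w$ share the same value, an accepting run of value $t$ forces every accepting run to have value $t$, whence $\A(w)=t$. Consequently, for functional $\A$, the exact-value instance is a ``yes'' instance iff there exists a finite accepting run $r$ of $\A$ (on some word) with $\gamma(r)=\sum_{i=0}^{|r|-1}\lambda^i\,\gamma(q_i,\sigma_{i+1},q_{i+1})=t$.

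The second step is to encode ``accepting run of value $t$'' as a \CGTDSF instance. I would take the weights $\Dm{1}<\cdots<\Dm{k}$ to be the distinct rational values in the image of $\gamma$; being finitely many rationals, they share a common denominator, as required by Theorem~\ref{thm:GTDSFinite}. The sequence of weights read along an accepting run is a word over $\{\Dm{1},\ldots,\Dm{k}\}$, and its discounted sum, starting at $\lambda^0$, is exactly $\gamma(r)$, matching the \GTDSF summation convention. The set $L$ of weight-sequences arising from accepting runs of $\A$ is regular: relabelling each transition $(q,\sigma,q')$ of $\A$ by its weight $\gamma(q,\sigma,q')$ turns $\A$ into a nondeterministic finite automaton $\A'$ over the alphabet $\{\Dm{1},\ldots,\Dm{k}\}$, with the same states and the same accepting states, and $L(\A')=L$. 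Since $\A'$ has size polynomial in $\A$, using $\A'$ (or an equivalent regular expression) as the constraint yields a \CGTDSF instance, of size polynomial in the input, whose solutions are precisely the weight-sequences of accepting runs of value $t$. Thus the \CGTDSF instance is solvable iff the exact-value instance is a ``yes''.

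The final step is to apply Theorem~\ref{thm:GTDSFinite}, which decides \CGTDSF in PSPACE, giving the claimed bound for functional automata. Deterministic and unambiguous automata are special cases (each word has, respectively, at most one and exactly one accepting run), so they are in particular functional and the same reduction applies. I expect the main obstacle to be the first step: one must argue carefully that functionality is exactly what removes the ``no cheaper run'' obligation that makes the nondeterministic exact-value problem hard, and that a minimum rather than a genuine infimum is attained because finite words admit only finitely many runs. The rest is bookkeeping — verifying that the relabelling is a letter-to-letter morphism preserving regularity with only a polynomial blow-up, and that the summation-index conventions of \GTDSF line up with the run-value definition.
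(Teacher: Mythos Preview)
Your proposal is correct and follows essentially the same approach as the paper: reduce the exact-value question, via functionality, to the existence of an accepting path of value $t$, encode this as a \CGTDSF instance, and invoke Theorem~\ref{thm:GTDSFinite}. The only cosmetic difference is in the encoding of the constraint: the paper takes the alphabet to be the set of \emph{transition names} (explicitly allowing several distinct letters to carry the same weight) so that each word of $\A'$ is literally a path of $\A$, whereas you take the alphabet to be the set of \emph{distinct weight values} and project $\A$ onto it; both yield an equivalent regular constraint of polynomial size, and the reduction goes through identically.
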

\begin{proof}
Consider a functional \DSA $\A$ with discount factor $\lambda$, a target value $t\in\Rat$, and the problem $\P$ of whether there exists a finite word $u$ accepted by $\A$ with the value $t$.

Notice that by the functionality of $\A$, such a word $u$ exists iff there exists some accepting path of $\A$ (i.e., a path that starts from the initial state and ends in an accepting state) with value $t$. Indeed, if there is no such path then obviously there is no such word, while if there is such a path, over some word $u$, then since all accepting paths over $u$ have the same value, we get that the value of $\A$ on $u$ is $t$.

Let $\A'$ be the deterministic finite-state automaton (DFA) that is derived from $\A$ by ignoring the alphabet letters and weights, and setting the alphabet letter over each transition to be the name of the transition. Notice that the language of $\A'$ consists of all the accepting paths of $\A$.

Now, consider the \CGTDSF $\P'$ with discount factor $\lambda$, target value $t$, and regular constraint $\A'$, whose weight-letters are the transitions names in $\A$ and the weight of each letter is the weight of the transition in $\A$. (Notice that there may be several weight-letters with the same weight.)

Observe that every solution to $\P'$ provides a path of $\A$ with the value $t$ and vice versa.
By Theorem~\ref{thm:GTDSFinite}, if there is a solution to $\P'$ then there is a solution of length polynomial in the input. Hence, we can guess in polynomial space a solution, getting that the problem is in NPSPACE, and by Savitch's Theorem also in PSPACE.
\end{proof}

Notice that the proof of Theorem~\ref{thm:ExactValue} does not hold for general nondeterministic discounted-sum automata, since one cannot consider only the existence of an accepting path with the required value, but should rather consider all possible paths of the automaton on each word.

Continuing with functional automata, it is shown in \cite{functional} that the non-strict versions of the inclusion and universality problems are decidable in PTIME. They leave the strict versions of these problems as an open question. Our result about the exact-value problem provides an immediate solution to these open problems.

\begin{restatable}{theorem}{FuncInc}
\label{thm:FunctionalInclusion}
The inclusion and universality problems of functional discounted-sum automata are decidable in PSPACE.
\end{restatable}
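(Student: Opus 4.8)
The plan is to reduce the strict problems to their non-strict counterparts together with exact-value (equality) queries, exploiting that for each word the relation $\A(w) < v$ is equivalent to the conjunction $\A(w) \leq v \wedge \A(w) \neq v$. Since the non-strict versions are already known to be decidable in PTIME~\cite{functional}, it remains to decide the ``equality'' parts, and this is exactly where Theorem~\ref{thm:ExactValue} enters.

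For universality against a rational target $t$, I would observe the pointwise identity $\A(w) < t \iff (\A(w) \leq t \wedge \A(w) \neq t)$, which, quantified over all words, yields
\[
(\forall w:\ \A(w) < t) \iff (\forall w:\ \A(w) \leq t)\ \wedge\ \neg(\exists w:\ \A(w) = t).
\]
The first conjunct is the $\leq$-universality problem, decidable in PTIME by~\cite{functional}; the second conjunct is the negation of the exact-value problem, which is in PSPACE by Theorem~\ref{thm:ExactValue}. Since PSPACE is closed under complementation and under taking conjunctions of decision procedures, $<$-universality is in PSPACE, and $\leq$-universality is inherited directly from~\cite{functional}.

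For inclusion of two functional \DSAs $\A$ and $\B$ sharing the discount factor $\lambda$, the same decomposition gives
\[
(\forall w:\ \A(w) < \B(w)) \iff (\forall w:\ \A(w) \leq \B(w))\ \wedge\ \neg(\exists w:\ \A(w) = \B(w)).
\]
The $\leq$-inclusion conjunct is again PTIME by~\cite{functional}. To decide the equality query I would pass to the synchronized product $\C$ of $\A$ and $\B$ whose accepting states are pairs of accepting states and whose transition weights are the differences of the corresponding weights in $\A$ and $\B$. Because both automata are functional, every accepting run of $\C$ over a word $w \in L(\A) \cap L(\B)$ evaluates to $\A(w) - \B(w)$, so $\C$ is itself a functional \DSA with discount factor $\lambda$, and $\exists w:\ \A(w) = \B(w)$ holds on the common language iff the exact-value problem for $\C$ with target $0$ has a positive answer. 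Theorem~\ref{thm:ExactValue} then puts this in PSPACE, whence $<$-inclusion is in PSPACE as well.

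The main obstacle is the inclusion case: one must verify that the product construction preserves functionality and genuinely realizes the difference $\A(w) - \B(w)$ as a single discounted sum, which relies on the two automata sharing the discount factor $\lambda$ and on a careful treatment of the $+\infty$ convention for words outside $L(\A) \cap L(\B)$. For such words finite equality of values cannot occur (a word accepted by exactly one automaton gives one finite and one infinite value), so restricting the equality query to the intersection language is sound; confirming this, and setting the acceptance condition of $\C$ so that it captures exactly $L(\A) \cap L(\B)$, are the points that need the most care.
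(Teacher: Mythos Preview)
Your proposal is correct and follows essentially the same approach as the paper: both decompose the strict problems into their non-strict counterparts (handled by~\cite{functional}) plus an exact-value query resolved via Theorem~\ref{thm:ExactValue}, and both handle inclusion by passing to a functional difference automaton $\C$ with $\C(w)=\A(w)-\B(w)$. The only cosmetic difference is that the paper first reduces $<$-inclusion to $<$-universality of $\C$ with threshold $0$ and then invokes the universality argument, whereas you decompose $<$-inclusion directly into $\leq$-inclusion plus the exact-value query on $\C$; these amount to the same computation.
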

\begin{proof}
The $\leq$-inclusion and $\leq$-universality problems are shown in \cite{functional} to be in PTIME. 
A solution to the $<$-universality problem then directly follows from Theorem~\ref{thm:ExactValue}: Given a functional \DSA $\A$ and a threshold value $t$, for every finite word $w$ $\A(w)<t$ iff for every finite word $w$ we have $\A(w)\leq t$, and there is no word $w$ such that $\A(w)=t$.

As for the $<-inclusion$ problem, given functional \DSAs $\A$ and $\B$, one can construct in PSPACE a functional \DSA $\C$, such that for every word $w$ we have $\C(w)=\A(w)-\B(w)$ \cite{functional}. Then, the $<-inclusion$ problem for $\A$ and $\B$ reduces to the $<$-universality problem for $\C$ and the threshold $0$.
\end{proof}

\Paragraph{Results for infinite words}

In the case of infinite words, \TDS reduces to the $<$-universality problem, which in turn reduces to the $<$-inclusion problem. As for the other direction, there is a partial implication: given a \DSA $\A$ and a threshold $t$, one can define a corresponding \GTDS $\P$, such that an answer that $\P$ has no solution would provide a decision procedure for the universality question with respect to $\A$ and $t$. We show these connections below.

\begin{restatable}{lemmaStatement}{Automata}
\label{lem:TdsDsa}
For every instance $\P$ of \TDSzo with a discount factor $\lambda$ and a target value $t$, one can compute
in polynomial time a discounted-sum automaton $\A$, such that $\P$ has a solution iff $\A$ is not universal.
\end{restatable}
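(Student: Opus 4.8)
The plan is to realize the inherently two-sided condition ``$V(u):=\sum_{i\ge 1} u_i\lambda^i$ equals $t$'' through a single one-sided $<$-universality test (the version of non-universality used in the paper's reduction), by exploiting the infimum semantics of a small nondeterministic \DSA. I would read the candidate solution $u\in\{0,1\}^\omega$ directly as the input word, identifying $u_i$ with $w(i)$ and using the \TDSzo\ convention $w(0)=0$ to absorb the index shift between the automaton's first transition (discounted by $\lambda^0$) and the sum, which starts at $\lambda^1$. The automaton will have two families of runs whose values are two affine functions of $V(u)$, and the infimum will form their lower envelope.

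First I would describe a \emph{main} run that ignores the first letter (weight $0$) and then copies each input bit as its transition weight, so that its value is exactly $V(u)=\sum_{i\ge 1}u_i\lambda^i$. Second, I would add an \emph{offset} family: the first transition emits the constant weight $2t$, after which, at each step, the automaton may either continue with weight $-u_i$ or nondeterministically switch to a $0$-weight sink. A run that subtracts for $n$ steps has value $2t-S_n$, where $S_n=\sum_{i=1}^n u_i\lambda^i$, and the run that never switches has value $2t-V(u)$; since $S_n\le V(u)$, the infimum of this family is attained and equals $2t-V(u)$. Taking the infimum over all runs then yields
\[
\A(u)=\min\bigl(V(u),\,2t-V(u)\bigr),
\]
a ``tent'' that never exceeds $t$ and attains $t$ precisely when $V(u)=t$.

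Finally I would conclude: $\A$ is $<$-universal (that is, $\A(u)<t$ for every $u$) iff no input satisfies $V(u)=t$, which by the reformulation of \TDSzo\ as a $\beta$-representation question is exactly the statement that $\P$ has no solution; equivalently, $\A$ is \emph{not} universal iff $\P$ has a solution. The automaton has a constant number of states and weights of size polynomial in the input, so it is computable in polynomial time.

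I expect the main obstacle to be conceptual rather than computational: comparing a discounted sum against a threshold is inherently one-sided, whereas hitting $t$ exactly is two-sided, and naive attempts to check both the ``$V\le t$'' and ``$V\ge t$'' sides founder on the fact that an offset introduced after reading a length-$n$ prefix is itself discounted by $\lambda^n$ and so cannot serve as a fixed, scale-independent threshold. The key that makes the tent work is to place the compensating constant $2t$ on the very first (undiscounted) transition, so that the offset family computes $2t-S_n$ at the correct scale. The one remaining point to verify carefully is that for a genuine solution \emph{no} run dips below $t$ (so that the infimum is exactly $t$, not smaller); this follows because $S_n\le V(u)=t$ forces every offset value $2t-S_n\ge t$ while the main run contributes exactly $t$.
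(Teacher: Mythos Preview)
Your construction is correct and rests on the same core idea as the paper's proof: build a small nondeterministic \DSA\ with two branches whose values are affine functions of $V(u)$ with opposite slopes, so that the infimum is a ``tent'' that peaks exactly when $V(u)=t$, turning the two-sided equality test into a one-sided $<$-universality test.

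The execution differs in two cosmetic ways that are worth noting. First, the paper distributes the offset evenly across all steps rather than placing a lump sum $2t$ on the first transition: each letter carries weight $u_i - t(1-\lambda)$ in one branch and $-(u_i - t(1-\lambda))$ in the other, so the two run values are exactly negatives of one another and the threshold becomes $0$; this symmetry makes the construction slightly cleaner (two states with self-loops, no separate first step) and sidesteps the ``scale-independent offset'' worry you raised, since the per-step contribution $-t(1-\lambda)$ is already at the right scale at every position. Second, your ``switch to a $0$-sink'' option is unnecessary: the single never-switching offset run already has value exactly $2t - V(u)$, so the infimum of that family is attained without the extra branches (they only add larger values $2t - S_n \ge 2t - V(u)$). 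Dropping the sink gives essentially the paper's automaton after the affine change of threshold $t \mapsto 0$.
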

\begin{proof}
First, we claim that the \TDSzo instance $\P$ is equivalent to the \TDS instance $\P'$ with $a=-t(1-\lambda)$, $\beta=1-t(1-\lambda)$ and $t'=0$. Indeed, subtracting $t(1-\lambda)$ from every element in a discounted-sum sequence $w\in\{0,1\}^\omega$, provides a discounted-sum sequence $w'\in\{-t(1-\lambda),1-t(1-\lambda)\}^\omega$, such that $w'=w- t(1-\lambda)  \sum_{i=0}^\infty \lambda^i = w - \frac{t(1-\lambda)}{1-\lambda} = w - t$. 

Consider the discounted-sum automaton $\A$ of Figure~\ref{fig:DSA} over the alphabet $\{a,b\}$. The automaton $\A$ has exactly two runs over a word $w$ -- a run $r_1$ that is solely in $q_1$ and a run $r_2$ that is solely in $q_2$. The value that $\A$ assigns to $w$, denoted by $\A(w)$, is the minimum between the value of $r_1$ on $w$, denoted by $r_1(w)$ and the value of $r_2$ on $w$, denoted by $r_2(w)$. Note that, by the automaton weights to $a$ and $b$, we have for every word $w$ that $r_1(w) = -r_2(w)$. Hence, for every word $w$, $\A(w)=0$ iff $r_1(w)=0$. 

Now, the infinite universality problem of $\A$ asks whether for all infinite words $w$, we have $\A(w)<0$. Thus, the answer to the universality problem is ``no'' iff there is a sequence $w\in\{a,b\}^\omega$ such that $r_1(w)=0$, which is true iff the answer to the given \TDSzo problem $P$ is ``true''.
\end{proof}


\begin{figure}[h!]
\centering\input{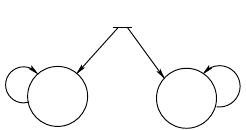_t} \caption{The discounted-sum automaton, with a discount factor $\lambda$, whose $<$-universality problem is equivalent to a \TDSzo instance with a discount factor $\lambda$ and a target value $t$.}\label{fig:DSA}
\end{figure}

\begin{restatable}{theorem}{TdsDsa}
\label{thm:TdsDsa}
If \TDS is undecidable then so are the universality and inclusion problems of discounted-sum automata over infinite words.
\end{restatable}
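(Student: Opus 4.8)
The plan is to propagate undecidability along reductions that are already in place, moving from \TDS through \TDSzo to the non-universality problem, and then from universality to inclusion. Throughout, the key bookkeeping point is that a reduction from $X$ to $Y$ transports \emph{decidability} of $Y$ to $X$, so every implication in the chain must be read contrapositively.

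First I would invoke Theorem~\ref{thm:TdsToTdszo}, which reduces \TDS to \TDSzo. Reading this contrapositively: if \TDS is undecidable, then \TDSzo is undecidable as well. Next I would apply Lemma~\ref{lem:TdsDsa}, which computes in polynomial time, from a \TDSzo instance $\P$ with discount factor $\lambda$ and target $t$, a \DSA $\A$ such that $\P$ has a solution iff $\A$ is \emph{not} ($<$-)universal. This is precisely a reduction from \TDSzo to the complement of the $<$-universality problem. Since the decidable problems are closed under complementation, a decision procedure for $<$-universality would yield one for its complement and hence, via the lemma, for \TDSzo. Taking the contrapositive, undecidability of \TDSzo forces undecidability of the $<$-universality problem over infinite words.

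Finally, for inclusion I would reduce $<$-universality to $<$-inclusion. Given a \DSA $\A$ and a threshold $t$, it suffices to construct a \DSA $\B$ that assigns the constant value $t$ to every infinite word: a single-state automaton whose self-loop on each letter carries weight $t(1-\lambda)$ does this, since $\sum_{i=0}^\infty \lambda^i \, t(1-\lambda) = t$. Then for every word $w$ we have $\A(w) < t$ iff $\A(w) < \B(w)$, so $\A$ is $<$-universal with respect to $t$ exactly when $\A$ is $<$-included in $\B$. Consequently the undecidability of $<$-universality transfers to $<$-inclusion, and both the universality and inclusion problems over infinite words are undecidable whenever \TDS is.

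The whole argument is a routine composition of reductions, so I do not anticipate a genuine obstacle. The only delicate points are the two mentioned above: keeping the directions of the implications straight (reductions carry decidability, not undecidability, so each step is used via its contrapositive), and noting that Lemma~\ref{lem:TdsDsa} delivers a reduction to \emph{non}-universality, which is harmless precisely because decidability is preserved under complementation. All the substantive mathematical content is already discharged by Theorem~\ref{thm:TdsToTdszo} and Lemma~\ref{lem:TdsDsa}.
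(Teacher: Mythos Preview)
Your proposal is correct and follows the same approach as the paper, which simply states that the result ``directly follows from Lemma~\ref{lem:TdsDsa}.'' You have spelled out explicitly what the paper leaves implicit: the preliminary passage through Theorem~\ref{thm:TdsToTdszo}, the complementation step, and the standard reduction from universality to inclusion via a constant-value automaton.
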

\begin{proof}
	Directly follows from Lemma~\ref{lem:TdsDsa}.
\end{proof}

\Skip{Given Theorem~\ref{thm:TdsDsa}, one may wonder whether the opposite direction also holds. For example, whether the positive result regarding \TDS with $\lambda \geq \frac{1}{2}$ can be translated into solving the universality problem of discounted-sum automata with $\lambda \geq \frac{1}{2}$. Unfortunately, this is not the case, due to the additional power of discounted-sum automata, as shown in Example~\ref{exm:RegExp}. 
Still, based on our results on \GTDS we can derive new decidability results for 
restricted some variants of the universality problem for discounted-sum automata.
} 

The next theorem demonstrates the close connection between the target discounted-sum problem and the core difficulty in solving the universality problem. 
\begin{restatable}{theorem}{TdsNegative}
\label{thm:TdsNegative}
Consider an instance $\P$ of \GTDS with a discount factor $\lambda$, weights $\Dm{1} ,\ldots, \Dm{k}$, and a target value $t\in\Rat$. 
If $\P$ has no solution, then the $<$-universality and $\leq$-universality problems with the threshold $t$ are 
decidable over the class of finite-words (resp., infinite-words) \DSA with a discount factor $\lambda$ and weights from $\{\Dm{1} ,\ldots, \Dm{k}\}$.
\end{restatable}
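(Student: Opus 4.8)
The plan is to abstract every run of a candidate automaton by the sequence of \emph{gaps} it induces toward the threshold $t$, to use the hypothesis ``$\P$ has no solution'' to prove that this gap abstraction is \emph{finite}, and then to read off universality as a finite automata-theoretic question on the product of the automaton with the gap abstraction. For the infinite-word case I would first pass, via Theorem~\ref{thm:GTdsToGTdszo}, to the normal form in which the weights are natural numbers with least weight $0$; the same affine change applied uniformly to all transition weights of a \DSA and to $t$ preserves the sign of $\gamma(r)-t$ on every infinite run, and hence preserves both universality questions. Over finite words this shift is length-dependent and therefore not available, so there I keep the general weights $\Dm{1}<\dots<\Dm{k}$ and the general gap window $[L,U]$ with $L=(\Dz.\Dm{1}^{\omega})_{[\frac{1}{\lambda}]}$ and $U=(\Dz.\Dm{k}^{\omega})_{[\frac{1}{\lambda}]}$.

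The key structural step uses the gap recursion $g' = \frac{1}{\lambda} g - \Dm{i}$. A gap $g>U$ forces the value of the run to remain strictly below $t$ forever, a gap $g<L$ forces it to remain strictly above $t$ forever, and a direct computation shows that each of the two regions $\{g>U\}$ and $\{g<L\}$ is forward-invariant under the recursion for every weight $\Dm{i}$. Staying inside $[L,U]$ at all positions is exactly a solution of $\P$, so by Theorem~\ref{thm:GTDSisCoRE} the nondeterministic gap-exploration tree of $t$ over the weights $\Dm{1},\dots,\Dm{k}$ is finite; moreover a repeated gap inside $[L,U]$ would yield an eventually-periodic solution by Lemma~\ref{lem:GeneralizedEventuallyPeriodicGaps}, so only finitely many distinct gaps lie in $[L,U]$. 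This produces a computable finite set $G$ of ``live'' gaps together with a depth bound after which every weight sequence has permanently left the window and committed its value to being $>t$ or $<t$; running the exploration to compute $G$ terminates precisely because $\P$ has no solution.

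I would then form the finite product of the \DSA $\A$ with the gap abstraction whose states are $G\cup\{\top_{>t},\top_{<t}\}$, the two extra symbols recording the committed regions $\{g<L\}$ and $\{g>U\}$; a transition advances the gap by $g'=\frac{1}{\lambda}g-\Dm{i}$ and collapses to the appropriate sink once the window is left. Since the infimum defining $\A(w)$ is attained by a run (standard for discounted sums over a finite automaton), and since over infinite words no run can have value exactly $t$ (such a run is a solution of $\P$), for infinite words we get $\A(w)>t$ iff every run of $\A$ on $w$ is driven into $\top_{>t}$; hence $<$- and $\leq$-universality coincide and both hold iff no word forces all runs into $\top_{>t}$, a $\forall$-runs reachability property decidable on the finite product by a subset construction. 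Over finite words a run may halt while still live, so its exact gap in $G$ is consulted: refuting $\leq$-universality (resp.\ $<$-universality) amounts to the existence of an accepted word all of whose accepting runs end with gap $<0$ (resp.\ $\leq 0$), which is again a finite computation on the product via the same subset construction, separating the values $>t$, $=t$ and $<t$ exactly since gaps are kept as exact rationals.

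The step I expect to be the main obstacle is the finiteness of $G$ together with the correct treatment of the boundary case $\sup_w \A(w)=t$. One must argue precisely that an infinite weight sequence keeps its gap in $[L,U]$ forever if and only if it is a solution of $\P$, so that the no-solution hypothesis combined with K\"onig's Lemma bounds both the set of live gaps and the depth at which a run's fate is sealed, while values approaching $t$ from below without ever reaching it are exactly those whose runs eventually enter $\top_{<t}$ and therefore satisfy universality. Once this is in place, absorbency of the two committed regions makes the reduction to finite automata routine, and the finite-word subtlety (a run stopping inside the window) is handled by retaining the exact live gaps in $G$.
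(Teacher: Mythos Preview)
Your approach is essentially the paper's: both argue that since $\P$ has no solution, K\"onig's Lemma makes the nondeterministic gap-exploration tree finite (with some height $H$), after which every weight sequence has irrevocably committed to lying above or below $t$, and universality is then decided by running $\A$ in parallel with this finite gap abstraction. You are more explicit than the paper about the product/subset construction and about why the normal-form shift of Theorem~\ref{thm:GTdsToGTdszo} is length-dependent and hence unavailable over finite words; the paper's proof is terse on this point.

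One small gap in your finite-word treatment: the claim that $g<L$ commits the partial sum to being strictly above $t$ at every subsequent \emph{finite} step silently uses $L\le 0$, i.e.\ $\Dm{1}\le 0$; forward-invariance of $\{g<L\}$ holds regardless, but $g<L$ with $L>0$ does not determine the sign of $g$, which is what you need for a finite run's value. When $\Dm{1}>0$ the fix is to track exact gaps on the wider window $[0,U]$ rather than $[L,U]$: the region $\{g<0\}$ is then forward-invariant (all weights being positive), gaps in $[0,L)$ can only stay there or drop below $0$, and an infinite sequence whose gaps remain bounded in $[0,U]$ is still a solution of $\P$, so K\"onig still yields a finite live set; the case $\Dm{k}<0$ is symmetric.
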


\begin{proof}
Let $\P'$ be the \GTDS problem over natural weights and a target $t'$ that is equivalent to $\P$ (Theorem~\ref{thm:GTdsToGTdszo}).
Consider the tree consisting of all runs of the explorations of $t'$ in base $\frac{1}{\lambda}$, as described in Sections \ref{sec:TdsResults} and \ref{sec:GtdsResults}. 
Since $\P'$ is known to have no solution, each path of the tree is finite. As the degree of every node in the tree is also finite, we get by K\"onig's Lemma that the tree is finite, having some height $H$. 

Now, recall that each run of the exploration stops in one of four cases: i) reaching the target value; ii) repeating a previous gap, which guarantees an eventually-periodic solution; iii) reaching a gap that is too big to be recovered, implying that every continuation of this prefix will be below the threshold; and iv) exceeding the threshold,  implying that every continuation of this prefix will be above the threshold.

As $\P'$ is known to have no solution, all the exploration runs are guaranteed to stop for either reason (iii) or (iv) above. In both cases, all the (finite and infinite) continuations of this finite sequence of weights is known to provide a discounted-sum that is either above or below the threshold.

Hence, the finite and infinite universality problems can be decided by exploring the above finite tree in parallel to running the \DSA $\A$ over input words of length up to $H$. For each such word, we can tell whether all of its continuations are above or below the threshold $t$.
\end{proof}

\section{Connections to Other Areas}\label{sec:Connections}
In this section, we show the connection between \TDS and various areas in mathematics and computer science. In particular, we show that if \TDS is undecidable then so is the reachability problem of one-dimensional piecewise affine maps, and from the other direction, if \TDS is decidable then so is the membership problem of a middle-$k$th Cantor set.

\subsection{Piecewise Affine Maps}\label{sec:PAM}

We shall show that \TDS can be reduced to the reachability problem of one-dimensional piecewise affine maps.

A \emph{Piecewise affine map} (\PAM) of dimension $d$ is a function $f:\Reals^d\to\Reals^d$, where the domain is partitioned into disjoint sections, and from each section there is a different affine map. That is, each affine map is $f(x)=a_i x+b_i$, where $a_i$ and $b_i$ are vectors of dimension $d$, specific to the $i^{th}$ section, and $a_i$ is non-zero. 

The \emph{reachability problem} asks, given a \PAM $f$, and points $s$ and $t$, whether there exists $n\in\Nat$, such that $f^n(s)=t$. The problem is undecidable for 2, or more, dimensions \cite{KCG94,Moo90,Moo91}. It is open for one dimension, even when only having two sections \cite{AMP95,AG02,KPC08}. (For a single section, meaning when the mapping is affine but not ``piecewise'', the reachability problem is known as the ``orbit problem'', and it is decidable for all dimensions over the rationals \cite{KL86,Sha79}.) 

Next, we establish the connection between \TDS and the above reachability problem.

\begin{restatable}{lemmaStatement}{PAMs}
	\label{lem:TdsPam}
For every instance $\P$ of \TDSzo with a discount factor $\lambda < \frac{1}{2}$ and a target value $t$, one can compute 
a one-dimensional \PAM $f$,  such that $\P$ has a solution iff $1$ is not reachable from $t$ via $f$.
\end{restatable}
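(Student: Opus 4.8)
The plan is to realize the greedy $\beta$-exploration of $t$ (with $\beta=\frac{1}{\lambda}>2$, since $\lambda<\frac12$) as the orbit of a one-dimensional \PAM, and to make the point $1$ an absorbing target that the orbit hits exactly when the exploration fails. By the reformulation of \TDSzo as a $\beta$-expansion and by Lemma~\ref{lem:UniqueZO}, for $\beta>2$ the $\{0,1\}$-representation of $t$, if it exists, is unique, so the greedy exploration is a \emph{single deterministic} trajectory on the reals. Working with gaps, the exploration sends a gap $g$ to $\beta g$ when the greedy digit is $0$ (i.e.\ $g<\lambda$) and to $\beta g-1$ when it is $1$ (i.e.\ $g\ge\lambda$). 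I therefore take as the backbone of $f$ the affine pieces $x\mapsto\beta x$ on $[0,\lambda)$ and $x\mapsto\beta x-1$ on $[\lambda,V]$, where $V=\frac{1}{\beta-1}=\frac{\lambda}{1-\lambda}$ is the largest representable gap, and set the start point $s=t$. Both pieces have nonzero slope $\beta$.

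The first substantive step is to characterize solutions dynamically: I would show that $t$ has a solution iff the greedy orbit of $t$ stays in $[0,V]$ forever. The key observations are that from the digit-$1$ region $[\lambda,V]$ the image never exceeds $V$, and that the orbit can leave $[0,V]$ \emph{only} from the removed-middle interval $(\lambda V,\lambda)$, on which the digit is $0$ and the image $\beta g$ lands in $(V,1)$. This matches the three exploration scenarios and immediately gives one direction of the claim: since $\lambda<\frac12$ forces $V<1$, a confined (solution) orbit stays below $1$ and never attains the value $1$. Note also that the greedy orbit always keeps $g\ge 0$, so negative values are never visited and may be assigned any nonzero-slope affine map.

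The crux is the converse -- encoding \emph{``the orbit escapes $[0,V]$''} as \emph{``the orbit reaches exactly the point $1$''}. Because reaching any value above $V$ already certifies that no solution exists, I am free to redefine $f$ on the escape region $(V,\infty)$ without affecting confined orbits. The idea is to place $1$ (which lies just above $V$) as a fixed point of $f$ and to equip $(V,\infty)$ with a tailored affine funnel driving every escaping trajectory onto $1$. Here lies the \textbf{main obstacle}: a single affine piece with nonzero slope cannot collapse an interval onto the point $1$, and an affine fixed point is reached exactly only from its finitely many preimages, so no fixed map can send the whole first-escape interval $(V,1)$ exactly to $1$. The resolution must exploit that, by uniqueness, we are tracking one deterministic trajectory of rationals rather than an interval of orbits, and build a funnel of finitely many affine pieces, anchored at the fixed point $1$, that provably steers this particular trajectory onto $1$ in finitely many steps. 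Verifying that \emph{every} no-solution orbit lands exactly on $1$, while \emph{no} solution orbit does (it stays below $1$), is where the real work of the proof concentrates.

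Finally, I would check that the resulting $f$ is a genuine \PAM: it is built from finitely many affine pieces over a partition of $\Reals$, each with nonzero slope, with $1$ a fixed point, and all of its parameters are computable from $\lambda$ and $t$. Combining the two directions then yields the statement: $\P$ has a solution iff the orbit of $t$ stays confined below $1$, i.e.\ iff $1$ is not reachable from $t$ via $f$.
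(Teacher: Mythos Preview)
Your plan coincides with the paper's approach: encode the greedy exploration as the iteration of a \PAM on gaps, with pieces $x\mapsto \beta x$ on $[0,\lambda)$ and $x\mapsto \beta x-1$ on $[\lambda,V]$ where $V=\frac{\lambda}{1-\lambda}$, make $1$ a fixed point, and arrange that escaping orbits (and only those) eventually hit $1$. Your analysis of when and where the orbit escapes $[0,V]$ is correct.

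The genuine gap is exactly where you say ``the real work of the proof concentrates'': you do not supply the funnel, and the phrase ``build a funnel of finitely many affine pieces\ldots that provably steers this particular trajectory onto $1$'' is not yet a construction. Note in particular that you cannot tailor the funnel to a single escape value, since you do not know at which step the orbit escapes; the funnel must handle \emph{every} possible first-escape value uniformly. The paper resolves this by exploiting the arithmetic structure of the gaps rather than the dynamics: writing $t=\frac{a}{b}$ and $\lambda=\frac{p}{q}$, every gap after $i$ steps has the form $\frac{K}{b\,p^i}$, so the first-escape value in $(V,1)$ has denominator $b\,p^i$. The paper then sets
\[
f(x)=b\,q\,x \quad \text{on } (V,1), \qquad f(x)=p\,x \quad \text{on } (1,2), \qquad f(x)=x-1 \quad \text{on } [2,\infty), \qquad f(1)=1.
\]
Multiplying by $bq$ sends the escape value to some $\frac{c}{p^i}>1$; thereafter, repeatedly subtracting $1$ brings the value into $(1,2)$ and multiplying by $p$ reduces the exponent of $p$ in the denominator by one. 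An induction on that exponent shows the orbit reaches an integer, and then finitely many subtractions of $1$ land on $1$. This is the missing idea: the funnel is a denominator-clearing Euclidean-style procedure keyed to $b$, $p$, $q$, not a topological contraction toward the fixed point.

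Once you plug in this construction, the rest of your outline (confined orbits never exceed $V<1$, negatives are irrelevant, all pieces have nonzero slope, parameters are computable from $\lambda$ and $t$) goes through exactly as in the paper.
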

\begin{proof}
Let $t = \frac{a}{b}$ and  $\lambda = \frac{p}{q}$.
We may assume that $t \leq \frac{\lambda}{1 - \lambda}$, as otherwise $\P$ has no solution.
We define the following one-dimensional \PAM instance:

\begin{eqnarray*}
f(x) &=& \left\{\begin{array}{ll}
\frac{x}{\lambda} &~ x < \lambda \\[1ex]
\frac{x - \lambda}{\lambda} &~ \lambda \leq x \leq \frac{\lambda}{1-\lambda} \\[1ex]
{b \cdot q \cdot x} &~ \frac{\lambda}{1-\lambda} < x < 1\\[1ex]
{x} &~ x = 1\\[1ex]
{p \cdot x} &~ 1 < x < 2 \\[1ex]
{x-1} &~ x  \geq 2 
\end{array}\right.
\end{eqnarray*}

Consider the sequence of gaps in exploring the representation of $t$ in base $\frac{1}{\lambda}$, using only \Dz's and \Do's. 
Note that iterations of $f$ exactly match that sequence if it is infinite and extend that sequence of gaps if the exploration is finite. 
\begin{itemize}
\item The initial gap $t$ is the starting point in iterating $f$. 
\item If the current gap is smaller than $0.10^\omega_{[\frac{1}{\lambda}]}$, which equals $\lambda$, it is multiplied by $\frac{1}{\lambda}$ for getting the new gap (as a \Dz is chosen for the current position in the representation).
\item If the current gap is equal to or bigger than $0.10^\omega_{[\frac{1}{\lambda}]}$ and smaller than or equal to $0.1^\omega_{[\frac{1}{\lambda}]}$, which equals $\frac{\lambda}{1-\lambda}$, it is multiplied by $\frac{1}{\lambda}$ and
$\Do$ is subtracted to get the new gap  
$\frac{x - \lambda}{\lambda}$ (as \Do is at the current position in the representation).
\item If the current gap is bigger than $0.1^\omega_{[\frac{1}{\lambda}]}$, which equals $\frac{\lambda}{1-\lambda}$, then the exploration stops, as there is no relevant representation. 
Consider the first $i$ such that    $f^i(t) > \frac{\lambda}{1-\lambda}$. 
Observe that $f^i(t)$ belongs to the interval $(\frac{\lambda}{1-\lambda}, 1)$. Also, $f^i(t)= f^i(\frac{a}{b})$ and it is of the form $\frac{K}{b \cdot p^i}$, where $K \in \Nat$. 
Then, $f^{i+1}(t) = \frac{q \cdot K }{p^{i}} >  \frac{b \cdot q \cdot \lambda}{1-\lambda} = \frac{b \cdot q \cdot p}{q-p}  > 1$. 
\item We can show, by induction on $k$, that for every $t' = \frac{c}{p^k} \geq 1$, some iteration of $f$ on $t'$ reaches $1$. 
Indeed, if $t'$ is a natural number, then
$f^{t'-1}(t') = t' - (t'-1) = 1$. Otherwise, for $n = \floor{t'} - 1$, we have
$f^{n}(t') = t' - n$ belongs to $(1,2)$ and 
$f^{n+1}(t')$ is of the form $\frac{c'}{p^{k-1}} \geq 1$.    
\end{itemize}

Thus, $1$ is reachable from $t$ by iterating $f$ iff at some point in exploring $t$ in base $\frac{1}{\lambda}$ we have to use a digit different from $\Dz$ or $\Do$.
\end{proof}

\begin{restatable}{theorem}{TdsPam}
\label{thm:TdsPam}
If \TDS is undecidable then so is the reachability problem of one-dimensional piecewise affine maps.
\end{restatable}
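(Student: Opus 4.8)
The plan is to obtain the statement as a direct consequence of Lemma~\ref{lem:TdsPam}, read in the contrapositive. Since decidability of a problem is equivalent to decidability of its complement, it suffices to show that if the reachability problem of one-dimensional \PAM{}s is decidable, then \TDS is decidable.

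First I would reduce an arbitrary \TDS instance to a \TDSzo instance using Theorem~\ref{thm:TdsToTdszo}; this is a computable (indeed polynomial) transformation, so a decision procedure for \TDSzo yields one for \TDS. Next I would dispose of the easy range of discount factors: for $\lambda \geq \frac{1}{2}$ the problem is already decidable by Theorem~\ref{thm:BiggerThanHalf}, so the only case that needs the \PAM connection is $\lambda < \frac{1}{2}$, which is precisely the regime covered by Lemma~\ref{lem:TdsPam}.

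For $\lambda < \frac{1}{2}$, I would apply Lemma~\ref{lem:TdsPam} to compute, from the \TDSzo instance with target value $t$, a one-dimensional \PAM $f$ for which the instance has a solution iff $1$ is \emph{not} reachable from $t$ under iteration of $f$. A decision procedure for \PAM reachability then decides, for the points $t$ and $1$, whether $1$ is reachable from $t$; negating its answer decides the \TDSzo instance, and hence the original \TDS instance. Combining this with the already-decidable case $\lambda \geq \frac{1}{2}$ gives a full decision procedure for \TDS under the assumption that \PAM reachability is decidable, which is the contrapositive of the claim.

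The only point to verify carefully is the direction of the equivalence, namely that \TDS reduces to the \emph{complement} of reachability rather than to reachability itself; this is harmless because the class of decidable problems is closed under complementation, so it does not affect the conclusion. Beyond this bookkeeping there is essentially no further obstacle: all the technical content—the explicit construction of $f$ and the correspondence between the gap sequence of the greedy exploration and the iterates of $f$—is already established inside Lemma~\ref{lem:TdsPam}, so the present theorem is merely its packaging together with Theorems~\ref{thm:TdsToTdszo} and~\ref{thm:BiggerThanHalf}.
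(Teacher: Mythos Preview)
Your proposal is correct and follows the same approach as the paper, which simply states that the theorem ``directly follows from Lemma~\ref{lem:TdsPam}.'' You are in fact more careful than the paper in making explicit the auxiliary reductions (Theorem~\ref{thm:TdsToTdszo} to pass from \TDS to \TDSzo, and Theorem~\ref{thm:BiggerThanHalf} to dispose of the case $\lambda \geq \frac{1}{2}$) that are needed because Lemma~\ref{lem:TdsPam} only covers \TDSzo with $\lambda < \frac{1}{2}$.
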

\begin{proof}
	Directly follows from Lemma~\ref{lem:TdsPam}. 
\end{proof}

\subsection{Cantor Sets}\label{sec:CS}
We shall show that the membership problem of a middle-$k$th Cantor set can be reduced to \TDS.

The \emph{Cantor set} contains the numbers between $0$ and $1$ that are not removed by iteratively removing the middle third: at the first step, the numbers in $(\frac{1}{3},\frac{2}{3})$ are removed; then, the middle third of both the upper and lower parts are removed, and so on.

With base $3$, a number between $0$ and $1$ has a representation with only \Dz's and \Dt's if and only if it is in the Cantor set.
It is easy to check if a rational number is in the Cantor set, since if so, it must have an eventually-periodic representation.

A variation of the Cantor set, where at each step only the $\frac{1}{5}$ upper and lower parts remain, is very analogous. 
In general, for every integer $k>2$, the set of numbers between $0$ and $1$ that are not removed by iteratively removing the middle $k$th is termed the \emph{middle-$k$th Cantor set} \cite{Din01,Eid05,GR95}.

However, removing, for example, the middle $\frac{1}{5}$, i.e., considering the middle-fifth Cantor set, is something very different from removing the middle third -- at each step the remained parts should be multiplied by $\frac{5}{2}$, which makes it analogous to a representation in base $\frac{5}{2}$. 

We show below that a middle-$k$th Cantor set corresponds to the set of numbers that have a representation in base $\frac{2k}{k-1}$ with only \Dz's and \Do's.

\begin{restatable}{lemmaStatement}{Cantor}
	\label{lem:TdsCs}
Consider an integer $k>2$ and a number $t\in [0,1]$. Then $t$ belongs to the middle-$k$th Cantor set iff \TDSzo with a discount factor $\lambda=\frac{k-1}{2k}$ and a target value $t \frac{k-1}{k+1}$  has a solution.
\end{restatable}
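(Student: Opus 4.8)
The plan is to exploit the self-similar structure of the middle-$k$th Cantor set and read off from it a base-$\frac1\lambda$ expansion with digits $\Dz$ and $\Do$. First I would record the geometry: removing the open middle $k$th of $[0,1]$ leaves the two closed intervals $[0,\lambda]$ and $[1-\lambda,1]$, where $\lambda=\frac{k-1}{2k}$ and $1-\lambda=\frac{k+1}{2k}$. Crucially, each surviving interval has length exactly $\lambda$, and the two intervals are disjoint since $\lambda<\frac12$ (which holds for all $k$). Thus a point $t\in[0,1]$ survives the first removal iff it lies in the left interval (record digit $\Dz$) or the right interval (record digit $\Do$), and the affine map sending the chosen interval back onto $[0,1]$ sends $t$ to $\frac t\lambda$ in the first case and to $\frac{t-(1-\lambda)}\lambda$ in the second. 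Iterating this gives, for every survivor, a digit sequence $d_1 d_2\cdots\in\{\Dz,\Do\}^\omega$.

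Second, I would unfold the recursion $t=d_1(1-\lambda)+\lambda t_1$, $t_1=d_2(1-\lambda)+\lambda t_2,\ \dots$, obtaining the closed form $t=(1-\lambda)\sum_{i=1}^\infty d_i\lambda^{i-1}$. The correspondence must be checked in both directions: a survivor yields such a sequence by the construction above, and conversely, for an arbitrary sequence $d\in\{\Dz,\Do\}^\omega$ the value $t=(1-\lambda)\sum_{i\ge1}d_i\lambda^{i-1}$ lies in $[0,\lambda]$ when $d_1=\Dz$ and in $[1-\lambda,1]$ when $d_1=\Do$, because the tail sum ranges over $[0,\frac1{1-\lambda}]$. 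Hence, by induction, the point lands in a surviving interval at every stage and is therefore a survivor. I would also note that the boundary points ($t=\lambda$ giving $\Dz\Do^\omega$, $t=1-\lambda$ giving $\Do\Dz^\omega$, and so on) cause no trouble precisely because the two surviving intervals are disjoint, so the digit at each stage is unambiguous.

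Finally, I would convert the closed form into the \TDSzo normal form. Dividing by $1-\lambda$ gives $\sum_{i=1}^\infty d_i\lambda^{i-1}=\frac{t}{1-\lambda}$, and multiplying by $\lambda$ shifts the summation to begin at $i=1$: setting $w(i)=d_i$, we obtain $\sum_{i=1}^\infty w(i)\lambda^i=\frac{\lambda t}{1-\lambda}=t\cdot\frac{(k-1)/(2k)}{(k+1)/(2k)}=t\,\frac{k-1}{k+1}$. Thus $t$ is in the middle-$k$th Cantor set iff $t\frac{k-1}{k+1}$ has a representation in base $\frac1\lambda$ using only $\Dz$'s and $\Do$'s, which by the reformulation of \TDSzo as a $\beta$-expansion question (Section~\ref{sec:TdsAsBetaExp}) is exactly the existence of a solution to \TDSzo with discount factor $\lambda=\frac{k-1}{2k}$ and target $t\frac{k-1}{k+1}$, as claimed.

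I expect the main obstacle to be the \emph{backward} direction of the correspondence, namely verifying that every $\{\Dz,\Do\}$-sequence genuinely encodes a surviving point rather than merely producing some number in $[0,1]$, together with the bookkeeping that turns the natural summation $\sum_{i\ge1}d_i\lambda^{i-1}$ (whose leading term carries $\lambda^0$) into the \TDSzo convention beginning at $\lambda^1$. Getting the scaling constant to land exactly on $t\frac{k-1}{k+1}$ is the one place where an off-by-one in the index shift would silently corrupt the target, so I would carry the index through carefully.
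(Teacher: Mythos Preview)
Your proposal is correct. The approach differs in presentation from the paper's but rests on the same underlying correspondence. The paper works ``statically'': it builds the set $S$ of numbers in $[0,\tfrac{k-1}{k+1}]$ admitting a $\tfrac{2k}{k-1}$-representation with digits $\Dz,\Do$, computes the level sets $S_i$ explicitly, and verifies that each removal step excises exactly the middle $k$th of each surviving interval; the final scaling from $[0,\tfrac{k-1}{k+1}]$ to $[0,1]$ then produces the factor $\tfrac{k-1}{k+1}$. You instead work ``dynamically'': you track a single point $t\in[0,1]$ through the iterated-function-system description of the Cantor set, read off a digit at each stage, and obtain the closed form $t=(1-\lambda)\sum_{i\ge1}d_i\lambda^{i-1}$ directly, after which the rescaling to the \TDSzo target is a one-line computation. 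Your route is arguably more transparent, since the recursion $t=d_1(1-\lambda)+\lambda t_1$ makes the link between the IFS contraction ratio $\lambda$ and the discount factor immediate, and the backward direction (every $\{\Dz,\Do\}$-sequence encodes a survivor) falls out of the same recursion without a separate interval-size calculation. The paper's level-set computation, on the other hand, makes the ``middle $k$th'' geometry explicit at every stage, which may be more convincing to a reader who wants to see the fractal picture directly. Both arguments use the same arithmetic identities $\lambda=\tfrac{k-1}{2k}$, $1-\lambda=\tfrac{k+1}{2k}$, $\tfrac{\lambda}{1-\lambda}=\tfrac{k-1}{k+1}$, and both rely on $\lambda<\tfrac12$ to ensure the two surviving intervals are disjoint.
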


\begin{proof}
Consider the set $S$ of numbers that have a $\frac{2k}{k-1}$-representation with only \Dz's and \Do's. By the uniqueness of the representation (Lemma~\ref{lem:UniqueZO}), the set $S$ can be achieved by the limit of the following iterative procedure: 
\begin{enumerate}
\item We start with the set $S_1$ of all numbers between $0.0^\omega$ and $0.1^\omega_{[\frac{2k}{k-1}]}$, which is $[0,\frac{k-1}{k+1}]$.
\item We generate the set $S_2$ by removing from $S_1$ all the numbers that cannot be represented with only \Dz's and \Do's according to the first digit, namely the numbers that are smaller than $0.10^\omega_{[\frac{2k}{k-1}]}$, which equals $\frac{k-1}{2k}$, and bigger than $0.01^\omega_{[\frac{2k}{k-1}]}$, which equals $\frac{k-1}{2k} \cdot \frac{k-1}{k+1}$.
Note that $S_2$ has two separate segments -- $S_2^0 = [0.00^\omega_{[\frac{2k}{k-1}]} , 0.01^\omega_{[\frac{2k}{k-1}]}]$ and $S_2^1 = [0.10^\omega_{[\frac{2k}{k-1}]} , 0.11^\omega_{[\frac{2k}{k-1}]}]$.
\item We generate the set $S_3$ by removing from $S_2$ all the numbers that cannot be represented with only \Dz's and \Do's according to the second digit. That is, we remove from $S_2^0$ the numbers that are smaller than $0.010^\omega_{[\frac{2k}{k-1}]}$ and bigger than $0.001^\omega_{[\frac{2k}{k-1}]}$, and from $S_2^1$ the numbers that are smaller than $0.110^\omega_{[\frac{2k}{k-1}]}$ and bigger than $0.101^\omega_{[\frac{2k}{k-1}]}$.
\item [i)] In the i-th iteration, we generate $S_i$ by removing from $S_{i-1}$ 
all the numbers that cannot be represented with only \Dz's and 
\Do's at the i-th position. In consequence, $S_i$ consists of numbers that have a representation which
up to $i$th position consists of \Dz's and \Do's. 
\end{enumerate}

Then, $S = \bigcap_{i=1}^\infty S_i$.

We claim that every removed segment is exactly the middle $k$th of the segment from which it is removed. We show it for the case of generating $S_2$ from $S_1$, while all other cases are analogous, as they are generated in the exact same way, with just a shift to the right of the representation. Recall that the set $S_1$ is the segment $[0,\frac{k-1}{k+1}]$, and $S_2$ is generated from it by removing the segment $(\frac{k-1}{2k} \cdot \frac{k-1}{k+1},\frac{k-1}{2k})$. 

We first show that the size of the removed segment is $\frac{1}{k}$ of the size of $S_1$. Indeed:
$$
\frac{ \frac{k-1}{2k} -  \frac{k-1}{2k} \cdot \frac{k-1}{k+1}}{ \frac{k-1}{k+1} } = 
\frac{ \frac{k-1}{2k} \cdot \frac{2}{k+1}} { \frac{k-1}{k+1} } = 
\frac{ 2(k-1)} { 2k(k-1) } = \frac{1}{k}
$$

Next, we show that the removed segment is in the middle, meaning that the size of the lower segment of $S_2$ is $\frac{(k-1)/2}{k}$ of the size of $S_1$. Indeed:
$\frac{ \frac{k-1}{2k} \cdot \frac{k-1}{k+1}}{ \frac{k-1}{k+1} } = \frac{k-1}{2k}$.

Now, we showed that the numbers that have a $\frac{2k}{k-1}$-representation with only \Dz's and \Do's are exactly the numbers in the middle-$k$th Cantor set of the segment $[0, \frac{k-1}{k+1}]$. Hence, by the multiplicative nature of the middle-$k$th removal procedure, a number $t$ is in the middle-$k$th Cantor set of the segment $[0,1]$ iff $t \frac{k-1}{k+1}$ has a $\frac{2k}{k-1}$-representation with only \Dz's and \Do's.
\end{proof}



\begin{restatable}{theorem}{TdsCs}
\label{thm:TdsCs}
If \TDS is decidable then so is the membership problem in the middle-$k$th Cantor set.
\end{restatable}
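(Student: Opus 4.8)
The plan is to obtain the theorem as a direct consequence of Lemma~\ref{lem:TdsCs}, combined with the observation that \TDSzo is decidable whenever \TDS is. So I would chain two reductions: first the geometric reduction from Cantor membership to \TDSzo already supplied by the lemma, and then the trivial passage from \TDSzo to \TDS.

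First I would fix an integer $k>2$ and a rational query point $t\in[0,1]$ (rationality is what makes $t$ a legitimate input to a decision problem). By Lemma~\ref{lem:TdsCs}, $t$ lies in the middle-$k$th Cantor set if and only if the \TDSzo instance with discount factor $\lambda=\frac{k-1}{2k}$ and target value $t\cdot\frac{k-1}{k+1}$ has a solution. Both parameters are rational and computable from $k$ and $t$, so this already gives an effective reduction of the membership problem to \TDSzo.

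Next I would argue that \TDSzo is decidable under the hypothesis that \TDS is decidable. This is essentially a rescaling: a \TDSzo instance with discount $\lambda$ and target $t'$ asks for $w\in\{0,1\}^\omega$ with $\sum_{i=1}^\infty w(i)\lambda^i = t'$, and setting $w'(i)=w(i+1)$ turns this into $\sum_{i=0}^\infty w'(i)\lambda^i = t'/\lambda$, which is exactly a \TDS instance with weights $0$ and $1$ and target $t'/\lambda$. Here the convention $w(0)=0$ of \TDSzo is harmless, since the free parameters $w(1),w(2),\ldots$ simply become $w'(0),w'(1),\ldots$, ranging over all of $\{0,1\}^\omega$. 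Thus \TDSzo is a special case of \TDS up to dividing the target by $\lambda$, and any decision procedure for \TDS answers \TDSzo as well.

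Combining the two reductions yields the result: given $k$ and $t$, compute the \TDSzo target $t\cdot\frac{k-1}{k+1}$ and discount $\frac{k-1}{2k}$, rescale to the \TDS instance with weights $0,1$ and target $t\cdot\frac{2k}{k+1}$, and invoke the assumed \TDS decision procedure. I do not expect a genuine obstacle, since the substantive work — verifying that the middle-$k$th removal coincides exactly with admitting a $\frac{2k}{k-1}$-representation using only \Dz's and \Do's — is already carried out in Lemma~\ref{lem:TdsCs}. The only care needed is to confirm that the instances produced along the way have rational, effectively computable parameters, which they do.
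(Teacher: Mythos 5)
Your proof is correct and takes essentially the same route as the paper, which obtains Theorem~\ref{thm:TdsCs} as a direct consequence of Lemma~\ref{lem:TdsCs}. The only difference is that you spell out the \TDSzo-to-\TDS rescaling (the index shift dividing the target by $\lambda$, yielding target $t\cdot\frac{2k}{k+1}$), a piece of bookkeeping the paper leaves implicit.
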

\begin{proof}
	Directly follows from Lemma~\ref{lem:TdsCs}. 
\end{proof}

\begin{remark}
One may wonder why a representation in base $\frac{5}{2}$, for example, that only uses the \Dz and \Do digits is not similar to the standard Cantor set, with the only difference of removing the upper third rather than the middle third. This follows the intuition that, at the $n$th step, we remove the numbers whose $\frac{5}{2}$-representation has a \Dt in the $n$th position. The problem is that it will also remove numbers that do have a representation with only \Dz's and \Do's, as with a nonintegral base, the representation need not be unique.
\end{remark}

\section{Conclusions}
The target discounted-sum problem, which is identified and
analyzed for the first time in this paper, turns out to be
related to several open problems in mathematics and computer science, among
them are problems of $\beta$-expansions, discounted-sum automata
and games, piecewise affine maps, and generalizations of the
Cantor set.  
We established a partial solution to the target discounted-sum problem,
resolving its restrictions to finite and eventually-periodic sequences,
as well as to various specific discount factors, among which are the cases that $\lambda = \frac{1}{n}$,
for every $n \in \Nat$. We generalized our solutions to an extended
version of the target discounted-sum problem, in which there may
be arbitrarily many weights and an $\omega$-regular constraint on
the allowed sequences. Using these generalized solutions, we
solved some open problems on functional discounted-sum automata.

\noindent{\em Acknowledgements.}
This research was supported in part by the European Research Council (ERC) under grant 267989 (QUAREM) and 
by the Austrian Science Fund (FWF) under grants S11402-N23 (RiSE) and Z211-N23 (Wittgenstein Award). 

\bibliography{bib}

\end{document}